\newtheorem{theorem}{Theorem}
\newtheorem{algorithm}[theorem]{Algorithm}
\newtheorem{corollary}[theorem]{Corollary}
\newtheorem{lemma}[theorem]{Lemma}
\newtheorem{proposition}[theorem]{Proposition}
\newenvironment{proof}{ \textbf{Proof:} }{ \hfill $\Box$}
\newcommand{\figref}[1]{{Fig.}~\ref{#1}}
\def\bb0{{\mathbb{0}}}
\def\ba{{\mathbf{a}}}
\def\bb{{\mathbf{b}}}
\def\bff{{\mathbf{f}}}
\def\bm{{\mathbf{m}}}
\def\bn{{\mathbf{n}}}
\def\bs{{\mathbf{s}}}
\def\bx{{\mathbf{x}}}
\def\by{{\mathbf{y}}}
\def\b0{{\mathbf{0}}}
\def\bA{{\mathbf{A}}}
\def\bB{{\mathbf{B}}}
\def\bC{{\mathbf{C}}}
\def\bE{{\mathbf{E}}}
\def\bF{{\mathbf{F}}}
\def\bG{{\mathbf{G}}}
\def\bH{{\mathbf{H}}}
\def\bI{{\mathbf{I}}}
\def\bM{{\mathbf{M}}}
\def\bP{{\mathbf{P}}}
\def\bR{{\mathbf{R}}}
\def\bT{{\mathbf{T}}}
\def\bU{{\mathbf{U}}}
\def\bV{{\mathbf{V}}}
\def\bW{{\mathbf{W}}}
\def\bX{{\mathbf{X}}}
\def\bY{{\mathbf{Y}}}
\def\bbE{{\mathbb{E}}}
\def\cA{\mathcal{A}}
\def\cD{\mathcal{D}}
\def\cF{\mathcal{F}}
\def\cG{\mathcal{G}}
\def\cH{\mathcal{H}}
\def\cI{\mathcal{I}}
\def\cN{\mathcal{N}}
\def\cR{\mathcal{R}}
\def\cU{\mathcal{U}}
\def\cV{\mathcal{V}}
\def\sf0{{\mathsf{0}}}
\def\sinc{\mathrm{sinc}}
\newcommand{\pinv}[1]{\ensuremath{#1^{\dagger}}} 	
\newcommand{\sref}[1]{{Section}~\ref{#1}}
\def\j{\mathrm{j}}
\begin{document}
\title{ Frequency Selective Hybrid Precoding for Limited Feedback Millimeter Wave Systems}
\author{Ahmed Alkhateeb and Robert W. Heath, Jr. \\ (Invited Paper)\thanks{Ahmed Alkhateeb and Robert W. Heath Jr. are with The University of Texas at Austin (Email: aalkhateeb, rheath@utexas.edu).} \thanks{This work is supported in part by the National Science Foundation under Grant No. 1319556, and by a gift from Nokia.}}
\maketitle

\begin{abstract}
Hybrid analog/digital precoding offers a compromise between hardware complexity and system performance in millimeter wave (mmWave) systems. This type of precoding allows mmWave systems to leverage large antenna array gains that are necessary for sufficient link margin, while permitting low cost and power consumption hardware. Most prior work has focused on hybrid precoding for narrow-band mmWave systems, with perfect or estimated channel knowledge at the transmitter. MmWave systems, however, will likely operate on wideband channels with frequency selectivity. Therefore, this paper considers wideband mmWave systems with a limited feedback channel between the transmitter and receiver. First, the optimal hybrid precoding design for a given RF codebook is derived. This provides a benchmark for any other heuristic algorithm and gives useful insights into codebook designs. Second, efficient hybrid analog/digital codebooks are developed for spatial multiplexing in wideband mmWave systems. Finally, a low-complexity yet near-optimal greedy frequency selective hybrid precoding algorithm is proposed based on Gram-Schmidt orthogonalization. Simulation results show that the developed hybrid codebooks and precoder designs achieve very good performance compared with the unconstrained solutions while requiring much less complexity.
\end{abstract}

\section{Introduction} \label{sec:Intro}

Millimeter wave (mmWave) communication can leverage the large bandwidths potentially available at millimeter wave carrier frequencies to provide high data rates \cite{Rappaport2014}. This makes mmWave a promising carrier frequency for 5G cellular systems \cite{Pi2011,Boccardi2014,Bai2014,Andrews2014,Rangan2014,Wang2015}. Recent channel measurements have confirmed the feasibility of using mmWave not only for backhaul \cite{Wang2015,MacCartney2014,Hur2013}, but also for the access link \cite{Rappaport2013a}. Further, system level evaluation of mmWave network performance indicate that mmWave cellular systems can achieve a similar spectral efficiency to that obtained at lower-frequency while providing orders of magnitudes more data rate thanks to the larger bandwidth \cite{Bai2015,Akdeniz2014,Singh2015,Zhu2014}. Though mmWave cellular is recently of interest for 5G, it was proposed as early as thirty years ago \cite{Walke1985}. MmWave wireless communication has been considered for many other applications beyond cellular systems including wireless local area networks \cite{11ad}, personal area networks \cite{WirelessHDSrandard2007}, wearable device communications \cite{Pyattaev2015,Venugopal2015}, joint vehicular communication and radar systems \cite{Wehling2005,Heddebaut2010,Kumari2015}, and simultaneous energy/data transfer \cite{Khan2015,Krikidis2014,Bi2015}.

To guarantee  sufficient received signal power  at  mmWave frequencies, large antenna arrays are beneficial at both the transmitter and receiver \cite{Pi2011,Rappaport2014,Rappaport2013a,ElAyach2014,Han2015,Roh2014}. Fortunately, large antenna arrays can be packed into small form factors due to the small mmWave antenna size \cite{Biglarbegian2011,Sun2013}. Exploiting large arrays using multiple input multiple output (MIMO) signal processing techniques like precoding and combining, however, is different at mmWave compared with sub-6 GHz solutions. This is mainly due to the different hardware constraints on the mixed signal components because of their high cost and power consumption \cite{Alkhateeb2014d}. Further, the best precoders are designed based on instantaneous channel state information, which is difficult to acquire at the transmitter in large mmWave systems \cite{Alkhateeb2014d} due to the high channel dimensionality. Therefore, developing precoding algorithms and codebooks for limited feedback wideband mmWave systems is important for building these systems.

 \subsection{Prior Work}
 Precoding and combining is a classic topic in MIMO communications. The use of channel-dependent precoding at the transmitter is a result of the derivation of the MIMO channel capacity \cite{Teletar1999}. Initial work was focused on deriving optimum precoders and combiners under different criteria \cite{Yang1994,Scaglione2002,Palomar2003}. As the importance of channel state information was realized, effort shifted to the development of limited feedback precoding techniques, where the precoder is selected from a codebook of possible precoders known to both the transmitter and receiver \cite{Love2008,Love2005}. Limited feedback precoding is now an important part of commercial wireless communication systems including LTE \cite{Lee2009}, IEEE 802.11n \cite{IEEE11n2012}, and IEEE 802.11ac among others \cite{Bejarano2013}, etc. 
  
For the sake of low power consumption in consumer-based wireless systems, beamforming at mmWave has been primarily realized in analog, using networks of phase shifters in the RF domain \cite{Wang2009,Xia2008b}. This reduces the number of required RF chains, and consequently the cost and power consumption. The analog-only beamforming solution is also supported in commercial indoor mmWave communication standards like IEEE 802.11ad \cite{11ad} and wirelessHD \cite{WirelessHDSrandard2007}. For MIMO-OFDM systems, analog-only post-IFFT and pre-FFT beamforming was proposed for different criteria such as capacity and SNR maximization \cite{Via2010}. Analog beamforming as in \cite{Wang2009,Xia2008b,11ad,Via2010}, though, is limited to single-stream transmission. Further, analog beamformers are subject to additional hardware constraints. For example, the phase shifters might be digitally controlled and have only quantized phase values and adaptive gain control might not be implemented. This limits the ability to make sophisticated processing in analog-only solutions.

Hybrid analog/digital precoding, which divides the precoding between analog and digital domains, was proposed to handle the trade-off between the low-complexity limited-performance analog-only solutions and the high-complexity good-performance fully digital precoding \cite{Zhang2005a,Venkateswaran2010,ElAyach2014,Alkhateeb2013,Alkhateeb2014,Sohrabi2015,Mendez-Rial2015a,Chen2015,Kim2013}. The main advantage of hybrid precoding over conventional precoding is that it can deal with having fewer RF chains than antennas. For general MIMO systems, hybrid precoding for diversity and  multiplexing gain were investigated in \cite{Zhang2005a}, and for interference management in \cite{Venkateswaran2010}.  These solutions, however, did not make use of the special mmWave channel characteristics in the design as they were not specifically developed for mmWave systems. In \cite{ElAyach2014}, the sparse nature of mmWave channels was exploited; low-complexity iterative algorithms based on orthogonal matching pursuit were devised, assuming perfect channel knowledge at the transmitter. Extensions to the case when only partial channel knowledge is required and when the channel and hybrid precoders are jointly designed were considered in \cite{Alkhateeb2013,Alkhateeb2014}. Algorithms that do not rely on orthogonal matching pursuit were proposed in \cite{Sohrabi2015,Mendez-Rial2015a,Chen2015} for the hybrid precoding design with perfect channel knowledge at the transmitter. The main objective of these algorithms was to achieve an achievable rate that approaches the rate achieved by fully-digital solutions. The work in \cite{ElAyach2014,Alkhateeb2013,Alkhateeb2014,Sohrabi2015,Mendez-Rial2015a}, though, assumed a narrow-band mmWave channel, with perfect or partial channel knowledge at the transmitter. In \cite{Kim2013}, hybrid beamforming with only a single-stream transmission over MIMO-OFDM system was considered. The solution in \cite{Kim2013} though relied on the joint exhaustive search over both RF and baseband codebooks without giving specific criteria for the design of these codebooks. As mmWave communication is expected to employ broadband channels, developing spatial multiplexing hybrid precoding algorithms for wideband mmWave systems is important. Further, acquiring the large mmWave MIMO channels at the transmitter is difficult, which highlights the need to devise limited feedback hybrid precoding solutions.

\subsection{Contribution}

In this paper, we develop hybrid precoding solutions  and codebooks for limited feedback wideband mmWave systems.  In our proposed system, the digital precoding is done in the frequency domain and can be different for each subcarrier, while the RF precoder is frequency flat. The contributions of this paper are summarized as follows.
\begin{itemize} 
	\item{First, we consider a frequency-selective hybrid precoding system with the RF precoders taken from a quantized codebook. For this system, we derive the optimal hybrid precoding design that maximizes the achievable mutual information  under total power and unitary power constraints. Even though an exhaustive search over the RF codebook will be still required, the derived solution provides insights into hybrid analog/digital codebooks and greedy hybrid precoding design problems. Further, this solution gives a benchmark for the other heuristic algorithms that can be useful for evaluating their performance.}	
	\item{Second, we consider a limited feedback frequency-selective hybrid precoding system where both the baseband and RF precoders are taken from quantized codebooks. For this system, we develop efficient hybrid analog and digital precoding codebooks that attempt to minimize a distortion function defined by the average mutual information loss due to the  quantized hybrid precoders when compared with the unconstrained digital solution.}
	\item{Finally, we design a greedy hybrid precoding algorithm based on Gram-Schmidt orthogonalization for limited feedback frequency selective mmWave systems. Despite its low-complexity, the proposed algorithm is illustrated to achieve a similar performance compared with the optimal hybrid precoding design that  requires an exhaustive search over the RF and baseband codebooks.}
\end{itemize}
The performance of the proposed codebooks and precoding algorithms is evaluated by numerical simulations in wideband mmWave setups, and compared with digital only precoding schemes in \sref{sec:Results}. 

We use the following notation throughout this paper: $\bA$ is a matrix, $\ba$ is a vector, $a$ is a scalar, and $\cA$ is a set. $|a|$ and $\measuredangle{a}$ are the magnitude and phase of the complex number $a$. $|\bA|$ is the determinant of $\bA$, $\|\bA \|_F$ is its Frobenius norm, whereas $\bA^T$, $\bA^*$, $\bA^{-1}$, $\pinv{\bA}$ are its transpose, Hermitian (conjugate transpose), inverse, and pseudo-inverse respectively. $[\bA]_{r,:}$ and $[\bA]_{:,c}$ are the $r$th row and $c$th column of the matrix $\bA$, respectively. $\mathrm{diag}(\ba)$ is a diagonal matrix with the entries of $\ba$ on its diagonal. $\bI$ is the identity matrix and $\mathbf{1}_{N}$ is the $N$-dimensional all-ones vector. $\bA \odot \bB$ denotes the Hadamard product of $\bA$ and $\bB$. \textbf{dom}$\left(f\right) $ is the domain of the function $f$. $\cN(\bm,\bR)$ is a complex Gaussian random vector with mean $\bm$ and covariance $\bR$. $\bbE\left[\cdot\right]$ is used to denote expectation.

\section{System and Channel Models} \label{sec:Model}
In this section, we describe the adopted frequency selective hybrid precoding system model and the wideband mmWave channel model. Key assumptions made for each model are also highlighted.

\subsection{System Model}
Consider the OFDM based system model in \figref{fig:Model} where a basestation (BS) with $N_\mathrm{BS}$ antennas and $N_\mathrm{RF}$ RF chains is assumed to communicate with a single mobile station (MS) with $N_\mathrm{MS}$ antennas and $N_\mathrm{RF}$ RF chains. The BS and MS communicate via $N_\mathrm{S}$ length-$K$ data symbol blocks, such that $N_\mathrm{S} \leq N_\mathrm{RF} \leq N_\mathrm{BS}$ and $N_\mathrm{S}\leq N_\mathrm{RF} \leq N_\mathrm{MS}$. In practice, the number of RF chains at the MS's is usually less than that of the BS's, but we do not exploit this fact in our model for simplicity of exposition. 
\begin{figure}[t]
	\centerline{
		\includegraphics[scale=.45]{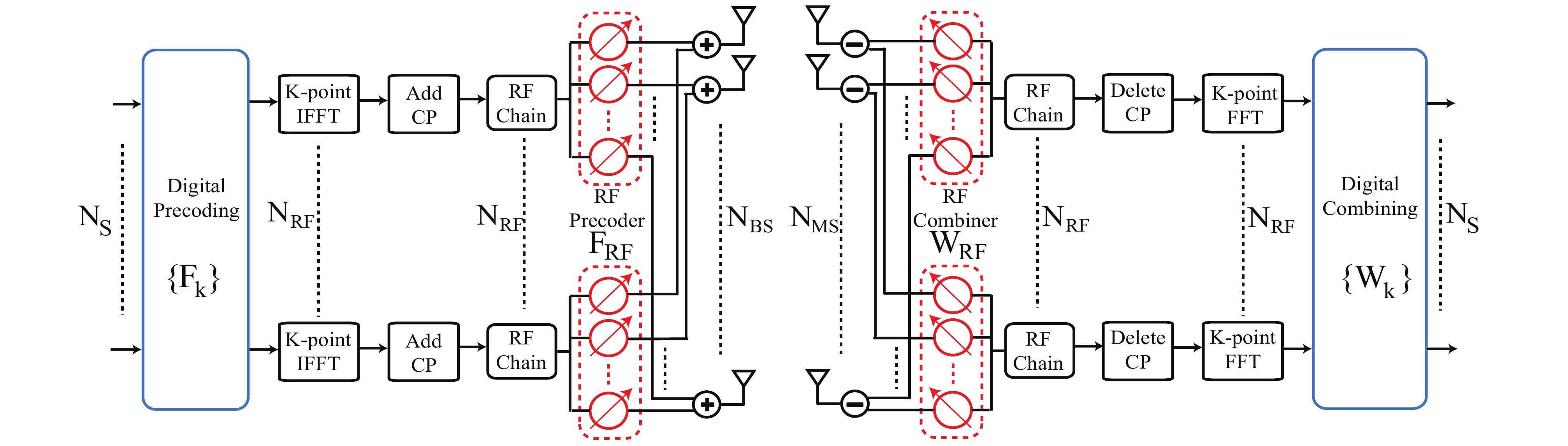}
	}
	\caption{A block diagram of the OFDM based BS-MS transceiver that employs hybrid analog/digital precoding.}
	\label{fig:Model}
\end{figure}

At the transmitter, the $N_\mathrm{S}$ data symbols $\bs_k$ at each subcarrier $k=1, ..., K$ are first precoded using an $N_\mathrm{RF} \times N_\mathrm{S}$ digital precoding matrix $\bF[k]$, and the symbol blocks are transformed to the time-domain using $N_\mathrm{RF}$ $K$-point IFFT's. Note that our model assumes that all subcarriers are used and, therefore, the data block length is equal to the number of subcarriers. A cyclic prefix of length $D$ is then added to the symbol blocks before applying the $N_\mathrm{BS} \times N_\mathrm{RF}$ RF precoding $\bF_\mathrm{RF}$. It is important to emphasize here that the RF precoding matrix $\bF_\mathrm{RF}$ is the same for \textit{all} subcarriers. This means that the RF precoder is assumed to be frequency flat while the baseband precoders can be different for each subcarrier. This is an important feature of the frequency selective hybrid precoding architecture in \figref{fig:Model} that differentiates it from the conventional OFDM-based unconstrained digital scheme where only frequency-selective digital precoders exist. The discrete-time transmitted complex baseband signal at subcarrier $k$ can therefore be written as
\begin{equation}
\bx[k]=\bF_\mathrm{RF} \bF[k] \bs[k],
\end{equation}
where $\bs[k]$ is the $N_\mathrm{S}\times 1$ transmitted vector at subcarrier $k$, such that  $\bbE\left[\bs[k]\bs^*[k]\right] = \frac{P}{K N_\mathrm{S}} \bI_{N_\mathrm{S}}$, and $P$ is the average total transmit power. Since $\bF_\mathrm{RF}$ is implemented using analog phase shifters, its entries are of constant modulus. To reflect that, we normalize the entries  $\left|\left[\bF_\mathrm{RF}\right]_{m,n}\right|^2=1$. Further, we assume that the angles of the analog phase shifters are quantized and have a finite set of possible values. With these assumptions, $\left[\bF_\mathrm{RF}\right]_{m,n}= e^{\j \phi_{m,n}}$, where $\phi_{m.n}$ is a quantized angle. The angle quantization assumption is discussed in more detail in \sref{sec:Codebook}.  Note that the RF beamforming can also be designed as a frequency selective filter \cite{Liang2007}, with additional hardware complexity. Two precoding power constraints are considered in this paper: (i) a total power constraint, where the hybrid precoders satisfy  $\sum_{k=1}^K \|\bF_\mathrm{RF} \bF[k]\|_F^2=K N_\mathrm{S}$, and (ii) a unitary power constraint, where the hybrid precoders meet $\bF_\mathrm{RF} \bF[k] \in \cU_{N_\mathrm{BS} \times N_\mathrm{S}}, k=1, 2, ..., K$, with the set of semi-unitary matrices $\cU_{N_\mathrm{BS} \times N_\mathrm{S}}=\left\{\bU \in \mathbb{C}^{N_\mathrm{BS} \times N_\mathrm{S}} | \bU^* \bU= \bI \right\}$. Note that while the total power constraint allows the transmit power to be distributed, possibly non-uniformly, among the subcarriers and the data streams on each subcarrier, the unitary power constraint enforces an equal power allocation among the subcarriers and the data streams on each subcarrier.  

At the MS, assuming perfect carrier and frequency offset synchronization, the received signal is first combined in the RF domain using the $N_\mathrm{MS} \times N_\mathrm{RF}$ combining matrix $\bW_\mathrm{RF}$. Then, the cyclic prefix is removed, and the symbols are returned back to the frequency domain with $N_\mathrm{RF}$ length-$K$ FFT's. The symbols at each subcarrier $k$ are then combined using the $N_\mathrm{RF} \times N_\mathrm{S}$ digital combining matrix $\bW[k]$. The constraints on the entries of RF combiner $\bW_\mathrm{RF}$ are similar to the RF precoders. Denoting the  $N_\mathrm{MS} \times N_\mathrm{BS}$ channel matrix at subcarrier $k$ as $\bH[k]$, the received signal at subcarrier $k$ after processing can be then expressed as
\begin{equation}
\by[k]=\bW^*[k] \bW_\mathrm{RF}^* \bH[k] \bF_\mathrm{RF} \bF[k] s[k]+ \bW^*[k] \bW_\mathrm{RF}^* \bn[k],
\label{eq:processed}
\end{equation}
where  $\bn[k] \sim \cN(\boldsymbol{0}, \sigma_\mathrm{N}^2 \bI)$ is the Gaussian noise vector corrupting the received signal.

\subsection{Channel Model}
To incorporate the wideband and limited scattering characteristics of mmWave channels \cite{Pi2011,Bai2014,11ad,Samimi2014,Rappaport2012,Rappaport2013a}, we adopt a geometric wideband mmWave channel model with $L$ clusters. Each cluster $\ell$ has a time delay $\tau_\ell \in \mathbb{R}$, and angles of arrival and departure (AoA/AoD), $\theta_\ell, \phi_\ell  \in [0, 2\pi]$. Each cluster $\ell$ is further assumed to contribute with $R_\ell$ rays/paths between the BS and MS \cite{Samimi2014,Forenza2007,Akdeniz2014}. Each ray $r_\ell=1, 2, ..., R_\ell,$ has a relative time delay $\tau_{r_\ell} $, relative AoA/AoD shift $\vartheta_{r_\ell}$, $\varphi_{r_\ell}$, and complex path gain $\alpha_{r_\ell}$. Further, let $\rho_\mathrm{PL}$ represent the path-loss between the BS and MS, and $p_\mathrm{rc}(\tau)$ denote a pulse-shaping function for $T_\mathrm{S}$-spaced signaling evaluated at $\tau$ seconds \cite{Schniter2014}. Under this model, the delay-$d$ MIMO channel matrix, $\boldsymbol{\mathit{H}}[d]$, can be written as \cite{Schniter2014}
\begin{equation}
\boldsymbol{\mathit{H}}[d]=\sqrt{\frac{N_\mathrm{BS}N_\mathrm{MS}}{\rho_\mathrm{PL}}}\sum_{\ell=1}^{L} \sum_{r_\ell=1}^{R_\ell} \alpha_{r_\ell} p_{\mathrm{rc}}\left(d T_\mathrm{S}-\tau_\ell -\tau_{r_\ell}\right) \ba_\mathrm{MS}\left(\theta_\ell - \mathcal{\vartheta}_{r_\ell}\right) \ba^*_\mathrm{BS} \left(\phi_\ell - \varphi_{r_\ell} \right),
\label{eq:d_Channel}
\end{equation}
where $\ba_\mathrm{BS}\left(\phi\right)$ and $\ba_\mathrm{MS}\left(\theta\right)$ are the antenna array response vectors of the BS and MS, respectively. Given the delay-$d$ channel model in \eqref{eq:d_Channel}, the channel at subcarrier $k$, $\bH[k]$, can be then expressed as \cite{Bajwa2010}
\begin{equation}
\bH[k]=\sum_{d=0}^{D-1} \boldsymbol{\mathit{H}}[d] e^{-j \frac{2 \pi k}{K} d}.
\label{eq:k_Channel}
\end{equation}

While most of the results developed in this paper are general for large MIMO channels, and not restricted to the channel model in \eqref{eq:d_Channel}, we described the wideband mmWave channel model in this section as it will be important for understanding the motivation behind the proposed construction of the hybrid analog/digital precoding codebooks in \sref{sec:Codebook}. Further, it will be adopted for the simulations in \sref{sec:Results} and for drawing conclusions about the performance of the proposed precoding schemes and codebooks in wideband mmWave channels. 

\section{Problem Statement} \label{sec:Form}

In this paper, we consider the downlink system model in \sref{sec:Model} when the BS and MS are connected via a limited feedback link. For this setup, we assume the MS has perfect channel knowledge with which it selects the best RF and baseband precoding matrices $\bF^\star_\mathrm{RF}$ and $\left\{\bF^\star \hspace{-2pt}\left[k\right]\right\}_{k=1}^K$ from predefined quantization codebooks to maximize the achievable mutual information when used by the BS. The main objective of this paper then is to develop efficient RF and baseband precoding codebooks for limited feedback wideband hybrid analog/digital precoding architectures. In this section, we first formulate the optimal hybrid precoding based mutual information when given RF and baseband precoding codebooks are used. Then, we briefly explain how the main objective of this paper will be investigated in the subsequent sections. 

As this paper focuses on the limited feedback hybrid precoding design, i.e., the design of $\bF_\mathrm{RF}, \left\{\bF[k]\right\}_{k=1}^K$, we will assume that the receiver can perform optimal nearest neighbor decoding based on the $N_\mathrm{MS}$-dimensional received signal with fully digital hardware. This allows decoupling the transceiver design problem, and focusing on the hybrid precoders design to maximize the mutual information of the system \cite{ElAyach2014}, defined as
\begin{equation}
\cI\left(\bF_\mathrm{RF}, \left\{\bF[k]\right\}_{k=1}^K\right)=\frac{1}{K} \sum_{k=1}^{K} \log_2 \left|\bI_{N_\mathrm{MS}}+\frac{\rho}{N_\mathrm{S}} \bH[k] \bF_\mathrm{RF} \bF[k]  \bF^*[k] \bF_\mathrm{RF}^*  \bH^*[k]  \right|,
\label{eq:MI}
\end{equation}
where $\rho=\frac{P}{K \sigma^2}$ is the SNR.  As combining with fully digital hardware, though, is not a practical mmWave solution, the hybrid combining design problem needs also to be considered. The design ideas that will be given in this paper for the hybrid precoders, however, provide direct tools for the construction of the hybrid combining matrices, $\bW_\mathrm{RF}$, $\left\{\bW[k]\right\}_{k=1}^K$, and is therefore omitted due to space limitations.

If the RF and baseband precoders are taken from quantized codebooks $\cF_\mathrm{RF}$ and $\cF_\mathrm{BB}$, respectively, then the maximum mutual information under the given hybrid precoding codebooks and the total power constraint is
\begin{equation}
\begin{aligned}
\cI^{\star}_\mathrm{HP} =  & \underset{\bF_\mathrm{RF}, \left\{\bF[k]\right\}_{ k=1}^K}  \max
& &\cI\left(\bF_\mathrm{RF}, \left\{\bF[k]\right\}_{k=1}^K\right) \\
& \hspace{20pt} \text{s.t.}
& &  \bF_\mathrm{RF } \in \cF_\mathrm{RF}, \\
&&& \bF[k] \in \cF_\mathrm{BB}, \ \ k=1, 2, ..., K, \\
&&& \sum_{k=1}^K \left\| \bF_\mathrm{RF} \bF[k] \right\|_F^2=K N_\mathrm{S}.
\label{eq:Opt_Feedback}
\end{aligned}
\end{equation}
The maximum mutual information with  hybrid precoding and under the unitary power constraint is similar but with the last constraint in \eqref{eq:Opt_Feedback} replaced with $\bF_\mathrm{RF} \bF[k] \in \cU_{N_\mathrm{BS} \times N_\mathrm{S}}$.

Our main objective in this work is to construct efficient hybrid precoding codebooks $\cF_\mathrm{RF}$ and $\cF_\mathrm{BB}$ to maximize the achievable mutual information in \eqref{eq:Opt_Feedback}. To get initial insights into the solution of this problem, we will first investigate a special case of the limited feedback hybrid precoding problem in \sref{sec:Optimal} when only the RF precoders are taken from quantized codebooks while no quantization constraints are imposed on the baseband precoders. For this problem, we will derive the optimal hybrid precoding design for any given RF codebook $\cF_\mathrm{RF}$. The results of \sref{sec:Optimal} will help us developing RF and baseband precoders codebook in \sref{sec:Codebook}. 

\section{Optimal Hybrid Precoding Design for a Given RF Codebook} \label{sec:Optimal}
In this section, we investigate the limited feedback hybrid precoding design when only the RF precoders are taken from quantized codebooks. This problem is of a special interest for two main reasons. First, it will provide useful insights into the construction of efficient hybrid analog/digital precoding codebooks as will be summarized at the end of this section. Second, the hybrid precoding design problem with only RF precoders quantization can also be interpreted as the hybrid precoding design problem with perfect channel knowledge. The reason is that even when perfect channel knowledge is available at the transmitter, the RF precoders will be taken from a certain codebook that captures the hardware constraints such as the phase shifters quantization. With this motivation, we consider the following relaxation of the optimization in \eqref{eq:Opt_CSI} that captures the assumption that only the RF precoders are quantized. 
\begin{equation}
\begin{aligned}
\cI^{\star}_\mathrm{HP} =  & \underset{\bF_\mathrm{RF}, \left\{\bF[k]\right\}_{ k=1}^K}  \max
& &\cI\left(\bF_\mathrm{RF}, \left\{\bF[k]\right\}_{k=1}^K\right) \\
& \hspace{20pt} \text{s.t.}
& &  \bF_\mathrm{RF } \in \cF_\mathrm{RF}, \\
&&& \sum_{k=1}^K \left\| \bF_\mathrm{RF} \bF[k] \right\|_F^2=K N_\mathrm{S}.
\label{eq:Opt_CSI}
\end{aligned}
\end{equation}

The design of the hybrid analog/digital precoders in \eqref{eq:Opt_CSI} is non-trivial due to (i) the RF hardware non-convex constraint $\bF_\mathrm{RF}\in \cF_\mathrm{RF}$, and (ii) the coupling between the analog and digital precoding matrices, which arises in the power constraint (the second constraint in \eqref{eq:Opt_CSI}). Due to these difficulties, prior work \cite{ElAyach2014,Kim2013,Alkhateeb2014} focused on developing heuristics designs for the hybrid analog/digital precoders in \eqref{eq:Opt_CSI}. Although these heuristic algorithms were shown to give good performance, they do not provide enough insights that help, for example, to design limited feedback hybrid precoding codebooks.

In this section, we will consider the coupling between the analog and digital precoders, and show that the optimal baseband precoders can be written as a function of the RF precoders under both the total and unitary power constraints. This will reduce the hybrid precoding design problem to an RF precoder design problem. 
\subsection{Total Power Constraint} \label{subsec:TP}
As the RF precoding matrix $\bF_\mathrm{RF}$ in \eqref{eq:Opt_CSI} is taken from a quantized codebook $\cF_\mathrm{RF}$, then the optimal mutual information in \eqref{eq:Opt_CSI} can also be equivalently written in the following outer-inner problems form
\begin{equation}
\cI^{\star}_\mathrm{HP} = \underset{\bF_\mathrm{RF} \in \cF_\mathrm{RF}}  \max \  
\left\{ \begin{alignedat}{2}
& \underset{ \left\{\bF[k]  \right\}_{ k=1}^K} \max \hspace{10pt} \cI\left(\bF_\mathrm{RF}, \left\{\bF[k]\right\}_{k=1}^K\right)\\
& \hspace{10pt} \text{s.t.} \hspace{15pt} \sum_{k=1}^K \left\| \bF_\mathrm{RF} \bF[k] \right\|_F^2=K N_\mathrm{S},
\end{alignedat}\right\}
\label{eq:Opt_CSI_IO}
\end{equation}
where the outer maximization is over the set of possible quantized RF precoding matrices, and the inner problem is over the set of feasible baseband precoders given the RF precoder, $\left\{\bF[k] \in \mathbb{C}^{N_\mathrm{RF} \times N_\mathrm{S}}|  \sum_{k=1}^K \left\| \bF_\mathrm{RF} \bF[k] \right\|_F^2=K N_\mathrm{S} \right\}$.

Note that the solution of the optimal baseband precoders in the inner problem of  \eqref{eq:Opt_CSI_IO} is not given by the simple SVD of the effective channel with the RF precoder, $\bH[k] \bF_\mathrm{RF}$, because of the different power constraint that represents the coupling between the baseband and RF precoders. In the following proposition, we find the optimal baseband precoders of the inner problem of \eqref{eq:Opt_CSI_IO}.
\begin{proposition}
Define the SVD decompositions of the $k$th subcarrier channel matrix $\bH[k]$ as $\bH[k]=\bU[k] \boldsymbol{\Sigma}[k] \bV^*[k]$, and the SVD decomposition of the matrix  $\boldsymbol{\Sigma}[k] \bV^*[k] \bF_\mathrm{RF} \left(\bF_\mathrm{RF}^* \bF_\mathrm{RF} \right)^{-\frac{1}{2}}$ as $\boldsymbol{\Sigma}[k] \bV^*[k] \bF_\mathrm{RF} \left(\bF_\mathrm{RF}^* \bF_\mathrm{RF} \right)^{-\frac{1}{2}}= \overline{\bU}[k] \overline{\boldsymbol{\Sigma}}[k] \overline{\bV}^*[k]$. Then, the baseband precoders $\left\{\bF[k]\right\}_{k=1}^K$ that solve the inner optimization problem of \eqref{eq:Opt_CSI_IO} are given by
\begin{equation}
\bF^\star[k]= \left(\bF_\mathrm{RF}^* \bF_\mathrm{RF} \right)^{-\frac{1}{2}} \left[\overline{\bV}[k]\right]_{:,1:N_\mathrm{S}} \boldsymbol{\Lambda}[k],\ \ k=1, 2, ..., K,
\label{eq:Opt_BB_TP}
\end{equation}
where $\left[\overline{\bV}[k]\right]_{:,1:N_\mathrm{S}}$ is the $N_\mathrm{RF} \times N_\mathrm{S}$ matrix that gathers the $N_\mathrm{S}$ dominant vectors of $\overline{\bV}[k]$, and $\boldsymbol{\Lambda}[k]$ is an $N_\mathrm{S} \times N_\mathrm{S}$ water-filling power allocation diagonal matrix with
\begin{equation}
\left[\boldsymbol{\Lambda}[k]\right]_{i,i}^2=\left(\mu-\frac{N_\mathrm{S}}{\rho \left[\overline{\boldsymbol{\Sigma}}[k]\right]^2}\right)^{+}, i=1, ..., N_\mathrm{S}, \ k=1, ..., K, 
\end{equation}
and with $\mu$ satisfying
\begin{equation}
\sum_{k=1}^K \sum_{i=1}^{N_\mathrm{S}} \left(\mu-\frac{N_\mathrm{S}}{\rho \left[\overline{\boldsymbol{\Sigma}}[k]\right]^2}\right)^{+} = K Ns
\end{equation}

%
\label{prop:Opt_TP}
\end{proposition}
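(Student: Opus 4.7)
The plan is to reduce the inner optimization in \eqref{eq:Opt_CSI_IO} to a standard MIMO-OFDM water-filling problem by a suitable change of variables that decouples the power constraint from $\bF_\mathrm{RF}$. Specifically, I would introduce
\begin{equation}
\bQ[k] \defeq \left(\bF_\mathrm{RF}^*\bF_\mathrm{RF}\right)^{\frac{1}{2}}\bF[k],\quad k=1,\ldots,K,
\end{equation}
which is well-defined and invertible as long as $\bF_\mathrm{RF}$ has full column rank (a mild assumption for any sensible RF codebook; otherwise one restricts to the range and uses a pseudo-inverse). Under this substitution, the power constraint becomes
\begin{equation}
\sum_{k=1}^{K}\left\|\bF_\mathrm{RF}\bF[k]\right\|_F^{2}=\sum_{k=1}^{K}\mathrm{tr}\!\left(\bQ^*[k]\bQ[k]\right)=\sum_{k=1}^{K}\|\bQ[k]\|_F^{2}=KN_\mathrm{S},
\end{equation}
so the coupling with $\bF_\mathrm{RF}$ disappears. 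Simultaneously, the effective channel seen by $\bQ[k]$ is
\begin{equation}
\widetilde{\bH}[k]\defeq\bH[k]\bF_\mathrm{RF}\left(\bF_\mathrm{RF}^*\bF_\mathrm{RF}\right)^{-\frac{1}{2}},
\end{equation}
so that $\bH[k]\bF_\mathrm{RF}\bF[k]=\widetilde{\bH}[k]\bQ[k]$.

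Next I would rewrite the mutual information in \eqref{eq:MI} as $\frac{1}{K}\sum_{k}\log_{2}|\bI+\frac{\rho}{N_\mathrm{S}}\widetilde{\bH}[k]\bQ[k]\bQ^*[k]\widetilde{\bH}^*[k]|$ and recognize this as the familiar parallel-MIMO capacity form over $K$ subcarriers with a joint transmit-power budget. To obtain the SVD of $\widetilde{\bH}[k]$, substitute $\bH[k]=\bU[k]\boldsymbol{\Sigma}[k]\bV^*[k]$ and use the given decomposition $\boldsymbol{\Sigma}[k]\bV^*[k]\bF_\mathrm{RF}(\bF_\mathrm{RF}^*\bF_\mathrm{RF})^{-\frac{1}{2}}=\overline{\bU}[k]\overline{\boldsymbol{\Sigma}}[k]\overline{\bV}^*[k]$. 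Since $\bU[k]$ is unitary, $\widetilde{\bH}[k]=\bU[k]\overline{\bU}[k]\,\overline{\boldsymbol{\Sigma}}[k]\,\overline{\bV}^*[k]$ is also an SVD, and in particular the right singular vectors of $\widetilde{\bH}[k]$ are the columns of $\overline{\bV}[k]$ and the singular values are the diagonal entries of $\overline{\boldsymbol{\Sigma}}[k]$.

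Then I would invoke the classical result (Telatar/Palomar) that the mutual-information maximizing input covariance on subcarrier $k$ is diagonal in the right-singular-vector basis of $\widetilde{\bH}[k]$, so the optimal $\bQ^\star[k]$ takes the form $[\overline{\bV}[k]]_{:,1:N_\mathrm{S}}\boldsymbol{\Lambda}[k]$ for a nonnegative diagonal $\boldsymbol{\Lambda}[k]$. Substituting back and applying the KKT conditions to the separable concave problem $\max\sum_{k,i}\log_{2}(1+\frac{\rho}{N_\mathrm{S}}[\overline{\boldsymbol{\Sigma}}[k]]_{i,i}^{2}[\boldsymbol{\Lambda}[k]]_{i,i}^{2})$ subject to $\sum_{k,i}[\boldsymbol{\Lambda}[k]]_{i,i}^{2}=KN_\mathrm{S}$ yields the standard water-filling allocation $(\mu-\frac{N_\mathrm{S}}{\rho[\overline{\boldsymbol{\Sigma}}[k]]^{2}})^{+}$ with $\mu$ chosen to meet the budget. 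Undoing the substitution, $\bF^\star[k]=(\bF_\mathrm{RF}^*\bF_\mathrm{RF})^{-\frac{1}{2}}\bQ^\star[k]$, which is exactly \eqref{eq:Opt_BB_TP}.

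The main obstacle I anticipate is justifying the change of variables rigorously: one must argue that the map $\bF[k]\mapsto\bQ[k]$ is a bijection between the feasible sets and that it preserves the objective, and handle degenerate cases where $\bF_\mathrm{RF}$ is rank-deficient (only columns in the range matter, and the null-space components of $\bF[k]$ carry power without affecting $\bH[k]\bF_\mathrm{RF}\bF[k]$, so they would be suboptimal). Everything else is a clean reduction to the well-known parallel-channel water-filling solution, so no further difficulty is expected.
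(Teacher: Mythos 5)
Your proposal is correct and follows essentially the same route as the paper: the substitution $\bQ[k]=\left(\bF_\mathrm{RF}^*\bF_\mathrm{RF}\right)^{\frac{1}{2}}\bF[k]$ is exactly the paper's change of variables to the ``equivalent'' baseband precoder $\bG[k]$ (Lemma \ref{lemma1} in Appendix \ref{app:prop_OPT_TP}), after which both arguments reduce the inner problem to a standard parallel-channel mutual-information maximization for the effective channel $\bH[k]\bF_\mathrm{RF}\left(\bF_\mathrm{RF}^*\bF_\mathrm{RF}\right)^{-\frac{1}{2}}$ and invoke SVD-aligned water-filling. Your attention to the bijectivity of the map under the full-column-rank assumption on $\bF_\mathrm{RF}$ matches the content of the paper's lemma, so nothing essential is missing.
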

\begin{proof}
See Appendix \ref{app:prop_OPT_TP}.
\end{proof}

Given the optimal baseband precoder in \eqref{eq:Opt_BB_TP}, the optimal hybrid precoding based mutual information with the RF codebook $\cF_\mathrm{RF}$ and a total power constraint can now be written as
\begin{equation}
\cI^{\star}_\mathrm{HP} =  \underset{\bF_\mathrm{RF} \in \cF_\mathrm{RF}}  \max \frac{1}{K} \sum_{k=1}^{K} \log_2 \left|\bI_{N_\mathrm{S}}+\frac{\rho}{N_\mathrm{S}} \left[\overline{\boldsymbol{\Sigma}}[k]\right]_{1:N_\mathrm{S}, 1:N_\mathrm{S}}^2 \boldsymbol{\Lambda}[k]^2  \ \right|,
\label{eq:Opt_MI_TP}
\end{equation}
where $\overline{\boldsymbol{\Sigma}}[k]^2$ and  $\boldsymbol{\Lambda}[k]^2$ are functions only of $\bF_\mathrm{RF}$ and $\bH[k]$ as defined in Proposition \ref{prop:Opt_TP}. This means that the optimal hybrid precoding based mutual information is determined only by the RF precoders design. Hence, an exhaustive search over the RF precoders codebook $\cF_\mathrm{RF}$ is sufficient to find the maximum achievable mutual information with hybrid precoding. 

The hybrid precoding design in Proposition \ref{prop:Opt_TP} can also be extended to the case when the power constraint is imposed on each subcarrier. In this case, the power constraint on the hybrid precoders is written as $\left\|\bF_\mathrm{RF} \bF[k]\right\|_\mathrm{F}=N_\mathrm{S}, k=1, ..., K$. The following corollary presents the optimal baseband precoder for a given RF codebook, under the per-subcarrier total power constraint. 

\begin{corollary}\label{cor:OPT_TPS}
	The optimal baseband precoders that maximizes the objective of the inner optimization problem of \eqref{eq:Opt_CSI_IO}, under the constraint $\left\|\bF_\mathrm{RF} \bF[k]\right\|_\mathrm{F}=N_\mathrm{S}, k=1, ..., K$, are given by
	\begin{equation}
	\bF^\star[k]= \left(\bF_\mathrm{RF}^* \bF_\mathrm{RF} \right)^{-\frac{1}{2}} \left[\overline{\bV}[k]\right]_{:,1:N_\mathrm{S}} \boldsymbol{\Lambda}_\mathrm{P}[k],\ \ k=1, 2, ..., K,
	\label{eq:Opt_BB_TPP}
	\end{equation}
	where $\boldsymbol{\Lambda}_\mathrm{P}[k]$ is an $N_\mathrm{S} \times N_\mathrm{S}$ water-filling power allocation diagonal matrix with
	\begin{equation}
	\left[\boldsymbol{\Lambda}_\mathrm{P}[k]\right]_{i,i}^2=\left(\mu-\frac{N_\mathrm{S}}{\rho \left[\overline{\boldsymbol{\Sigma}}[k]\right]^2}\right)^{+}, i=1, ..., N_\mathrm{S}, 
	\end{equation}
	and with $\mu$ satisfying
	\begin{equation}
	\sum_{i=1}^{N_\mathrm{S}} \left(\mu-\frac{N_\mathrm{S}}{\rho \left[\overline{\boldsymbol{\Sigma}}[k]\right]^2}\right)^{+} = Ns, \ k=1, ..., K.
	\end{equation} 
\end{corollary}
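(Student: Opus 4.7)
The plan is to reduce Corollary~\ref{cor:OPT_TPS} to $K$ independent instances of the single-subcarrier situation already handled in Proposition~\ref{prop:Opt_TP}, by exploiting the fact that the per-subcarrier total power constraint decouples the inner maximization across $k$. First, I would observe that under the constraint $\|\bF_\mathrm{RF}\bF[k]\|_F^2 = N_\mathrm{S}$ holding for every $k$, both the mutual information objective (which is a sum over $k$) and the feasible set (which becomes a Cartesian product over $k$) factorize across subcarriers. Thus the joint maximization in \eqref{eq:Opt_CSI_IO} splits into $K$ single-subcarrier subproblems, each of the form: maximize $\log_2|\bI + (\rho/N_\mathrm{S})\bH[k]\bF_\mathrm{RF}\bF[k]\bF^*[k]\bF_\mathrm{RF}^*\bH^*[k]|$ over $\bF[k]$ subject to $\|\bF_\mathrm{RF}\bF[k]\|_F^2 = N_\mathrm{S}$.

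Second, I would apply the same whitening substitution used in the proof of Proposition~\ref{prop:Opt_TP} (Appendix~\ref{app:prop_OPT_TP}): let $\widetilde{\bF}[k] = (\bF_\mathrm{RF}^*\bF_\mathrm{RF})^{1/2}\bF[k]$, which is well-defined because $\bF_\mathrm{RF}$ has full column rank in the relevant operating regime. This turns the power constraint into $\|\widetilde{\bF}[k]\|_F^2 = N_\mathrm{S}$ and expresses the objective through the effective channel $\bH[k]\bF_\mathrm{RF}(\bF_\mathrm{RF}^*\bF_\mathrm{RF})^{-1/2}$, whose SVD is exactly $\overline{\bU}[k]\,\overline{\boldsymbol{\Sigma}}[k]\,\overline{\bV}^*[k]$ as defined ahead of Proposition~\ref{prop:Opt_TP}. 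Each decoupled subproblem is now a textbook single-user MIMO capacity problem with a single trace-type constraint, so its solution picks the top $N_\mathrm{S}$ right singular directions $[\overline{\bV}[k]]_{:,1:N_\mathrm{S}}$ and water-fills across $[\overline{\boldsymbol{\Sigma}}[k]]_{i,i}$ with a budget of $N_\mathrm{S}$ \emph{on that subcarrier alone}.

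Third, I would write down the KKT conditions for each decoupled subproblem separately. Because the constraints no longer couple the $K$ subproblems, there are $K$ independent Lagrange multipliers, one per subcarrier, each producing its own water-level equation $\sum_{i=1}^{N_\mathrm{S}} (\mu - N_\mathrm{S}/(\rho[\overline{\boldsymbol{\Sigma}}[k]]^2))^+ = N_\mathrm{S}$, in contrast with the single joint equation of Proposition~\ref{prop:Opt_TP}. Undoing the whitening substitution recovers the claimed form $\bF^\star[k] = (\bF_\mathrm{RF}^*\bF_\mathrm{RF})^{-1/2}[\overline{\bV}[k]]_{:,1:N_\mathrm{S}}\boldsymbol{\Lambda}_\mathrm{P}[k]$.

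The main technical obstacle is essentially absent here: the heavy lifting (the whitening step and the resulting SVD-plus-waterfilling structure) has already been done for Proposition~\ref{prop:Opt_TP}, and the only substantive change is replacing one global water level by $K$ local ones obtained from $K$ decoupled KKT systems. The one point worth a brief verification is that the equality form of each power constraint is binding (so the water-filling interpretation is sharp), which follows because $\log_2|\bI + \tfrac{\rho}{N_\mathrm{S}} \cdot\,|$ is strictly increasing in the transmit covariance in the positive semidefinite order, so any feasible slack on a subcarrier can be used to strictly increase that subcarrier's contribution to the objective.
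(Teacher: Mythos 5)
Your proposal is correct and follows exactly the route the paper intends: the paper omits this proof, stating only that it is ``similar to Proposition~\ref{prop:Opt_TP},'' and your argument is precisely that adaptation --- decouple the inner problem across subcarriers (since both the objective and the per-subcarrier power constraints factorize), reuse the whitening change of variables from Lemma~\ref{lemma1} in Appendix~\ref{app:prop_OPT_TP}, and water-fill with $K$ independent multipliers instead of one global one. No gaps; your closing remark that the equality constraint is effectively the saturated form of the inequality (because the mutual information is increasing in the transmit covariance) is a sensible extra check.
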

\begin{proof}
	The proof is similar to Proposition \ref{prop:Opt_TP} and is therefore omitted. 
\end{proof}

It is worth mentioning here that most current wireless systems do not perform per subcarrier power allocation. This constraint, therefore, is not especially critical for practical systems.  

An important note on the structure of the optimal hybrid precoders derived in Proposition \ref{prop:Opt_TP} and corollary \ref{cor:OPT_TPS} is that the matrix $\bF_\mathrm{HP}[k]$ representing this optimal hybrid precoders at subcarrier $k$ can be written as 
\begin{equation}\label{eq:Tot_Unit_Str}
\bF_\mathrm{HP}[k]=\bF_\mathrm{U}[k] \Lambda[k],
\end{equation} 
where $\bF_\mathrm{U}[k] = \bF_\mathrm{RF}^\star \left(\left(\bF_\mathrm{RF}^\star\right)^* \bF_\mathrm{RF}^\star\right)^{-\frac{1}{2}} \left[\overline{\bV}[k]\right]_{:,1:N_\mathrm{S}}$ is a semi-unitary matrix, as it can be verified that $\bF_\mathrm{U}^*[k] \bF_\mathrm{U}[k]=\bI_{N_\mathrm{S}}$. This means that the structure of the optimal hybrid precoders is similar to that of the unconstrained SVD precoders, as it is written as a product of a semi-unitary matrix and a diagonal water-filling power allocation matrix. 

\subsection{Unitary Power Constraint}

For limited feedback MIMO systems, the unitary power constraint which requires the columns of the precoding matrix $\bF_\mathrm{RF} \bF[k]$ to be orthogonal with equal power, is an alternative important constraint. Even though some performance loss should be expected with unitary constraints compared with the more relaxed total power constraint, unitary constraints usually lead to more efficient codebooks and codeword selection algorithms for limited feedback systems \cite{Love2008}. Further, they normally offer a close performance to the total power constraint \cite{Love2008}. In this subsection, we investigate the optimal hybrid precoding design under a unitary power constraint, and conclude important results for limited feedback hybrid precoding.

Similar to \eqref{eq:Opt_CSI_IO}, the optimal mutual information with hybrid precoding under the unitary power constraint can be written in the following outer-inner problems form
\begin{equation}
\cI^{\star}_\mathrm{HP} =   \underset{\bF_\mathrm{RF} \in \cF_\mathrm{RF}}  \max \ \ 
\left\{ \begin{alignedat}{2}
& \underset{ \left\{\bF[k]  \right\}_{ k=1}^K} \max \hspace{10pt} \cI\left(\bF_\mathrm{RF}, \left\{\bF[k]\right\}_{k=1}^K\right) \\
& \hspace{10pt} \text{s.t.}  \hspace{15pt}  \bF_\mathrm{RF} \bF[k]  \in \cU_{N_\mathrm{BS} \times N_\mathrm{S}}, \ \ k=1, 2, ..., K.
\end{alignedat}\right\}
\label{eq:Opt_CSI_IO_U}
\end{equation}
Given an RF precoder $\bF_\mathrm{RF}$, we find, in the following proposition, the optimal baseband precoders of the inner problem of \eqref{eq:Opt_CSI_IO_U}.
\begin{proposition}
Define the SVD decompositions of the $k$th subcarrier channel matrix $\bH[k]$ and the matrix $\boldsymbol{\Sigma[k]} \bV^*[k] \left(\bF_\mathrm{RF}^* \bF_\mathrm{RF} \right)^{-\frac{1}{2}}$ as in Proposition \ref{prop:Opt_TP}, then the baseband precoders $\left\{\bF[k]\right\}_{k=1}^K$ that solve the inner optimization problem of \eqref{eq:Opt_CSI_IO_U} are given by
\begin{equation}
\bF^\star[k]= \left(\bF_\mathrm{RF}^* \bF_\mathrm{RF} \right)^{-\frac{1}{2}} \left[\overline{\bV}[k]\right]_{:,1:N_\mathrm{S}}, \ \ k=1, 2, ..., K. \label{eq:Opt_BB_UP}
\end{equation}
\label{prop:Opt_UP}
\end{proposition}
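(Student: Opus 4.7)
The plan is to exploit the fact that, unlike the total-power case, the unitary constraint $\bF_\mathrm{RF} \bF[k] \in \cU_{N_\mathrm{BS} \times N_\mathrm{S}}$ is imposed individually on each subcarrier. Since the objective $\cI\left(\bF_\mathrm{RF}, \left\{\bF[k]\right\}_{k=1}^K\right)$ is a sum of per-subcarrier log-determinants as in \eqref{eq:MI}, the inner maximization in \eqref{eq:Opt_CSI_IO_U} decouples into $K$ independent problems. I would therefore fix an arbitrary $k$, drop it from the notation for the argument, and solve for the optimal $\bF[k]$ given $\bF_\mathrm{RF}$.

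Next I would remove the coupling between $\bF_\mathrm{RF}$ and $\bF$ in the constraint via the change of variables $\tilde{\bF} = \left(\bF_\mathrm{RF}^* \bF_\mathrm{RF}\right)^{1/2} \bF$, which is well defined whenever $\bF_\mathrm{RF}$ has full column rank (and otherwise can be handled by restricting to the range of $\bF_\mathrm{RF}$). The constraint $\bF^* \bF_\mathrm{RF}^* \bF_\mathrm{RF} \bF = \bI_{N_\mathrm{S}}$ becomes simply $\tilde{\bF}^* \tilde{\bF} = \bI_{N_\mathrm{S}}$, so $\tilde{\bF}$ ranges freely over the Stiefel manifold of $N_\mathrm{RF} \times N_\mathrm{S}$ semi-unitary matrices.

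In terms of $\tilde{\bF}$, a direct substitution followed by the identity $|\bI + \bA \bB| = |\bI + \bB \bA|$ reduces the per-subcarrier objective to $\log_2 \left|\bI_{N_\mathrm{S}} + \tfrac{\rho}{N_\mathrm{S}} \tilde{\bF}^* \bM \tilde{\bF}\right|$, where, as in the derivation of Proposition~\ref{prop:Opt_TP}, $\bM = \left(\bF_\mathrm{RF}^* \bF_\mathrm{RF}\right)^{-1/2} \bF_\mathrm{RF}^* \bH^* \bH \bF_\mathrm{RF} \left(\bF_\mathrm{RF}^* \bF_\mathrm{RF}\right)^{-1/2}$ is precisely the Hermitian factor built from the SVD stated in the proposition, i.e., $\bM = \overline{\bV}\, \overline{\boldsymbol{\Sigma}}^* \overline{\boldsymbol{\Sigma}}\, \overline{\bV}^*$. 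The problem thus reduces to the textbook one of maximizing $\log \left|\bI + \tfrac{\rho}{N_\mathrm{S}} \tilde{\bF}^* \bM \tilde{\bF}\right|$ over semi-unitary $\tilde{\bF}$ with $N_\mathrm{S}$ columns.

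The standard argument then gives $\tilde{\bF}^\star = \left[\overline{\bV}\right]_{:,1:N_\mathrm{S}}$: setting $\bQ = \overline{\bV}^* \tilde{\bF}$, which is itself semi-unitary since $\overline{\bV}$ is unitary, the objective becomes $\log \left|\bI + \tfrac{\rho}{N_\mathrm{S}} \bQ^* \overline{\boldsymbol{\Sigma}}^* \overline{\boldsymbol{\Sigma}}\, \bQ\right|$, and Ky Fan's inequality (equivalently, a Hadamard majorization argument for the log-determinant of $\bI + \bA \bA^*$) forces the optimum to concentrate $\bQ$ on the $N_\mathrm{S}$ largest diagonal entries of $\overline{\boldsymbol{\Sigma}}^* \overline{\boldsymbol{\Sigma}}$, yielding $\bQ^\star = [\bI_{N_\mathrm{S}};\, \mathbf{0}]^T$. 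Undoing the change of variables gives $\bF^\star = \left(\bF_\mathrm{RF}^* \bF_\mathrm{RF}\right)^{-1/2} \left[\overline{\bV}\right]_{:,1:N_\mathrm{S}}$, and reintroducing $k$ recovers \eqref{eq:Opt_BB_UP}. The main subtlety to handle cleanly will be this Ky Fan/majorization step; once it is in place, the absence of any water-filling factor is automatic, because the unitary constraint fixes $\|\tilde{\bF}\|_F^2 = N_\mathrm{S}$ and thus leaves no room to redistribute power across streams or subcarriers.
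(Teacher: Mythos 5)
Your proposal is correct and follows essentially the same route as the paper: the paper's (omitted) proof defers to Appendix~\ref{app:prop_OPT_TP}, whose key step is exactly your change of variables $\bF[k]=\left(\bF_\mathrm{RF}^*\bF_\mathrm{RF}\right)^{-\frac{1}{2}}\bG[k]$ reducing the coupled constraint to a plain semi-unitary one, after which the optimum is the standard dominant-eigenvector solution. You merely spell out the ``standard'' step via Ky Fan/majorization, which the paper leaves implicit.
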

\begin{proof}
The proof is similar to that in Appendix \ref{app:prop_OPT_TP}, and is skipped due to space limitations.
\end{proof}

Given the optimal baseband precoder in \eqref{eq:Opt_BB_UP}, the optimal hybrid precoding based mutual information with the RF codebook $\cF_\mathrm{RF}$ and the unitary power constraint can be written as
\begin{equation}
\cI^{\star}_\mathrm{HP} =  \underset{\bF_\mathrm{RF} \in \cF_\mathrm{RF}}  \max \frac{1}{K} \sum_{k=1}^{K} \log_2 \left|\bI_{N_\mathrm{S}}+\frac{\rho}{N_\mathrm{S}} \left[\overline{\boldsymbol{\Sigma}}[k]\right]_{1:N_\mathrm{S}, 1:N_\mathrm{S}}^2  \ \right|,
\label{eq:Opt_MI_UP}
\end{equation}
where $\overline{\boldsymbol{\Sigma}}[k]^2$ depends only on $\bF_\mathrm{RF}$ and $\bH[k]$ as defined in Proposition \ref{prop:Opt_TP}. Next, we state an important remark on the structure of the optimal hybrid precoding design.\\
\hspace{-15pt} \textbf{Remark 1.} \textit{The optimal baseband precoder $\bF^\star[k]$ under the unitary hybrid precoding power constraint $\bF_\mathrm{RF} \bF[k] \in \cU_{N_\mathrm{BS} \times N_\mathrm{S}}$ is decomposed as $\bF^\star[k]=\bA_\mathrm{RF} \bG^\star[k]$, where $\bA_\mathrm{RF}=\left(\bF_\mathrm{RF}^* \bF_\mathrm{RF}\right)^{-\frac{1}{2}}$ depends only on the RF precoder, and $\bG^\star[k]$, which we call the \textit{equivalent} baseband precoder, is a semi-unitary matrix, $\bG^\star[k] \in \cU_{N_\mathrm{RF} \times N_\mathrm{S}}$, with the optimal design described in \eqref{eq:Opt_BB_UP}}.\\
Remark 1 shows that for the BS to achieve the optimal mutual information with the unitary power constraint and RF codebook $\cF_\mathrm{RF}$, it needs to know (i) the index of the RF precoder codeword that solves \eqref{eq:Opt_MI_UP}, and (ii) the optimal \textit{semi-unitary} equivalent baseband precoding matrix $\bG^\star[k]$.

Although an exhaustive search over the RF codebook is still required to find the optimal mutual information in \eqref{eq:Opt_MI_TP} and \eqref{eq:Opt_MI_UP}, the derived results are useful for several reasons. First, equations \eqref{eq:Opt_MI_TP} and \eqref{eq:Opt_MI_UP} provide, for the first time, the maximum achievable rate with hybrid precoding for any given RF codebooks. Therefore, these equations give a benchmark that can be used to evaluate the performance of any heuristic/iterative hybrid analog/digital precoding algorithms, and to estimate how much additional improvement is possible. Further, the optimal mutual information in \eqref{eq:Opt_MI_TP} and \eqref{eq:Opt_MI_UP}, depend only on the RF codebook, which will help in the design of the RF codebook as we will see in \sref{sec:Codebook}.  Another useful finding is the special construction of the optimal baseband precoder described in Remark 1 which offers insights into the limited feedback hybrid precoding design as will be described in \sref{sec:Codebook}.

For the remaining part of this paper, we will focus on the hybrid precoding design problem with the unitary constraint. In the next section, we will address the design of hybrid precoding codebooks for limited feedback wideband mmWave systems. Then, in \sref{sec:Greedy}, we will develop a greedy frequency selective hybrid precoding algorithm based on Gram-Schmidt orthogonalization, that relaxes the exhaustive search requirement over the RF codebook in \eqref{eq:Opt_MI_UP} while providing a near-optimal performance. It is worth noting here that as shown in \eqref{eq:Tot_Unit_Str}, the optimal hybrid precoders under total power constraints consist of a semi-unitary matrix multiplied by a diagonal water-filling power allocation matrix. Therefore, the hybrid precoding codebooks and codeword selection algorithms that will be designed under unitary power constraints can also be used for hybrid precoding under total power constraints. The water=filling power allocation can be done as a subsequent step to further improve the performance.

\section{Codebook Design for Frequency Selective Hybrid Precoding} \label{sec:Codebook}

In this section, we consider the wideband mmWave system model in \sref{sec:Model} with limited feedback, and develop hybrid analog and digital codebooks. First, we will consider the case  $N_\mathrm{S}=N_\mathrm{RF}$ in \sref{subsec:Nrf} where we leverage the structure of the optimal hybrid precoders developed in \sref{sec:Optimal} to show that the hybrid codebook design problem can be reduced to an RF codebook design problem. Then, we consider the case $N_\mathrm{S}<N_\mathrm{RF}$ in \sref{subsec:Ns}, where we develop hybrid analog and digital precoding codebooks leveraging the results  in \sref{subsec:Nrf}. 

\subsection{Case 1: $N_\mathrm{S}=N_\mathrm{RF}$} \label{subsec:Nrf}

Given the optimal baseband precoder structure from \eqref{eq:Opt_BB_UP}, the optimal hybrid precoding based mutual information when $N_\mathrm{S}=N_\mathrm{RF}$, and with the RF codebook $\cF_\mathrm{RF}$ can be written as
\begin{align}
& \cI^{\star}_\mathrm{HP} = \nonumber \\
& \underset{\bF_\mathrm{RF} \in \cF_\mathrm{RF}}  \max \frac{1}{K} \sum_{k=1}^{K} \log_2 \left|\bI_{{r}(\bH[k])}+\frac{\rho}{N_\mathrm{S}} \boldsymbol{\Sigma}[k]^2 \bV^*[k] \bF_\mathrm{RF} \left(\bF_\mathrm{RF}^* \bF_\mathrm{RF}\right)^{-\frac{1}{2}} \bG[k] \bG^*[k]  \left(\bF_\mathrm{RF}^* \bF_\mathrm{RF}\right)^{-\frac{1}{2}} \bF_\mathrm{RF}^* \bV[k]  \right|.
\label{eq:Opt_MI_UP_Nrf}
\end{align}
Since $\bG[k]$ is unitary for $N_S=N_\mathrm{RF}$, equation \eqref{eq:Opt_MI_UP_Nrf} can be equivalently written as
\begin{equation}
\cI^{\star}_\mathrm{HP} =  \underset{\bF_\mathrm{RF} \in \cF_\mathrm{RF}}  \max \frac{1}{K} \sum_{k=1}^{K} \log_2 \left|\bI_{r(\bH[k])}+\frac{\rho}{N_\mathrm{S}} \boldsymbol{\Sigma}[k]^2 \bV^*[k] \bF_\mathrm{RF} \left(\bF_\mathrm{RF}^* \bF_\mathrm{RF}\right)^{-1} \bF_\mathrm{RF} \bV[k]  \right|.
\label{eq:Opt_MI_UP_Nrf_I}
\end{equation}
As a result, the optimal mutual information is invariant to the $optimal$ equivalent baseband precoder, and depends only on the knowledge of the RF precoder $\bF_\mathrm{RF}$. This leads to the following remark.\\
\hspace{-15pt} \textbf{Remark 2.} \textit{With $N_\mathrm{S}=N_\mathrm{RF}$, feeding back only the index of the optimal RF precoder that solves \eqref{eq:Opt_MI_UP_Nrf_I} is sufficient to achieve the optimal mutual information with limited feedback hybrid precoding.}

Remark 2 also means that no quantization of the baseband precoder is required when $N_\mathrm{S}=N_\mathrm{RF}$. Further, optimizing the limited feedback hybrid precoding performance is achieved by the optimization of the RF codebook design $\cF_\mathrm{RF}$, which is addressed in the remaining part of this subsection.

\textbf{RF Codebook Design Criterion:}
Our objective is to design the RF codebook to minimize the distortion given by the average  mutual information loss of hybrid precoding compared with the optimal unconstrained per-subcarrier SVD solution. Denoting the SVD of the RF precoder as $\bF_\mathrm{RF}=\bU_\mathrm{RF} \boldsymbol{\Sigma}_\mathrm{RF} \bV_\mathrm{RF}^*$, the optimal mutual information with limited feedback hybrid precoding in \eqref{eq:Opt_MI_UP_Nrf_I} can be written as
\begin{equation}
\cI^{\star}_\mathrm{HP} =  \underset{\bF_\mathrm{RF} \in \cF_\mathrm{RF}}  \max \frac{1}{K} \sum_{k=1}^{K} \log_2 \left|\bI_{r(\bH[k])}+\frac{\rho}{N_\mathrm{S}} \boldsymbol{\Sigma}[k]^2 \bV^*[k] \bU_\mathrm{RF} \bU_\mathrm{RF}^* \bV[k]  \right|.
\end{equation}

For large mmWave MIMO systems, a reasonable assumption as stated in \cite{ElAyach2014} for narrowband channels is that the hybrid precoders can be made sufficiently close to the dominant channel eigenspace. Further, the dominant  channel eigenspaces of the different subcarriers may have high correlation at mmWave channels \cite{Rappaport2013a,Ghosh2014}. This means that the eigenvalues of the matrix $\bI- \tilde{\bV}^*[k] \bF_\mathrm{RF} \bF[k] \bF^*[k] \bF_\mathrm{RF}^* \tilde{\bV}[k]$ can be made sufficiently small.  Using this assumption, which will also be evaluated by simulations in \figref{fig:Fig2}-\figref{fig:Fig4x}, and following  similar steps to that in equations (12)-(14) of \cite{ElAyach2014}, $\cI^\star_\mathrm{HP}$ can be approximately written as
\begin{align}
\cI^{\star}_\mathrm{HP} &\approx  \underset{\bF_\mathrm{RF} \in \cF_\mathrm{RF}}  \max \frac{1}{K} \sum_{k=1}^{K} \left( \log_2 \left|\bI_{N_\mathrm{RF}}+\frac{\rho}{N_\mathrm{S}} \tilde{\boldsymbol{\Sigma}}[k]^2 \right| - \left(N_\mathrm{RF} - \left\|\bU_\mathrm{RF}^* \tilde{\bV}[k] \right\|_F^2\right) \right), \label{eq:Large_Approx}\\
&= \frac{1}{K} \sum_{k=1}^{K} \log_2 \left|\bI_{N_\mathrm{RF}}+\frac{\rho}{N_\mathrm{S}} \tilde{\boldsymbol{\Sigma}}[k]^2 \right| - \underset{\bF_\mathrm{RF} \in \cF_\mathrm{RF}}  \min \frac{1}{K}\sum_{k=1}^K \left(N_\mathrm{RF} - \left\|\bU_\mathrm{RF}^* \tilde{\bV}[k] \right\|_F^2 \right),
\end{align}
where $\tilde{\boldsymbol{\Sigma}}[k]=\left[\boldsymbol{\Sigma}[k]\right]_{1:N_\mathrm{RF}, 1:N_\mathrm{RF}}$ and $\tilde{\bV}[k]=\left[\bV[k]\right]_{:,1:N_\mathrm{RF}}$.

When fully digital unconstrained precoding  with perfect channel knowledge is possible, the optimal mutual information with per-subcarrier unitary constraint is achieved by per-subcarrier SVD precoding and is equal to
\begin{equation}
\cI^\star_{\mathrm{UC}}=\frac{1}{K} \sum_{k=1}^K \log_2 \left|\bI_{N_\mathrm{RF}}+\frac{\rho}{N_\mathrm{S}}\tilde{\boldsymbol{\Sigma}}[k]^2 \right|.
\label{eq:Opt_MI_UC}
\end{equation}

We can now define the distortion due to limited feedback hybrid precoding with the RF precoder $\cF_\mathrm{RF}$ as
\begin{align}
\cD\left(\cF_\mathrm{RF}\right)&=\bbE_{\left\{\bH[k]\right\}_{k=1}^K}\left[\cI_\mathrm{UC}^\star-\cI^\star_\mathrm{HP}\right],\\
& \approx \bbE_{\left\{\bH[k]\right\}_{k=1}^K}\left[ \underset{\bF_\mathrm{RF} \in \cF_\mathrm{RF}}  \min \frac{1}{K}\sum_{k=1}^K \left(N_\mathrm{RF} - \left\|\bU_\mathrm{RF}^* \tilde{\bV}[k] \right\|_F^2 \right)\right], \label{eq:Approx_Dist}\\
& = \bbE_{\left\{\bH[k]\right\}_{k=1}^K}\left[ \underset{\bF_\mathrm{RF} \in \cF_\mathrm{RF}}  \min \frac{1}{K}\sum_{k=1}^K d^2_\mathrm{chord}\left(\bU_\mathrm{RF},\tilde{\bV}[k] \right)\right],\\
& = \bbE_{\left\{\bH[k]\right\}_{k=1}^K}\left[ \underset{\bF_\mathrm{RF} \in \cF_\mathrm{RF}}  \min \Phi_\mathrm{chord} \left(\bU_\mathrm{RF},\left\{\tilde{\bV}[k]\right\}_{k=1}^K \right)\right],
\label{eq:Dist}
\end{align}
where $d_\mathrm{chord}\left(\bX,\bY\right)$ is the chordal distance between the two points $\bX, \bY$ on the Grassmann manifold $\cG\left(N_\mathrm{BS}, N_\mathrm{RF}\right)$. $\Phi_\mathrm{chord}\left(\bX,\left\{\bY[k]\right\}_{k=1}^K\right)$ is the average of the squared chordal distances between the Grassmann manifold points $\bX$ and $\left\{\bY[k]\right\}_{k=1}^K$. If no constraints are imposed on $\bX$, the solution of $\arg\min_{\bX \in \cG\left(N_\mathrm{BS},N_\mathrm{RF}\right)} \Phi_\mathrm{chord}\left(\bX,\left\{\bY[k]\right\}_{k=1}^K\right)$ is given by the Karcher mean of the $K$ $N_\mathrm{RF}$-dimensional subspaces defined by the points $\left\{\bY[k]\right\}_{k=1}^K$\cite{Absil2004}. Our RF codebook design criterion is then to minimize the  distortion function expression in \eqref{eq:Dist}

\textbf{RF Codebook Construction:}
Developing a closed-form solution for the RF precoders codebook that minimizes the distortion in \eqref{eq:Dist} is non-trivial for two main reasons. First, the RF hardware constraints like the constant modulus limitation on the entries of the RF precoding matrix and the angle quantization of the phase shifters, which impose non-convex constraints on the distortion function minimization problem. Second, the lack of knowledge about the closed-form distributions of the mmWave channel matrices. These closed-form distributions usually play a key role in constructing the precoders codebook. For example, the uniform distribution of the dominant singular vectors of the IID complex Gaussian MIMO channels led to the codebook design based on isotropic packing of the Grassmann manifold \cite{Love2008,Love2005}.

To overcome these challenges, we developed Algorithm \ref{alg:RF_CB} which is a Lloyd-type algorithm \cite{Lloyd1982,Pitaval2014}, that first constructs a precoders codebook to minimize the distortion function in \eqref{eq:Dist} for wideband mmWave channels while neglecting the RF hardware constraints. Then, the RF precoding codewords are designed to minimize the additional distortion results from the RF hardware constraints. One advantage of developing a Lloyd-type algorithm is that no knowledge about the closed-form distributions of the channel matrices is required, and only the knowledge of the mmWave channel parameter statistics, which are given by measurements \cite{Samimi2014}, is needed. These parameter statistics are used to generate random channel realizations, according to \eqref{eq:d_Channel}, which are employed in constructing the RF precoders codebook as described in Algorithm \ref{alg:RF_CB}. 

\begin{algorithm} [!t]                     
\caption{RF Codebook Construction for Frequency Selective Hybrid Precoding}          
\label{alg:RF_CB}                           
\begin{algorithmic} 
    \State 1) \textbf{Initialization:} Generate random initial centroid points $\cF_\mathrm{U}=\left\{\bF_1^\mathrm{U}, ..., \bF_{N_\mathrm{CB}}^\mathrm{U} \right\} \subset \cU_{N_\mathrm{BS} \times N_\mathrm{S}}$.
    \State 2) \textbf{Source:} Generate random mmWave channels according to \eqref{eq:k_Channel}, $\cH=\left\{\left\{\bH[k]\right\}_{k=1}^K\right\}$, and construct the set of dominant right singular vectors corresponding to the generated channels $\cV$ where
    \begin{equation}
    \cV=\left\{\left\{\tilde{\bV}[k]\right\}_{k=1}^K\subset \cU_{N_\mathrm{BS}\times N_\mathrm{S}} \left| \tilde{\bV}[k]=\left[\bV[k]\right]_{:,1:N_\mathrm{S}}, \bH[k]=\bU[k] \boldsymbol{\Sigma}[k] \bV[k], \left\{\bH[k]\right\}_{k=1}^K \in \cH\right.\right\}.\nonumber
    \end{equation}
    \State 3) \textbf{Nearest Neighbor Partitioning:} Partition the set $\cV$ into $N_\mathrm{CB}$ Voronoi cells $\left\{\cR_1, ..., \cR_{N_\mathrm{CB}}\right\}$ according to \eqref{eq:centroid}-\eqref{eq:Cell}.
    \State 4) For each Voronoi cell $n$, $n=1, ..., N_\mathrm{CB}$
    \State \hspace{20pt} a) \textbf{Karcher Mean Calculation:} Calculate the Karcher mean $\bM_n$ of the points \\
    \hspace{33pt} $\left\{\left\{\tilde{\bV}[k]\right\}_{k=1}^K\right\}$ in $\cR_n$ according to \eqref{eq:cent_eig}.
    \State \hspace{20pt} b) \textbf{Updating the Centroid:} Update the $n$th unconstrained codeword $\bF_n^\mathrm{U}=\bM_n$.
    \State \hspace{20pt} c) \textbf{RF Codeword Approximation:} Calculate the approximated RF codeword $\bF_n^\mathrm{RF}$ \\
    \hspace{33pt} according to  \eqref{eq:RF_approx_ele}.
    \State 5) Loop back to step 3) until convergence
\end{algorithmic}
\end{algorithm}

The operation of Algorithm \ref{alg:RF_CB} can be summarized as follows.
\begin{itemize}
\item{Initialization and source generation:}{ In this step, $N_\mathrm{CB}$ initial codewords $\bF_n^\mathrm{U}, n=1,..., N_\mathrm{CB}$ for the unconstrained codebook are randomly chosen from $\cU_{N_\mathrm{BS}\times N_\mathrm{S}}$. Further, random wideband mmWave channel realizations are generated according to \eqref{eq:d_Channel}, \eqref{eq:k_Channel} with the parameter statistics given from measurements, e.g., \cite{Samimi2014}. For each channel realization, the $K$ subcarrier channels $\left\{\bH[k]\right\}_{k=1}^K$ are calculated, and their dominant right singular vectors are determined $\left\{\tilde{\bV}[k]\right\}_{k=1}^K$. Note that each element of $\cH$ (and $\cV$) is a set of $K$ matrices for the $K$ subcarriers.}
\item{Nearest neighbor partitioning:}{ In this step, the points in $\cV$ are partitioned into $N_\mathrm{CB}$ Voronoi cells with respect to the codewords in $\cF_\mathrm{U}$ to minimize the average distortion. To do that, we first define the quantization map $\bC\left(\left\{\tilde{\bV}[k]\right\}_{k=1}^K\right)$, that determines the closest codeword in $\cF_\mathrm{U}$ to $\left\{\tilde{\bV}[k]\right\}_{k=1}^K$ in terms of the average squared chordal distance $\Phi_\mathrm{chord}(.)$, as
\begin{equation}
\bC\left(\left\{\tilde{\bV}[k]\right\}_{k=1}^K\right)=\arg\min_{\bX \in \cF_\mathrm{U}} \Phi_\mathrm{chord}\left(\bX,\left\{\tilde{\bV}[k]\right\}_{k=1}^K\right).
\label{eq:centroid}
\end{equation}

Once the codeword closest to each point in $\cV$ is determined, these points can be partitioned into $N_\mathrm{CB}$ sets $\cR_n, n=1,2,..., N_\mathrm{CB}$ as follows
\begin{equation}
\cR_n=\left\{\left\{\tilde{\bV}[k]\right\}_{k=1}^K \in \cV \left| \bC\left(\left\{\tilde{\bV}[k]\right\}_{k=1}^K\right) = \bF_n^\mathrm{U} \right.\right\}.
\label{eq:Cell}
\end{equation}
}
\item{Centroid calculation:}{ The centroid of each partition $\cR_n$ is then derived to minimize the average distortion for this partition. Hence, the objective of this step is to calculate the new codeword $\bF_n^\mathrm{U}$ that solves}
\begin{equation}
\bF_n^\mathrm{U}=\arg\min_{\bX \in \cU_{N_\mathrm{BS} \times N_\mathrm{S}}} \bbE \left[\Phi_\mathrm{chord}\left(\bX, \left\{\tilde{\bV}[k]\right\}_{k=1}^K\right)\left|\left\{ \tilde{\bV}[k]\right\}_{k=1}^K \in \cR_n \right.\right].
\label{eq:cent_avg}
\end{equation}

Minimizing the objective function of the problem in \eqref{eq:cent_avg} is similar to minimizing the function $\Phi_\mathrm{chord}\left(\bX, \left\{\tilde{\bV}[k]\right\}_{k=1}^K\right)$, whose solution is found to be given by the Karcher mean \cite{Mondal2007,Pitaval2014}. Therefore, the new centroid of \eqref{eq:cent_avg} can be calculated in a closed form as
\begin{equation}
\bM_n=\mathrm{eig}_{1:N_\mathrm{S}}\left(\sum_{\cR_n} \sum_{k=1}^K \tilde{\bV}[k] \tilde{\bV}^*[k]\right),
\label{eq:cent_eig}
\end{equation}
where $\mathrm{eig}_{1:N_\mathrm{S}}(\bX)$ represents the first $N_\mathrm{S}$ eigenvectors of the matrix $\bX$ corresponding to the $N_\mathrm{S}$ largest eigenvalues.

\item{RF codewords approximation:}{ The final objective of Algorithm \ref{alg:RF_CB} is to construct an RF codebook $\cF_\mathrm{RF}$ that minimizes the distortion in \eqref{eq:Dist}. Using the triangle inequality on the chordal distances \cite{Pitaval2011}, the additional distortion due to the RF hardware constraints can be bounded by}
\begin{equation}
\Phi_\mathrm{chord}\left(\bU^\mathrm{RF}_n, \left\{\tilde{\bV}[k]\right\}_{k=1}^K\right)-\Phi_\mathrm{chord}\left(\bF^\mathrm{U}_n, \left\{\tilde{\bV}[k]\right\}_{k=1}^K\right) \leq d^2_\mathrm{chord}\left(\bF^\mathrm{U}_n, \bU^\mathrm{RF}_n\right),
\label{eq:RF_Dist_bound}
\end{equation}
where $\bU_n^\mathrm{RF}$ is the $N_\mathrm{S}$ dominant left singular vectors of the $n$th RF codeword. As the chordal distance between two Grassmannian points $\bX$,$\bY$ $\in \cU_{N_\mathrm{BS}\times N_\mathrm{S}}$ is invariant to the right multiplication of any of them by a unitary matrix in $\cU_{N_\mathrm{S} \times N_\mathrm{S}}$, then we have $d^2_\mathrm{chord}\left(\bF^\mathrm{U}_n, \bU^\mathrm{RF}_n\right)=d^2_\mathrm{chord}\left(\bF^\mathrm{U}_n, \bU^\mathrm{RF}_n \bV^\mathrm{RF}_n\right)=d^2_\mathrm{chord}\left(\bF^\mathrm{U}_n, \bF_n^\mathrm{RF} \left(\bF_n^{\mathrm{RF}^*} \bF_n^\mathrm{RF}\right)^{-\frac{1}{2}}\right)$. So, our objective is to solve
\begin{equation}
\bF_n^{\mathrm{RF}^\star}=\arg\min_{|\bF_n^\mathrm{RF}|_{p,q}=1}d^2_\mathrm{chord}\left(\bF^\mathrm{U}_n, \bF_n^\mathrm{RF} \left(\bF_n^{\mathrm{RF}^*} \bF_n^\mathrm{RF}\right)^{-\frac{1}{2}}\right).
\label{eq:RF_prob}
\end{equation}

Finding the exact solution of \eqref{eq:RF_prob} is non-trivial because of the constant-modulus constraint on the entries of $\bF_n^\mathrm{RF}$. For the sake of a closed-form approximated solution, however, we make the following two approximations, that will be shown by simulations in \sref{sec:Results} to give very good results compared with the optimal unconstrained solution. (i) For large mmWave MIMO channels, the columns of $\bF^\mathrm{RF}_n$ can be chosen to be nearly orthogonal, i.e., $\left(\bF_n^{\mathrm{RF}^*} \bF_n^\mathrm{RF}\right) \approx \bI$. (ii) The hybrid precoding and the unconstrained points $\bF_n^\mathrm{RF} \left(\bF_n^{\mathrm{RF}^*} \bF_n^\mathrm{RF}\right)^{-\frac{1}{2}}$ and $\bF_n^\mathrm{U}$ can be made very close \cite{ElAyach2014}. Hence, by leveraging the locally Euclidean property of the Grassmann manifold, the chordal distance in \eqref{eq:RF_prob} can be replaced by the Euclidean distance \cite{ElAyach2014,Lee2012}. Therefore, minimizing the distortion in \eqref{eq:RF_prob} is approximately equal to the following problem
\begin{equation}
\bF_n^{\mathrm{RF}^\star}=\arg\min_{\left|\bF_n^\mathrm{RF}\right|_{p,q}=1}\left\|\bF_n^\mathrm{U}-\bF_n^\mathrm{RF}\right\|_F^2.
\label{eq:RF_approx}
\end{equation}

The problem in \eqref{eq:RF_approx} is a per-entry optimization problem, of which the optimal solution is given by
\begin{equation}
\left[\bF_n^{\mathrm{RF}^\star}\right]_{p,q}= e^{j \measuredangle\left(\left[\bF_n^\mathrm{U}\right]_{p,q}\right)},
\label{eq:RF_approx_ele}
\end{equation}
where the angle $\measuredangle\left(\left[\bF_n^\mathrm{U}\right]_{p,q}\right)$ can then be approximated to the closest quantized angle of the available phase shifters.
\end{itemize}

The convergence of Algorithm \ref{alg:RF_CB} is shown in \figref{fig:Fig1} for a $32 \times 16$ mmWave system with $N_\mathrm{RF}=3$ RF chains, and for a codebook size $128$. The monotonic convergence of Algorithm \ref{alg:RF_CB} to a local optimal solution is guaranteed as the precoding codewords are updated in each iteration according to the nearest neighbor and centroid steps \eqref{eq:centroid} - \eqref{eq:cent_avg} to make an additional reduction in the distortion function \cite{Lloyd1982}. In the next subsection, we extend the developed codebook to the case when $N_\mathrm{S}<N_\mathrm{RF}$.
\begin{figure}[t]
\centerline{
\includegraphics[scale=.55]{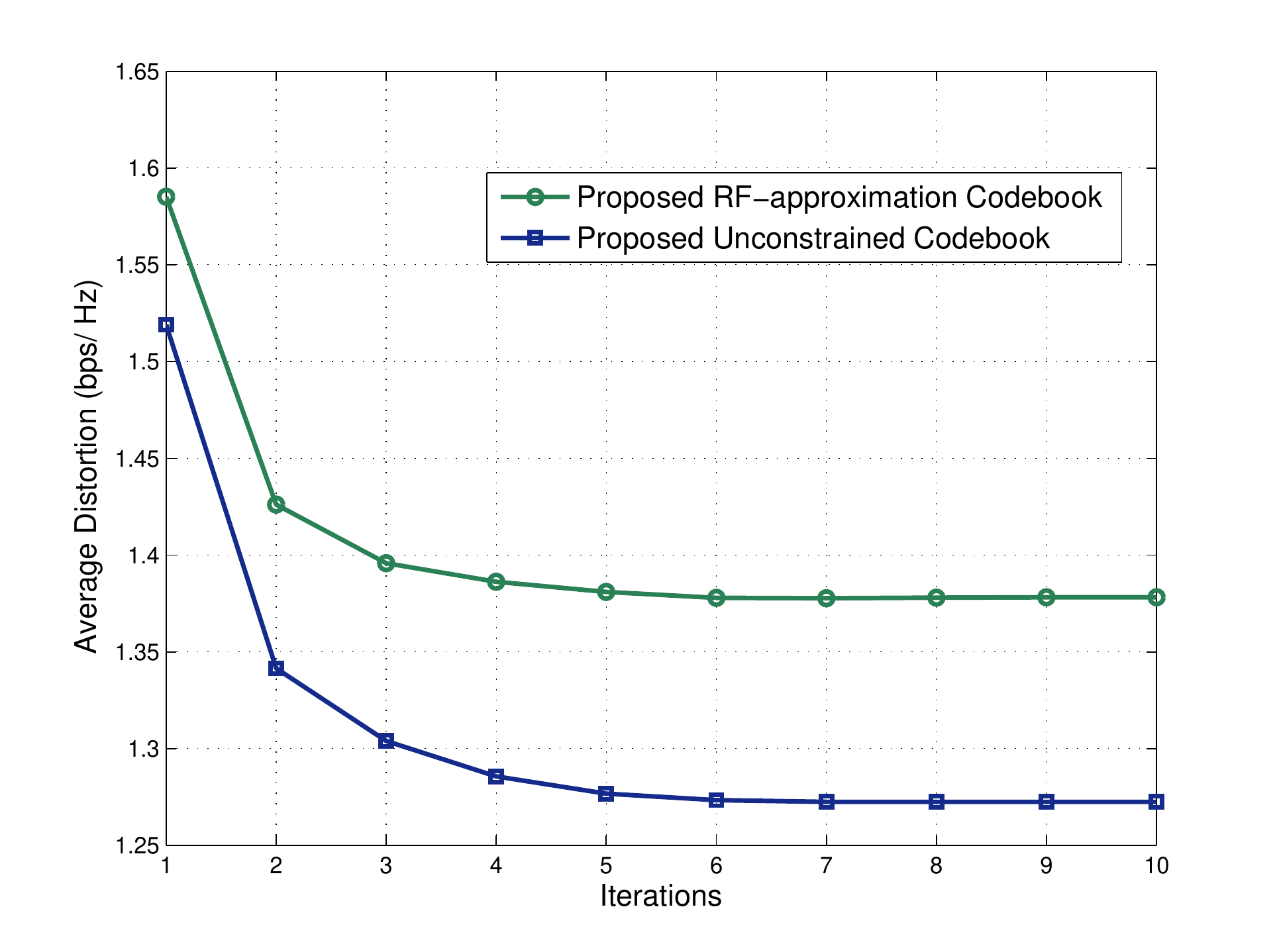}
}
\caption{Average distortion in \eqref{eq:Dist} of the proposed RF codebook constructed using Algorithm \ref{alg:RF_CB} with $N_\mathrm{S}=N_\mathrm{RF}=3$, and codebook size $N_\mathrm{CB}=128$. The rest of the channel and simulation parameters are similar to \figref{fig:Fig2a} described in \sref{sec:Results}. The figure shows the convergence of the unconstrained and RF approximated codebooks to small distortion values.}
\label{fig:Fig1}
\end{figure}
\subsection{Case 2: $N_\mathrm{S}<N_\mathrm{RF}$} \label{subsec:Ns}
When $N_\mathrm{S}<N_\mathrm{RF}$, we can see from \eqref{eq:Opt_MI_UP} that the optimal hybrid precoding based mutual information depends on the value of the equivalent baseband precoders, and are not invariant with respect to them because they will not have a unitary structure as when $N_\mathrm{S}=N_\mathrm{RF}$. Hence, both the RF and baseband  precoders $\bF_\mathrm{RF}, \left\{\bF[k]\right\}_{k=1}^K$, need to be quantized and fed back to the transmitter in this case. Inspired by the optimal structure of the baseband precoders in \eqref{eq:Opt_BB_UP} and by Remark 1, we propose to quantize the equivalent baseband precoders $\left\{\bG[k]\right\}$ instead of the baseband precoders. In addition to the intuitive good performance expected to be achieved with equivalent baseband quantization thanks to following the optimal precoders structure, one main advantage of equivalent baseband precoders quantization appears in the favorable structure of the optimal equivalent baseband codebooks as will be discussed shortly. With RF and equivalent baseband precoders quantization, the optimal mutual information is given by
\begin{equation}
\begin{aligned}
\cI^{\star}_\mathrm{HP} & = \\ 
& \hspace{-15pt} \underset{\bF_\mathrm{RF}, \left\{\bF[k]\right\}_{ k=1}^K}  \max
& &\frac{1}{K} \sum_{k=1}^{K} \log_2 \left|\bI_{N_\mathrm{MS}}+\frac{\rho}{N_\mathrm{S}} \bH[k] \bF_\mathrm{RF} \left(\bF_\mathrm{RF}^* \bF_\mathrm{RF}\right)^{-\frac{1}{2}}\bG[k] \bG^*[k] \left(\bF_\mathrm{RF}^* \bF_\mathrm{RF}\right)^{-\frac{1}{2}} \bF_\mathrm{RF}^*  \bH^*[k] \right| \\
& \hspace{5pt} \text{s.t.}
& &  \bF_\mathrm{RF } \in \cF_\mathrm{RF}, \\
&&& \bG[k] \in \mathcal{G}_\mathrm{BB} \subseteq \cU_{N_\mathrm{RF} \times N_\mathrm{S}}, \ \ k=1, 2, ..., K, \label{eq:Opt_Feedback_G}
\end{aligned}
\end{equation}
where the constraint $\mathcal{G}_\mathrm{BB} \subseteq \cU_{N_\mathrm{RF} \times N_\mathrm{S}}$ on the equivalent baseband precoders codebook $\mathcal{G}_\mathrm{BB}$ follows from the unitary power constraint on the hybrid precoders, which requires the equivalent baseband precoders to have a unitary structure. Before delving into the design of RF precoders codebook $\cF_\mathrm{RF}$ and the equivalent baseband precoders codebook $\cG_\mathrm{BB}$, we make the following remark on the codebook structure of the optimal equivalent baseband precoders. \\
\hspace{-15pt}\textbf{Remark 3.} \textit{Regardless of the RF codebook, the optimal codebook for the equivalent baseband precoders $\left\{\bG[k]\right\}_{k=1}^K$ under a unitary hybrid precoding constraint is unitary.}

In the remaining part of this subsection, we present the proposed design and construction of the RF and equivalent baseband precoders codebooks, $\cF_\mathrm{RF}$ and $\cG_\mathrm{BB}$.

\textbf{Hybrid Codebook Design Criterion:}
The objective now is to design $\cF_\mathrm{RF}$ and $\cG_\mathrm{BB}$ to minimize the distortion function $\cD\left(\cF_\mathrm{RF}, \cG_\mathrm{BB}\right)$ defined as the average mutual information loss of limited feedback frequency selective hybrid precoding compared with the unconstrained perfect channel knowledge solution. Formally, the distortion function $\cD\left(\cF_\mathrm{RF}, \cG_\mathrm{BB}\right)$ is written as
\begin{equation}
\cD\left(\cF_\mathrm{RF}, \cG_\mathrm{BB}\right) = \bbE_{\left\{\bH[k]\right\}_{k=1}^K}\left[\cI_\mathrm{UC}^\star-\cI^\star_\mathrm{HP}\right],
\label{eq:Dist_HP}
\end{equation}
where $\cI_\mathrm{UC}^\star$ and $\cI_\mathrm{HP}^\star$ are as defined in \eqref{eq:Opt_MI_UC} and \eqref{eq:Opt_Feedback_G}, respectively.

The main challenge of this distortion function is that the hybrid precoding mutual information depends on the joint RF and equivalent baseband precoders codebooks as shown in \eqref{eq:Opt_Feedback_G}, which makes the  direct design of these codebooks to minimize the distortion in \eqref{eq:Dist_HP} non-trivial. Next, we leverage the optimal baseband prcoders structure in \sref{sec:Optimal} to derive an upper bound on the limited feedback hybrid precoding distortion in \eqref{eq:Dist_HP}. This bound will attempt to decouple the distortion impact of the RF and equivalent baseband precoding codebooks, and therefore simplify the hybrid codebook design problem. The limited feedback hybrid precoding distortion $\cD\left(\cF_\mathrm{RF}, \cG_\mathrm{BB}\right)$ in \eqref{eq:Dist_HP} can be written as
\begin{align}
\cD\left(\cF_\mathrm{RF}, \cG_\mathrm{BB}\right) &= \bbE_{\left\{\bH[k]\right\}_{k=1}^K}\left[\cI_\mathrm{UC}^\star-\cI^\star_\mathrm{HP}\right], \\
&\begin{aligned} \stackrel{(a)}{=} &\bbE_{\left\{\bH[k]\right\}_{k=1}^K} \underbrace{\left[\cI_\mathrm{UC}^\star-\underset{\bF_\mathrm{RF} \in \cF_\mathrm{RF}}  \max \frac{1}{K} \sum_{k=1}^{K} \log_2 \left|\bI_{N_\mathrm{S}}+\frac{\rho}{N_\mathrm{S}} \left[\overline{\boldsymbol{\Sigma}}[k]\right]_{1:N_\mathrm{S}, 1:N_\mathrm{S}}^2  \ \right|\right] }_{\Delta\cI_\mathrm{RF}}\\
& + \bbE_{\left\{\bH[k]\right\}_{k=1}^K}  \underbrace{\left[\underset{\bF_\mathrm{RF} \in \cF_\mathrm{RF}}  \max \frac{1}{K} \sum_{k=1}^{K} \log_2 \left|\bI_{N_\mathrm{S}}+\frac{\rho}{N_\mathrm{S}} \left[\overline{\boldsymbol{\Sigma}}[k]\right]_{1:N_\mathrm{S}, 1:N_\mathrm{S}}^2  \ \right|- \cI^\star_\mathrm{HP}\right]}_{\Delta\cI_{BB|RF}}\end{aligned},\label{eq:Dist_HP_Terms}\\
& = \cD\left(\cF_\mathrm{RF}\right)+\cD\left(\cG_\mathrm{BB}\left|\cF_\mathrm{RF}\right.\right) \label{eq:Dist_sum},
\end{align}
where (a) follows by adding and subtracting the optimal hybrid precoding based mutual information with optimal equivalent baseband precoding knowledge in \eqref{eq:Opt_MI_UP}. The first term is therefore the average mutual information loss due to RF codebook alone, $\cD\left(\cF_\mathrm{RF}\right)$, while the second term represents the additional loss with equivalent baseband precoders quantization $\cD\left(\cG_\mathrm{BB}\left|\cF_\mathrm{RF}\right.\right)$. Exploiting the optimal baseband precoders design in \eqref{eq:Opt_BB_UP}, we can bound mutual information loss due to RF quantization as
\begin{align}
\Delta\cI_\mathrm{RF}   & \stackrel{(a)}{=} \frac{1}{K} \sum_{k=1}^K \log_2 \left|\bI+\frac{\rho}{N_\mathrm{S}}\tilde{\boldsymbol{\Sigma}}[k]^2 \right| \nonumber\\
& - \underset{\bF_\mathrm{RF} \in \cF_\mathrm{RF}}  \max  \frac{1}{K} \sum_{k=1}^K \sum_{i=1}^{N_\mathrm{S}}\log_2 \left(1+\frac{\rho}{N_\mathrm{S}} \lambda_i\left(\boldsymbol{\Sigma}[k] \bV^*[k] \bU_\mathrm{RF} \bV_\mathrm{RF}^* \bV_\mathrm{RF} \bU_\mathrm{RF}^* \bV[k] \boldsymbol{\Sigma}^*[k]  \right)\right),\\
& \stackrel{(b)}{\leq} \frac{1}{K} \sum_{k=1}^K \log_2 \left|\bI+\frac{\rho}{N_\mathrm{S}}\tilde{\boldsymbol{\Sigma}}[k]^2 \right| \nonumber\\
& - \underset{\bF_\mathrm{RF} \in \cF_\mathrm{RF}}  \max  \frac{1}{K} \sum_{k=1}^K \sum_{i=1}^{N_\mathrm{S}}\log_2 \left(1+\frac{\rho}{N_\mathrm{S}} \lambda_i\left(\tilde{\boldsymbol{\Sigma}}[k] \tilde{\bV}^*[k] \bU_\mathrm{RF}  \bU_\mathrm{RF}^* \tilde{\bV}[k] \tilde{\boldsymbol{\Sigma}}^*[k]  \right)\right),\\
&{=} \frac{1}{K} \sum_{k=1}^K \log_2 \left|\bI+\frac{\rho}{N_\mathrm{S}}\tilde{\boldsymbol{\Sigma}}[k]^2 \right| \nonumber \\
& - \underset{\bF_\mathrm{RF} \in \cF_\mathrm{RF}}  \max \frac{1}{K} \sum_{k=1}^K \log_2 \left|\bI+\frac{\rho}{N_\mathrm{S}} \tilde{\boldsymbol{\Sigma}}[k] \tilde{\bV}^*[k] \bU_\mathrm{RF} \bU_\mathrm{RF}^* \tilde{\bV}[k] \tilde{\boldsymbol{\Sigma}}^*[k] \right| ,\\
&\stackrel{(c)}{\approx} \underset{\bF_\mathrm{RF} \in \cF_\mathrm{RF}}{\min}  \frac{1}{K} \sum_{k=1}^K \overline{d}^2_\mathrm{chord}\left(\bU_\mathrm{RF}, \tilde{\bV}[k]\right),\\
& = \underset{\bF_\mathrm{RF} \in \cF_\mathrm{RF}}{\min}  \overline{\Phi}_\mathrm{chord}\left(\bU_\mathrm{RF}, \left\{\tilde{\bV}[k]\right\}_{k=1}^K\right), \label{eq:MI_Loss_RF}
\end{align}
where (a) follows from the design of the optimal baseband precoder in \eqref{eq:Opt_BB_UP}. The bound in (b) follows by considering only the $N_\mathrm{S}$ dominant right singular vectors of the channel, i.e., the first $N_\mathrm{S}$ columns of $\bV[k]$, and (c) follows by considering the large mmWave MIMO approximations used in \eqref{eq:Large_Approx}. In (c), $\overline{d}_\mathrm{chord}$ is the generalized chordal distance between subspaces of different dimensions defined as $\overline{d}^2_\mathrm{chord}\left(\bU_\mathrm{RF}, \tilde{\bV}[k]\right)=\min\left(N_\mathrm{RF}, N_\mathrm{S}\right) - \left\|\bU_\mathrm{RF}^* \tilde{\bV}[k]\right\|_F^2$ \cite{Ye2014}, where the dimensions of $\bU_\mathrm{RF}$ and $\tilde{\bV}[k]$ are $N_\mathrm{BS} \times N_\mathrm{RF}$ and $N_\mathrm{BS} \times N_\mathrm{S}$, respectively. Finally, $\overline{\Phi}_\mathrm{chord}(.)$ is defined as in \eqref{eq:Dist}, but with respect to the generalized chordal distance $\overline{d}_\mathrm{chord}(.)$. Given the result in \eqref{eq:MI_Loss_RF}, we reach the following bound on $\cD\left(\cF_\mathrm{RF}\right)$
\begin{equation}
\cD\left(\cF_\mathrm{RF}\right) \leq \bbE\left[ \min_{\bF_\mathrm{RF} \in \cF_\mathrm{RF}} \frac{1}{K} \sum_{k=1}^K \overline{\Phi}_\mathrm{chord}\left(\bF_\mathrm{RF}, \left\{\tilde{\bV}[k]\right\}_{k=1}^K\right)\right].
\label{eq:Dist_UP_RF}
\end{equation}

Now, we derive a similar bound on the additional distortion due to the equivalent baseband quantization given a certain RF codebook $\cD\left(\cG_\mathrm{BB} \left|\cF_\mathrm{RF}\right.\right)$. Let $\bF_\mathrm{RF}^\star \in \cF_\mathrm{RF}$ be the solution of \eqref{eq:Opt_MI_UP}, i.e., the solution of the first term in $\Delta\cI_{BB|RF}$ in \eqref{eq:Dist_HP_Terms}, and $\overline{\boldsymbol{\Sigma}}^\star[k]$ be the corresponding $\overline{\boldsymbol{\Sigma}}[k]$. As $\bF_\mathrm{RF}^\star$ represents a feasible (not necessarily the optimal) solution of the problem in \eqref{eq:Opt_Feedback_G}, then $\cD\left(\cG_\mathrm{BB} \left|\cF_\mathrm{RF}\right.\right)$ in \eqref{eq:Dist_sum} can be bounded as
\begin{align}
\cD\left(\cG_\mathrm{BB} \left| \cF_\mathrm{RF} \right. \right) &\leq \bbE\left[ \frac{1}{K}  \sum_{k=1}^{K} \left( \log_2 \left|\bI_{N_\mathrm{S}}+\frac{\rho}{N_\mathrm{S}} \left[\overline{\boldsymbol{\Sigma}}[k]\right]_{1:N_\mathrm{S}, 1:N_\mathrm{S}}^2  \right|\right. \right. \nonumber \\
&+ \left. \left.\underset{\bG[k] \in \cG_\mathrm{BB}}  \max  \log_2 \left|\bI_{N_\mathrm{S}}+\frac{\rho}{N_\mathrm{S}} \left[\overline{\boldsymbol{\Sigma}}[k]\right]_{1:N_\mathrm{S}, 1:N_\mathrm{S}}^2 \left[\overline{\bV}[k]\right]_{:, 1:N_\mathrm{S}}^* \bG[k] \bG^*[k] \left[\overline{\bV}[k]\right]_{:, 1:N_\mathrm{S}} \right| \right)\right],\\
&\stackrel{(a)}{\approx} \bbE \left[ \underset{\bG \in \cG_\mathrm{BB}}{\min}  \frac{1}{K} \sum_{k=1}^K {d}_\mathrm{chord}\left(\bG, \left[\overline{\bV}[k]\right]_{:, 1:N_\mathrm{S}}\right) \right],\\
& =\bbE\left[ \min_{\bG \in \cG_\mathrm{BB}} {\Phi}_\mathrm{chord}\left(\bG, \left\{\left[\overline{\bV}[k]\right]_{:,1:N_\mathrm{S}}\right\}_{k=1}^K\right)\right], \label{eq:Dist_UP_BB}
\end{align}
where (a) follows by considering the large mmWave MIMO approximations used in \eqref{eq:Large_Approx}. The codebook design objective is then to minimize the upper bound on the distortion function, that is given by the bounds in \eqref{eq:Dist_UP_BB}, \eqref{eq:Dist_UP_RF} on $\cD\left(\cF_\mathrm{RF}\right)+\cD\left(\cG_\mathrm{BB}\left|\cF_\mathrm{RF}\right.\right)$.

\textbf{Hybrid Codebook Construction:}
Given the distortion function upper bounds in \eqref{eq:Dist_UP_RF} and \eqref{eq:Dist_UP_BB}, we will design the RF codebook $\cF_\mathrm{RF}$ to minimize the derived bound on $\cD\left(\cF_\mathrm{RF}\right)$. Then, we will design the equivalent baseband codebook $\cG_\mathrm{BB}$ to minimize the bound on the additional distortion of the equivalent baseband precoders quantization $\cD\left(\cG_\mathrm{BB}\left|\cF_\mathrm{RF}\right.\right)$. As the distortion bounds in \eqref{eq:Dist_UP_RF} and \eqref{eq:Dist_UP_BB} are similar to the expression of the RF codebook distortion in \eqref{eq:Dist}, we  use Algorithm \ref{alg:RF_CB} to design the hybrid RF and equivalent baseband codebooks $\cF_\mathrm{RF}$ and $\cG_\mathrm{BB}$. For the RF codebook, Algorithm \ref{alg:RF_CB} will be used, but with replacing the chordal distance $d_\mathrm{chord}\left(.\right)$ in \eqref{eq:centroid}, \eqref{eq:cent_avg}, \eqref{eq:RF_Dist_bound}, \eqref{eq:RF_prob} by the generalized chordal distance between subspaces of different dimensions in \eqref{eq:MI_Loss_RF}. To build the unitary equivalent baseband precoders codebook, Algorithm \ref{alg:RF_CB} will be also used, but without step 4-c as no RF approximation is required. Even though the dependence of the distortion function $\cD\left(\cG_\mathrm{BB} \left|\cF_\mathrm{RF}\right.\right)$ on the RF codebook $\cF_\mathrm{RF}$ is relaxed in the design, i.e., the RF and baseband codebooks are sequentially designed, the developed hybrid codebooks achieve  good performance compared with the perfect channel knowledge case as will be shown in \sref{sec:Results}.

\section{Gram-Schmidt Based Greedy Hybrid Precoding} \label{sec:Greedy}

The optimal hybrid precoding design for any given RF codebook was derived in \sref{sec:Optimal}. An exhaustive search over the RF codebook, however, is still required to find the optimal RF precoder in \eqref{eq:Opt_MI_UP}. This search may be of high complexity, especially for large antenna systems. Therefore, and inspired by the optimal baseband precoder structure in \eqref{eq:Opt_BB_UP}, we develop a greedy frequency selective hybrid precoding algorithm in this section based on Gram-Schmidt orthogonalization. Different from prior work that mainly depends on heuristic ideas for the joint design of the RF and baseband precoders \cite{ElAyach2014,Alkhateeb2014,Chen2015}, we will make statements on the optimality of the proposed algorithm in some cases, even though it sequentially designs the RF and baseband precoders.

Equation \eqref{eq:Opt_MI_UP} showed that the optimal hybrid precoding based mutual information can be written as a function of the RF precoding alone. Hence, the remaining problem was to determine the best RF precoding matrix, i.e., the best $N_\mathrm{RF}$ beamforming vectors, from the RF precoding codebook $\cF_\mathrm{RF}$. This requires making an exhaustive search over the matrices codewords in $\cF_\mathrm{RF}$. A natural greedy approach to construct the hybrid precoder is to iteratively select the $N_\mathrm{RF}$ RF beamforming vectors to maximize the mutual information. In this paper, we call this  the direct greedy hybrid precoding (DG-HP) algorithm. For simplicity of exposition, we will assume that the RF beamforming vectors of the $N_\mathrm{RF}$ RF chains are to be selected from the same vector codebook $\cF^\mathrm{v}_\mathrm{RF}=\left\{\bff^\mathrm{RF}_1, ..., \bff^\mathrm{RF}_{N_\mathrm{CB}}\right\}$, but choosing unique codewords. Extensions to the case when each of the $N_\mathrm{RF}$ RF beamforming vectors is taken from a different codebook is straightforward.

The operation of the DG-HP algorithm then consists of  $N_\mathrm{RF}$ iterations. In each iteration, the RF beamforming vector from $\cF_\mathrm{RF}^\mathrm{v}$ that maximizes the mutual information at this iteration will be selected. Let the $N_\mathrm{BS}\times (i-1)$ matrix $\bF_\mathrm{RF}^{(i-1)}$ denote the RF precoding matrix at the end of the $(i-1)$th iteration. Then by leveraging the optimal baseband precoder structure in \eqref{eq:Opt_BB_UP}, the objective of the $i$th iteration is to select $\bff^\mathrm{RF}_{n} \in \cF_\mathrm{RF}^\mathrm{v}$ that solves
\begin{equation}
\cI_\mathrm{HP}^{(i)}=\underset{\bff^\mathrm{RF}_n \in \cF_\mathrm{RF}^\mathrm{v}}{\max}\frac{1}{K} \sum_{k=1}^K \sum_{\ell=1}^{i} \log_2\left(1 + \frac{\rho}{N_\mathrm{S}} \lambda_\ell\left(\bH\left[k\right] \dot{\bF}_\mathrm{RF}^{(i,n)}\left({\dot{\bF}_\mathrm{RF}^{{(i,n)}^*}}\dot{\bF}_\mathrm{RF}^{(i,n)}\right)^{-1} {\dot{\bF}_\mathrm{RF}^{{(i,n)}^*}} \bH\left[k\right]^* \right)\right),
\label{eq:DG_HP}
\end{equation}
with $\dot{\bF}_\mathrm{RF}^{(i,n)}=\left[\bF_\mathrm{RF}^{(i-1)}, \bff^\mathrm{RF}_n\right]$. The best vector $\bff^\mathrm{RF}_{n^\star}$ will be then added to the RF precoding matrix to form $\bF_\mathrm{RF}^{(i)}=\left[\bF_\mathrm{RF}^{(i-1)},  \bff^\mathrm{RF}_{n^\star}\right]$. The achievable mutual information with this algorithm is then $\cI_\mathrm{HP}^\mathrm{DG-HP}=\cI_\mathrm{HP}^{(N_\mathrm{RF})}$. The main limitation of this algorithm is that it still requires an exhaustive search over $\cF_\mathrm{RF}^\mathrm{v}$ and eigenvalues calculation in each iteration. The objective of this section is to develop a low-complexity algorithm that has a similar (or very close) performance to this DG-HP algorithm. In the next subsection, we will make the first step towards this goal by proving that a Gram-Schmidt based algorithm can lead to exactly the same performance of the DG-HP. This will be then leveraged into the design of our algorithm in \sref{subsec:Approx_GS}.

\subsection{Gram-Schmidt Based Greedy Hybrid Precoding} \label{subsec:GS_Main}

In hybrid analog/digital precoding architectures, the effective channel seen at the baseband is through the RF precoders lens. This gives the intuition that it is better for the RF beamforming vectors to be orthogonal (or close to orthogonal), as this physically means that the effective channel will have a better coverage over the dominant subspaces belonging to the actual channel matrix. This intuition is also confirmed by the structure of the optimal baseband precoder discussed in Remark 1, as the overall matrix $\bF_\mathrm{RF}\left(\bF_\mathrm{RF}^* \bF_\mathrm{RF}\right)^{-\frac{1}{2}}$ has a semi-unitary structure. Indeed, this observation can also be related to the structure of the solutions of the nearest matrix and nearest tight frame problems \cite{Choi2006,Tropp2005}. This note means that in each iteration $i$ of the greedy hybrid precoding algorithm in \eqref{eq:DG_HP} with a selected codeword $\bff^\mathrm{RF}_{n^\star}$, the additional mutual information gain over the previous iterations is due to the contribution of the component of $\bff^\mathrm{RF}_{n^\star}$ that is orthogonal on the existing RF precoding matrix $\bF_\mathrm{RF}^{(i-1)}$. Based on that, we modify the DG-HP algorithm by adding a Gram-Schmidt orthogonalization step in each iteration $i$ to project the candidate beamforming codewords on the orthogonal complement of the subspace spanned by the selected codewords in $\bF_\mathrm{RF}^{(i-1)}$. This can be simply done by multiplying the candidate vectors by the projection matrix ${\bP^{(i-1)}}^{\perp}=\left(\bI_{N_\mathrm{BS}}-\bF_\mathrm{RF}^{(i-1)}\left(\bF_\mathrm{RF}^{(i-1)^*} \bF_\mathrm{RF}^{(i-1)}\right)^{-1} \bF_\mathrm{RF}^{(i-1)^*}\right)$. Given the optimal precoder design in \eqref{eq:Opt_BB_UP}, the mutual information at the $i$th iteration of the modified Gram-Schmidt hybrid precoding (GS-HP) algorithm can be written as
\begin{align}
\overline{\cI}_\mathrm{HP}^{(i)} &=\underset{\bff^\mathrm{RF}_n \in {\cF}_\mathrm{RF}^{\mathrm{v}}}{\max}\frac{1}{K} \sum_{k=1}^K \sum_{\ell=1}^{i} \log_2\left(1 + \frac{\rho}{N_\mathrm{S}} \lambda_\ell\left(\bH\left[k\right] \overline{\bF}_\mathrm{RF}^{(i,n)}\left({\overline{\bF}_\mathrm{RF}^{(i,n)}}^* \overline{\bF}_\mathrm{RF}^{(i,n)}\right)^{-1} {\overline{\bF}_\mathrm{RF}^{(i,n)}}^* \bH\left[k\right]^* \right)\right),\label{eq:GS_HP}\\
&\stackrel{(a)}{=} \underset{\bff^\mathrm{RF}_n \in {\cF}_\mathrm{RF}^{\mathrm{v}}}{\max}\frac{1}{K} \sum_{k=1}^K \sum_{\ell=1}^{i} \log_2\left(1 + \frac{\rho}{N_\mathrm{S}} \lambda_\ell\left( \bT^{(i-1)}+ \bH[k] {\bP^{(i-1)}}^{\perp} \bff^\mathrm{RF}_n  {\bff^\mathrm{RF}_n}^* {{\bP^{(i-1)}}^{\perp}}^* \bH^*[k] \right)\right),
\label{eq:GS_HP2}
\end{align}
with $\overline{\bF}_\mathrm{RF}^{(i,n)}=\left[\bF_\mathrm{RF}^{(i-1)}, {\bP^{(i-1)}}^{\perp} \bff^\mathrm{RF}_n\right]$, and $\bT^{(i-1)}=\bH\left[k\right] \bF_\mathrm{RF}^{(i-1)}\left({\bF_\mathrm{RF}^{(i-1)}}^* \bF_\mathrm{RF}^{(i-1)}\right)^{-1} {\bF_\mathrm{RF}^{(i-1)}}^* \bH\left[k\right]^*$. Note that $\bT^{(i-1)}$ is a constant matrix at iteration $i$, and (a) follows from the Gram-Schmidt orthogonalization which allows the matrix $\overline{\bF}_\mathrm{RF}^{(i,n)}\left({\overline{\bF}_\mathrm{RF}^{(i,n)}}^* \overline{\bF}_\mathrm{RF}^{(i,n)}\right)^{-\frac{1}{2}}$ at iteration $i$ to be written as $\left[\bF_\mathrm{RF}^{(i-1)}\left({\bF_\mathrm{RF}^{(i-1)}}^* \bF_\mathrm{RF}^{(i-1)}\right)^{-\frac{1}{2}},  {\bP^{(i-1)}}^{\perp} \bff_n^\mathrm{RF} \right]$. Hence, the eigenvalues calculation in \eqref{eq:GS_HP2} can be calculated as a rank-1 update of the previous iteration eigenvalues, which reduces the overall complexity \cite{Br2006}. The best vector $\bff^\mathrm{RF}_{n^\star}$ will be then added to the RF precoding matrix to form $\bF_\mathrm{RF}^{(i)}=\left[\bF_\mathrm{RF}^{(i-1)},  \bff^\mathrm{RF}_{n^\star}\right]$. At the end of the $N_\mathrm{RF}$ iterations, the achieved mutual information is $\cI_\mathrm{HP}^\mathrm{GS-HP}=\overline{\cI}_\mathrm{HP}^{(N_\mathrm{RF})}$.  In the following proposition, we prove that this Gram-Schmidt hybrid precoding algorithm is exactly equivalent to the DG-HP algorithm.

\begin{proposition}
The achieved mutual information of the direct greedy hybrid precoding algorithm in \eqref{eq:DG_HP} and the Gram-Schmidt based hybrid precoding algorithm in \eqref{eq:GS_HP} are exactly equal, i.e., $\cI_\mathrm{HP}^\mathrm{DG-HP}=\cI_\mathrm{HP}^\mathrm{GS-HP}$.
\label{prop:GS}
\end{proposition}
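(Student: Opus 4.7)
The plan is to show that for every candidate codeword $\bff^\mathrm{RF}_n$ at every iteration $i$, the two algorithms are maximizing exactly the same quantity, so they select the same codeword and hence produce the same $\bF_\mathrm{RF}^{(i)}$ at each step. The heart of the argument is a column-space invariance observation: the objectives in \eqref{eq:DG_HP} and \eqref{eq:GS_HP2} both depend on $\dot{\bF}_\mathrm{RF}^{(i,n)}$ and $\overline{\bF}_\mathrm{RF}^{(i,n)}$ only through the orthogonal projector onto their column span.

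First I would recall that for any full column-rank matrix $\bX$, the matrix $\bX(\bX^*\bX)^{-1}\bX^*$ equals the orthogonal projector $\bP_{\mathrm{col}(\bX)}$ onto the column space of $\bX$, and therefore it depends on $\bX$ only through $\mathrm{col}(\bX)$. Next I would verify that $\dot{\bF}_\mathrm{RF}^{(i,n)}=[\bF_\mathrm{RF}^{(i-1)},\bff^\mathrm{RF}_n]$ and $\overline{\bF}_\mathrm{RF}^{(i,n)}=[\bF_\mathrm{RF}^{(i-1)},{\bP^{(i-1)}}^{\perp}\bff^\mathrm{RF}_n]$ span the same subspace, because ${\bP^{(i-1)}}^{\perp}\bff^\mathrm{RF}_n = \bff^\mathrm{RF}_n - \bP^{(i-1)}\bff^\mathrm{RF}_n$ differs from $\bff^\mathrm{RF}_n$ by a vector already in $\mathrm{col}(\bF_\mathrm{RF}^{(i-1)})$, so appending either to the existing basis yields the same column span (the degenerate case where $\bff^\mathrm{RF}_n\in \mathrm{col}(\bF_\mathrm{RF}^{(i-1)})$ is treated separately: there both candidates are uninformative and contribute no additional rank, hence no change in mutual information in either algorithm). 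Combining these two facts gives
\begin{equation}
\dot{\bF}_\mathrm{RF}^{(i,n)}\left({\dot{\bF}_\mathrm{RF}^{(i,n)}}^{*}\dot{\bF}_\mathrm{RF}^{(i,n)}\right)^{-1}{\dot{\bF}_\mathrm{RF}^{(i,n)}}^{*} = \overline{\bF}_\mathrm{RF}^{(i,n)}\left({\overline{\bF}_\mathrm{RF}^{(i,n)}}^{*}\overline{\bF}_\mathrm{RF}^{(i,n)}\right)^{-1}{\overline{\bF}_\mathrm{RF}^{(i,n)}}^{*}. \nonumber
\end{equation}
Sandwiching this identity between $\bH[k]$ and $\bH^*[k]$ shows that the argument of $\lambda_\ell(\cdot)$ in \eqref{eq:DG_HP} and in \eqref{eq:GS_HP} coincides entry by entry, so the two inner objective functions are identical for every $\bff^\mathrm{RF}_n\in\cF_\mathrm{RF}^\mathrm{v}$.

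Given this per-candidate identity, I would close the argument by induction on $i$. At $i=0$ both algorithms start with the empty precoder, so $\bF_\mathrm{RF}^{(0)}$ agrees. Assuming $\bF_\mathrm{RF}^{(i-1)}$ coincides for the two algorithms, the inner optimization in \eqref{eq:DG_HP} and \eqref{eq:GS_HP} returns identical maximizing indices $n^\star$ (any tie-breaking rule applied consistently), and both algorithms then set $\bF_\mathrm{RF}^{(i)}=[\bF_\mathrm{RF}^{(i-1)},\bff^\mathrm{RF}_{n^\star}]$, which matches. Iterating to $i=N_\mathrm{RF}$ yields $\cI_\mathrm{HP}^\mathrm{DG\text{-}HP}=\cI_\mathrm{HP}^\mathrm{GS\text{-}HP}$.

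The only mildly delicate step is the column-span equality, which must hold uniformly across candidates so that the maximizer is genuinely the same; the rest is essentially a one-line application of the fact that $\bX\mapsto \bX(\bX^*\bX)^{-1}\bX^*$ is an orthogonal-projector map that factors through $\mathrm{col}(\bX)$. No additional eigenvalue inequalities or approximations are needed, which is why the proposition yields an exact equality rather than a bound.
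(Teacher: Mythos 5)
Your proof is correct and follows essentially the same route as the paper's: induction on the iteration index, with the key step being that $\dot{\bF}_\mathrm{RF}^{(i,n)}$ and $\overline{\bF}_\mathrm{RF}^{(i,n)}$ share the same column space and therefore yield the same orthogonal projector $\bX\left(\bX^*\bX\right)^{-1}\bX^*$, hence the same per-candidate objective. The paper phrases this via an invertible elementary column-operation matrix $\bE_C$ and the pseudo-inverse identity $\dot{\bF}\pinv{\dot{\bF}}=\overline{\bF}\bE_C\bE_C^{-1}\pinv{\overline{\bF}}$, which is the same fact; your explicit treatment of the degenerate case $\bff_n^\mathrm{RF}\in\mathrm{col}(\bF_\mathrm{RF}^{(i-1)})$ is a small added point of rigor the paper leaves implicit.
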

\begin{proof}
See Appendix \ref{app:GS}.
\end{proof}

\subsection{Approximate Gram-Schmidt Based Greedy Hybrid Precoding} \label{subsec:Approx_GS}
\begin{algorithm} [!t]                     
\caption{Approximate Gram-Schmidt Based Frequency Selective Hybrid Precoding}          
\label{alg:GS_HP}                           
\begin{algorithmic} 
    \State \textbf{Initialization}
    \State 1) \begin{varwidth}[t]{\linewidth} Construct $\boldsymbol{\Pi} = \tilde{\bV}_\mathrm{\bH} \tilde{\boldsymbol{\Sigma}}_\mathrm{\bH} $, with $\tilde{\boldsymbol{\Sigma}}_\mathrm{\bH}=\text{diag}\left(\tilde{\boldsymbol{\Sigma}}_\mathrm{1}, ..., \tilde{\boldsymbol{\Sigma}}_\mathrm{K}\right)$ and $\tilde{\bV}_\mathrm{\bH}=\left[\tilde{\bV}_\mathrm{1}, ..., \tilde{\bV}_\mathrm{K}\right]$. \\ Set $\bF_\mathrm{RF}=$  Empty Matrix. Define $\bA_\mathrm{CB}=\left[{\bff}_1^\mathrm{RF}, ...,{\bff}^\mathrm{RF}_{N_\mathrm{CB}^\mathrm{v}} \right]$, where ${\bff}^\mathrm{RF}_n, n=1, ..., {N_\mathrm{CB}^\mathrm{v}}$ are \\ the codewords in ${\cF}_\mathrm{RF}$ \\
    	\end{varwidth}
    \State \textbf{RF Precoder Design}
    \State 2) \text{For}{ $i, i = 1, ..., N_\mathrm{RF}$}
        \State  $\hspace{20pt}$ a) $ \boldsymbol\Psi = \boldsymbol\Pi^*{\bA}_\mathrm{CB}$
        \State  $\hspace{20pt}$ b) $n^\star=\arg\max_{n=1,2,..N_\mathrm{CB}^\mathrm{v}} \left\|\left[\boldsymbol\Psi\right]_{:,n}\right\|_2$.
        \State  $\hspace{20pt}$ c) $\bF_\mathrm{RF}^{(i)} = \left[\bF_\mathrm{RF}^{(i-1)} \bff_{n^\star}^\mathrm{RF}\right]$
        \State $\hspace{20pt}$ d) $\boldsymbol\Pi= \left(\bI_{N_\mathrm{BS}}-\bF_\mathrm{RF}^{(i)}\left(\bF_\mathrm{RF}^{(i)^*} \bF_\mathrm{RF}^{(i)}\right)^{-1}\bF_\mathrm{RF}^{(i)^*}\right) \boldsymbol\Pi$
    \State \textbf{Digital Precoder Design}
    \State 3) $\bF[k]= \bF_\mathrm{RF}^{(N_\mathrm{RF})} \left(\bF_\mathrm{RF}^{(N_\mathrm{RF})^*} \bF_\mathrm{RF}^{(N_\mathrm{RF})} \right)^{-\frac{1}{2}} \left[\overline{\bV}[k]\right]_{:,1:N_\mathrm{S}}, k=1, ..., K$,  with $\overline{\bV}[k]$ defined in \eqref{eq:Opt_BB_UP}
\end{algorithmic}
\end{algorithm}

The main advantage of the Gram-Schmidt hybrid precoding design in \sref{subsec:GS_Main} is that it leads to a near-optimal low-complexity design of the frequency selective hybrid precoding as will be discussed in this section. Given the optimal baseband precoding solution in \eqref{eq:Opt_BB_UP}, the mutual information at the $i$th iteration in \eqref{eq:GS_HP} can be written as
\begin{align}
\overline{\cI}_\mathrm{HP}^{(i)}& = \underset{\bff^\mathrm{RF}_n \in {\cF}_\mathrm{RF}^{\mathrm{v}}}{\max}\frac{1}{K} \sum_{k=1}^K \sum_{\ell=1}^{i} \log_2\left(1 + \frac{\rho}{N_\mathrm{S}} \lambda_\ell\left(\bH\left[k\right] \overline{\bF}_\mathrm{RF}^{(i,n)}\left({\overline{\bF}_\mathrm{RF}^{(i,n)}}^* \overline{\bF}_\mathrm{RF}^{(i,n)}\right)^{-1} {\overline{\bF}_\mathrm{RF}^{(i,n)}}^* \bH\left[k\right]^* \right)\right),\\
&\stackrel{(a)}{\geq} \underset{\bff^\mathrm{RF}_n \in {\cF}_\mathrm{RF}^{\mathrm{v}}}{\max} \frac{1}{K} \sum_{k=1}^K \sum_{\ell=1}^{i} \log_2  \left( 1+\frac{\rho}{N_\mathrm{S}}  \lambda_\ell\left(\tilde{\boldsymbol{\Sigma}}[k] \tilde{\bV}^*[k] \overline{\bF}_\mathrm{RF}^{(i,n)}\left({\overline{\bF}_\mathrm{RF}^{(i,n)}}^* \overline{\bF}_\mathrm{RF}^{(i,n)}\right)^{-1} {\overline{\bF}_\mathrm{RF}^{(i,n)}}^* \tilde{\bV}[k] \tilde{\boldsymbol{\Sigma}}^*[k] \right)\right), \\
& \stackrel{(b)}{\approx} \frac{1}{K} \sum_{k=1}^K \left(\log_2\left|\bI +\frac{\rho}{N_\mathrm{S}}  \tilde{\boldsymbol{\Sigma}}[k]^2 \right| - \text{tr} \left(\tilde{\boldsymbol{\Sigma}}[k]\right) \right)  \nonumber\\  
& \hspace{120pt} + \underset{\bff^\mathrm{RF}_n \in {\cF}_\mathrm{RF}^{\mathrm{v}}}{\max} \frac{1}{K} \sum_{k=1}^K \left\|\tilde{\boldsymbol{\Sigma}}[k] \tilde{\bV}^*[k] \overline{\bF}_\mathrm{RF}^{(i,n)} \left({\overline{\bF}_\mathrm{RF}^{(i,n)}}^* \overline{\bF}_\mathrm{RF}^{(i,n)}\right)^{-\frac{1}{2}} \right\|_F^2,
\end{align}
where the bound in (a) is by considering only the first $N_\mathrm{S}$ dominant singular values of $\bH[k]$, and (b) follows from using the large mmWave MIMO approximations used in \eqref{eq:Large_Approx}. The objective then of the $i$th iteration is to select ${\bff}_n^\mathrm{RF} \in {\cF}_\mathrm{RF}$ that solves
\begin{align}
\bff^\mathrm{RF}_{n^\star}&=\underset{{\bff}^\mathrm{RF}_n \in {\cF}_\mathrm{RF}^\mathrm{v}}{\arg\max}\frac{1}{K} \sum_{k=1}^K \left\|\tilde{\boldsymbol{\Sigma}}[k] \tilde{\bV}^*[k] \overline{\bF}_\mathrm{RF}^{(i,n)} \left({\overline{\bF}_\mathrm{RF}^{(i,n)}}^* \overline{\bF}_\mathrm{RF}^{(i,n)}\right)^{-\frac{1}{2}} \right\|_F^2, \\
& \stackrel{(a)}{=}\left\|\tilde{\boldsymbol{\Sigma}}_\mathrm{\bH} \tilde{\bV}_\mathrm{\bH}^* \bF^{(i-1)}_\mathrm{RF} \left(\bF^{(i-1)^*}_\mathrm{RF} \bF^{(i-1)}_\mathrm{RF}\right)^{-\frac{1}{2}} \right\|_F^2+ \underset{{\bff}^\mathrm{RF}_n \in {\cF}_\mathrm{RF}^\mathrm{v}}{\arg\max}\left\|\tilde{\boldsymbol{\Sigma}}_\mathrm{\bH} \tilde{\bV}_\mathrm{\bH}^* {\bP^{(i-1)}}^\perp {\bff}_n^\mathrm{RF}\right\|_2^2, \label{eq:Opt_GS_Simple}
\end{align}
where $\tilde{\boldsymbol{\Sigma}}_\mathrm{\bH}=\text{diag}\left(\tilde{\boldsymbol{\Sigma}}[1], ..., \tilde{\boldsymbol{\Sigma}}[K]\right)$, $\tilde{\bV}_\mathrm{\bH}=\left[\tilde{\bV}[1], ..., \tilde{\bV}[K]\right]$, and (a) is a result of the Gram-Schmidt processing which makes $\overline{\bF}_\mathrm{RF}^{(i,n)} \left({\overline{\bF}_\mathrm{RF}^{(i,n)}}^* \overline{\bF}_\mathrm{RF}^{(i,n)}\right)^{-\frac{1}{2}}$ with $\overline{\bF}_\mathrm{RF}^{(i,n)}=\left[\bF_\mathrm{RF}^{(i-1)}, {\bP^{(i-1)}}^\perp \bff_n^\mathrm{RF}\right]$ at the $i$th iteration equals to $\left[\bF^{(i-1)}_\mathrm{RF} \left(\bF^{(i-1)^*}_\mathrm{RF} \bF^{(i-1)}_\mathrm{RF}\right)^{-\frac{1}{2}}, {\bP^{(i-1)}}^\perp {\bff}_n^{(i)}\right]$. The problem in \eqref{eq:Opt_GS_Simple} is simple to solve with just a maximum projection step. We call this algorithm the approximate Gram-Schmidt hybrid precoding (Approximate GS-HP) algorithm. As shown in  Algorithm \ref{alg:GS_HP}, the developed algorithm sequentially build the RF and baseband precoding matrices in two separate stages. First, the RF beamforming vectors are iteratively selected to solve \eqref{eq:Opt_GS_Simple}. Then, the baseband precoder is optimally designed according to \eqref{eq:Opt_BB_UP}. Despite its sequential design of the RF and baseband precoders, which reduces the complexity when compared with prior solutions that mostly depend on the joint design of the baseband and RF precoding matrices \cite{ElAyach2014,Alkhateeb2014}, Algorithm \ref{alg:GS_HP} achieves a significant gain over prior solutions, and gives a very close performance to the optimal solution in \eqref{eq:Opt_MI_UP}, as will be shown in \sref{sec:Results}. In fact, for some special cases like the case when $N_\mathrm{S}=N_\mathrm{RF}$, Algorithm \ref{alg:GS_HP} can be proved to provide the optimal baseband and RF precoding design of the problem $\underset{\bF_\mathrm{RF} \in {\cF}_\mathrm{RF}, \left\|\bF_\mathrm{RF} \bF\left[k\right]\right\|^2_F\leq N_\mathrm{RF}}{\max} \frac{1}{K} \sum_{k=1}^K \left\|\tilde{\boldsymbol{\Sigma}}\left[k\right]\tilde{\bV}^*\left[k\right] \bF_\mathrm{RF} \bF\left[k\right]\right\|^2_F$ which has been an important optimization objective for many hybrid precoding papers \cite{ElAyach2014,Alkhateeb2014}.   

\subsection{Total Feedback Overhead}

In this subsection, we summarize the feedback overhead associated with the proposed hybrid precoding strategies in \sref{sec:Codebook} and \sref{sec:Greedy} as illustrated in Table \ref{tab:FB}. 

 In \sref{sec:Codebook}, we develop an algorithm to construct efficient codebooks $\cF_\mathrm{RF}$ for RF precoding \textit{matrices}. Hence, the MS will need to feedback the index of the best RF precoding codeword, i.e., $B_\mathrm{RF}=\log_2\left|\cF_\mathrm{RF}\right|$ bits. For the baseband precoders, we found in \sref{sec:Optimal} that the optimal baseband precoder can be written in terms of the RF precoder and a semi-unitary matrix (the equivalent baseband precoder). In the case when $N_\mathrm{S}=N_\mathrm{RF}$, though, this equivalent baseband precoder takes a unitary structure, and the spectral efficiency is invariant to this equivalent baseband precoder. Hence, no baseband feedback bits are needed in this case. If $N_\mathrm{S}<N_\mathrm{RF}$, $\log_2\left|\cG_\mathrm{BB}\right|$ bit will be needed for each subcarrier.

 In \sref{subsec:Approx_GS}, we developed a greedy hybrid precoding algorithm that requires only a per RF beamforming \textit{vector} codebook $\cF_\mathrm{RF}^\mathrm{v}$. Hence, the index of the best codeword for each RF beamforming vectors will be fed back, i.e., a total of $B_\mathrm{RF}= N_\mathrm{RF} \log_2 \left|\cF_\mathrm{RF}^\mathrm{v}\right|$ bits. The baseband feedback bits are similar to the other scheme.
 \begin{table}[t!]
 	\caption{Total feedback overhead with the proposed limited feedback hybrid precoding strategies}
 	\begin{center}
 		\begin{tabular}{ | c || c | c | c| }
 			\hline
 			\text{Hybrid Precoders Quantization Scheme} & \text{RF Bits $B_\mathrm{RF}$} & \multicolumn{2}{c|}{\text{Baseband Bits $B_\mathrm{BB}$}} \\ \cline{3-4}
 			{} & {} & $N_\mathrm{S}= N_\mathrm{RF}$ & $N_\mathrm{S} < N_\mathrm{RF}$ \\ \hline
 			\makecell{\text{RF matrices quantization with $\cF_\mathrm{RF}$ codebook and baseband matrices} \\ quantization with $\cG_\mathrm{BB}$ codebook as in \sref{sec:Codebook}}& $\log_2\left|\cF_\mathrm{RF}\right|$  &  0  & $K \log_2|\cG_\mathrm{BB}|$ \\ \hline
 			\makecell{\text{RF vectors quantization with $\cF^\mathrm{v}_\mathrm{RF}$ codebook and baseband matrices} \\ quantization with $\cG_\mathrm{BB}$ codebook as in \sref{sec:Codebook}} & $ N_\mathrm{RF} \log_2\left|\cF^\mathrm{v}_\mathrm{RF}\right|$  &  0  & $K \log_2|\cG_\mathrm{BB}|$ \\
 			\hline
 		\end{tabular}
 	\end{center}
 	\label{tab:FB}
 \end{table}
\section{Simulation Results} \label{sec:Results}
\begin{figure}[t]
	\centering
	\subfigure[center][{}]{
		\includegraphics[width=0.475\columnwidth]{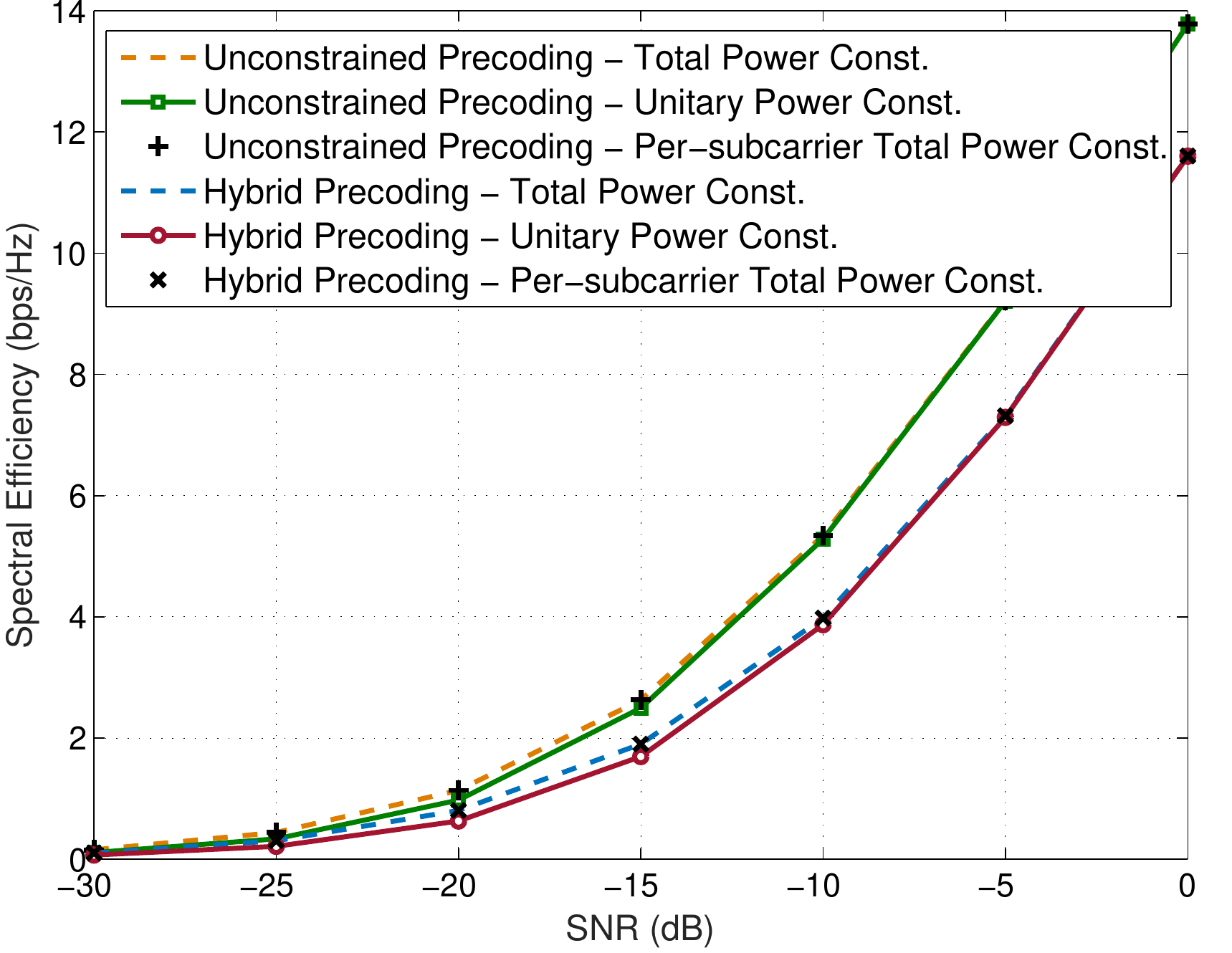}
		\label{fig:Power_SNR}}
	\subfigure[center][{}]{
		\includegraphics[width=0.475\columnwidth]{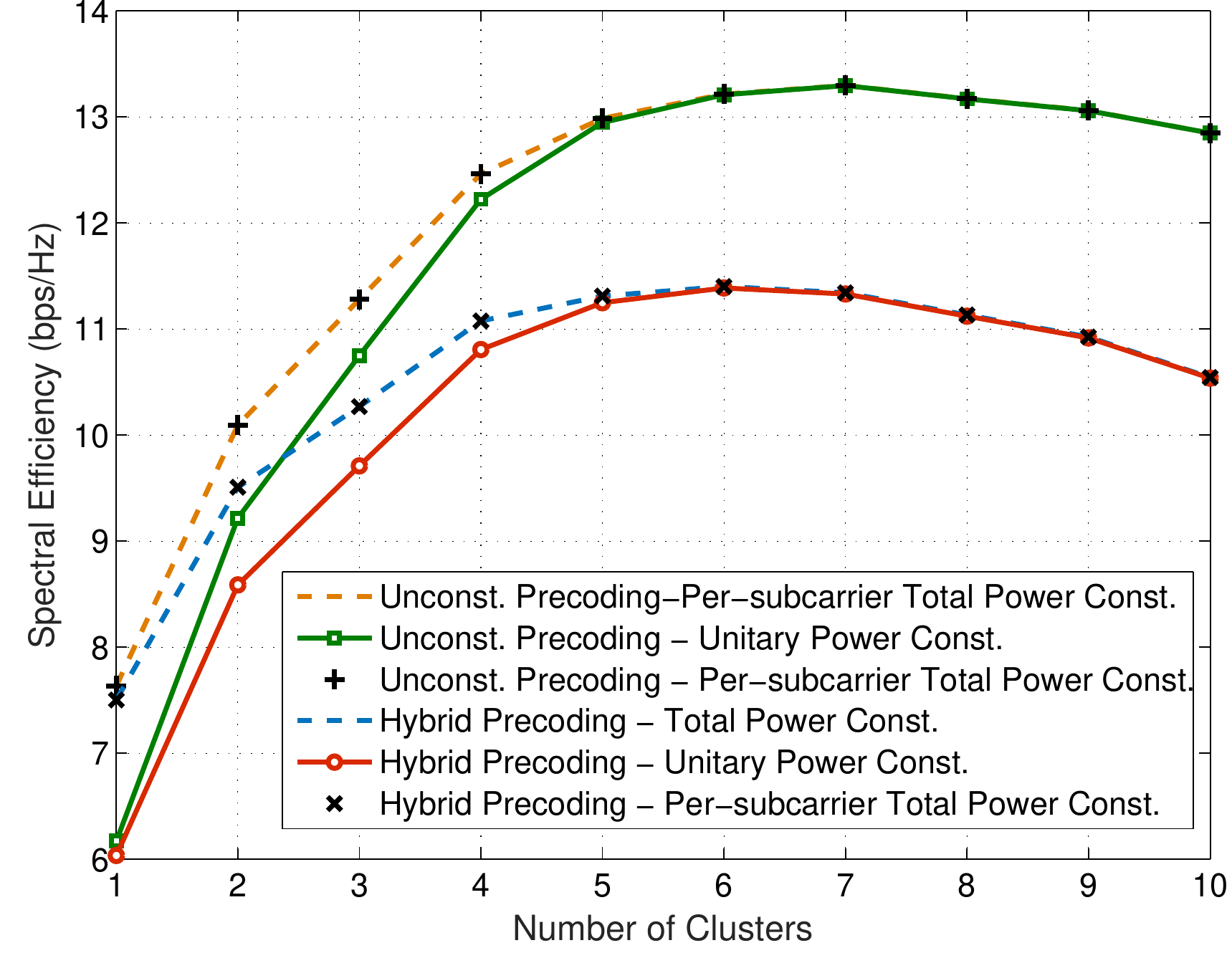}
		\label{fig:Power_Clusters}}
	\caption{The performance of the optimal hybrid precoding design under different power constraints in Proposition \ref{prop:Opt_TP}, Corollary \ref{cor:OPT_TPS}, and Proposition \ref{prop:Opt_UP} versus the SNR in (a) and versus the number of channel clusters with SNR $=0$ dB in (b). The adopted system model has $N_\mathrm{BS}=32$ antennas, $N_\mathrm{MS}=8$ antennas, and $N_\mathrm{S}=N_\mathrm{RF}=3$.}
	\label{fig:Fig1x}
\end{figure}

In this section, we validate our analytical results and evaluate the performance of the proposed codebooks and hybrid precoding designs using numerical simulations. We adopt the wideband mmWave channel as in \eqref{eq:d_Channel}-\eqref{eq:k_Channel}, where a raised-cosine filter is adopted for the pulse shaping function \cite{Schniter2014}, i.e., $p_{\mathrm{rc}}\left(t\right)$ is modeled as
\begin{equation}
p_{\mathrm{rc}}\left(t\right)=\begin{cases} \frac{\pi}{4} \ \sinc\left(\frac{1}{2 \beta}\right), 
& t=\pm \frac{T_\mathrm{s}}{2 \beta} \\ \sinc\left(\frac{t}{T_\mathrm{s}}\right)  \frac{\cos\left(\frac{\pi \beta t}{T_\mathrm{s}}\right)}{1-\left(\frac{2 \beta t}{T_\mathrm{s}}\right)^2},
& \text{otherwise},
\end{cases}
\end{equation} 
with $T_\mathrm{s}$ the sampling time and the roll-off factor $\beta=1$. The number of clusters is assumed to be $L=6$, and the center AoAs/AoDs of the $L$ clusters $\theta_\ell, \phi_\ell$ are assumed to be uniformly distributed in $[0, 2 \pi)$. Each cluster has  $R_\ell=5$ rays  with Laplacian distributed AoAs/AoDs \cite{Forenza2007,ElAyach2014}, and angle spread of $10^\mathrm{o}$. The number of system subcarriers $K$ equals $512$, and the cyclic prefix length is $D=128$, which is similar to 802.11ad \cite{11ad}. The paths delay is uniformly distributed in $[0, D T_\mathrm{s}]$. While the proposed algorithms and codebooks are general for large MIMO channels, we assume in these simulations that both the BS and MS has a ULA with $N_\mathrm{RF}=3$. Hence, $\ba_\mathrm{BS}\left(\phi\right)$ is defined as
\begin{align}
\begin{split}
\ba_\mathrm{BS}\left(\phi\right) =  \frac{1}{\sqrt{N_\mathrm{BS}}} \left[ 1, e^{\j{\frac{2\pi}{\lambda}}d_\mathrm{s}\sin\ \left( \phi \right)}, ..., e^{\j\left(N_\mathrm{BS} -1\right){\frac{2\pi}{\lambda}}d_\mathrm{s}\sin \left( \phi \right)} \right]^\mathrm{T},  \label{eq:steering}
\end{split}
\end{align}
where $\lambda$ is the signal wavelength, and $d_\mathrm{s}$ is the distance between antenna elements with $d_\mathrm{s}=\lambda/2$. The array response vectors at the MS, $\ba_\mathrm{MS}\left(\theta\right)$, can be written in a similar fashion.

\subsection{Optimal Hybrid Precoders and Codebook Designs}
First, we compare the performance of the optimal hybrid precoders and the fully-digital unconstrained precoders for different power constraints in \figref{fig:Power_SNR}-\figref{fig:Power_Clusters}. For these figures, we adopt the system model in \sref{sec:Model} with $N_\mathrm{BS}=32$ antennas, $N_\mathrm{MS}=8$ antennas, and $N_\mathrm{S}=N_\mathrm{RF}=3$ streams. In \figref{fig:Power_SNR}, the spectral efficiencies of the optimal hybrid precoders under total power constraints, per-subcarrier total power constraints, and unitary power constraints are plotted, and compared with the spectral efficiencies of the SVD unconstrained precoding under the same power constraints. \figref{fig:Power_SNR} shows that the gain of total power constraints over unitary constraints is limited, and decreases with the SNR. The same setup is adopted again in \figref{fig:Power_Clusters} where the optimal hybrid precoders under different power constraints are compared for different numbers of channel clusters, assuming that each cluster contributes with a single ray, and fixing the number of transmitted streams at $N_\mathrm{S}=3$. \figref{fig:Power_Clusters} illustrates that the gain of total power constraints over unitary constraints increases when the channel is very sparse, i.e., when a very small number of clusters exist. This gain, though, is very small if the channel has more than 4-5 clusters.

\begin{figure}[t]
	\centering
	\subfigure[center][{}]{
		\includegraphics[width=0.475\columnwidth]{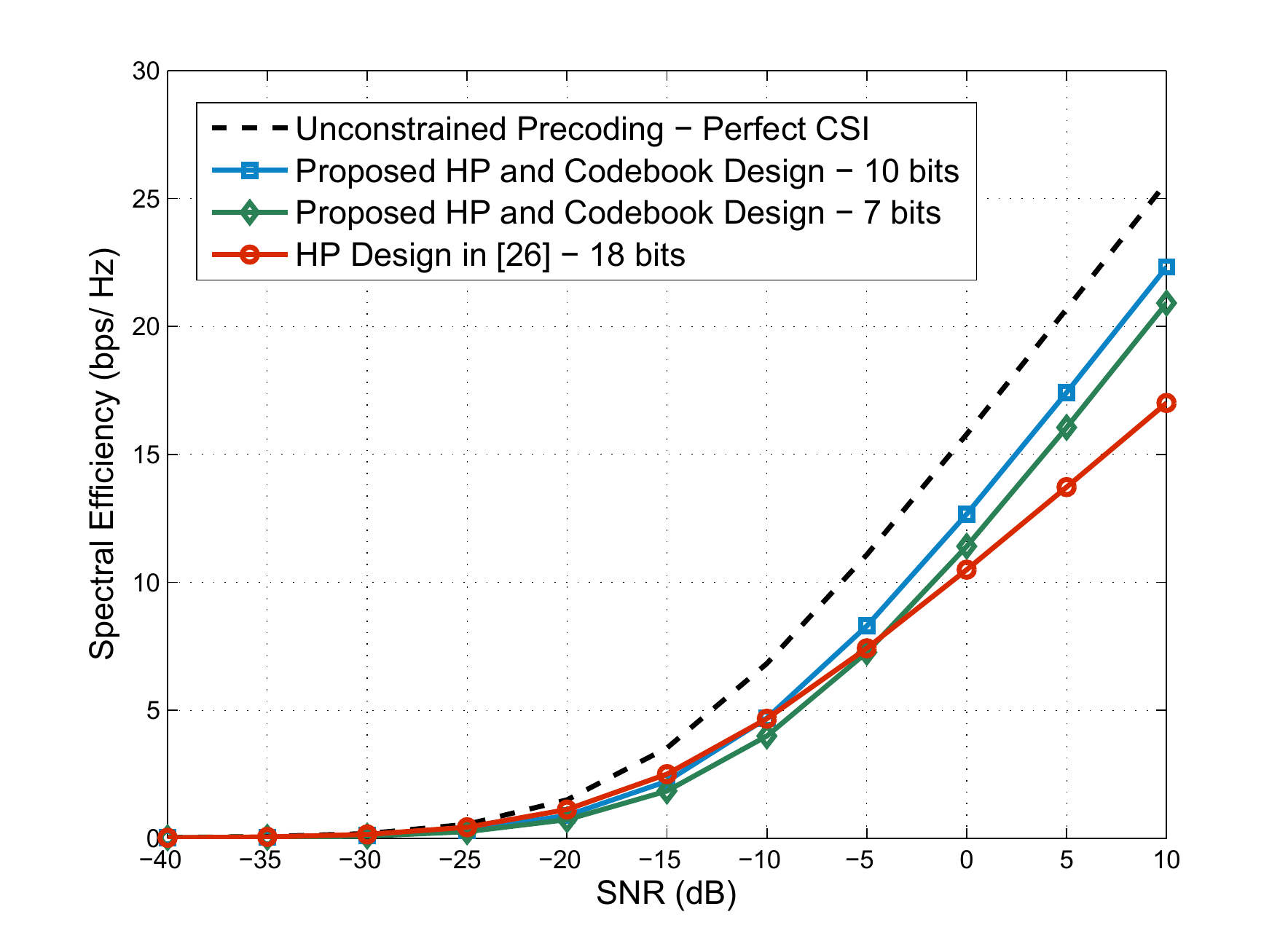}
		\label{fig:Fig2a}}
	\subfigure[center][{}]{
		\includegraphics[width=0.475\columnwidth]{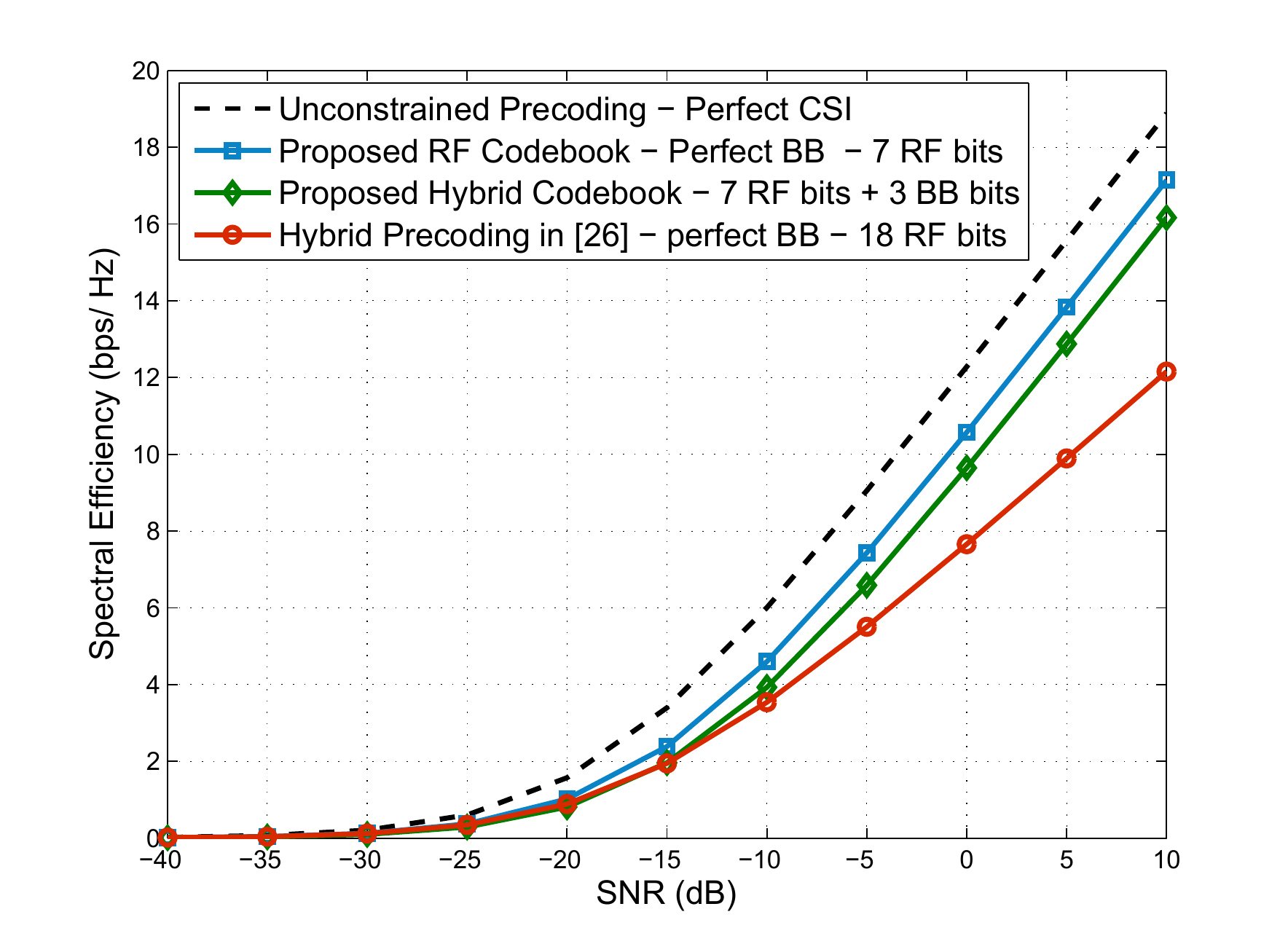}
		\label{fig:Fig2b}}
	\caption{The performance of the proposed hybrid codebook design in Algorithm \ref{alg:RF_CB}, compared with the unconstrained SVD solution, and the prior work in \cite{ElAyach2014}, for the case when $N_\mathrm{S}=N_\mathrm{RF}=3$ in (a), and the case $N_\mathrm{S}=2, N_\mathrm{RF}=3$ in (b). }
	\label{fig:Fig2}
\end{figure}
Next, we evaluate the performance of the proposed hybrid precoding codebooks in \figref{fig:Fig2}, adopting the system model in \figref{fig:Model} with $N_\mathrm{BS}=32$ antennas, and $N_\mathrm{MS}=16$ antennas. In \figref{fig:Fig2a}, the case $N_\mathrm{S}=N_\mathrm{RF}=3$ is considered, the RF codebook is constructed using Algorithm \ref{alg:RF_CB} with different sizes, and the hybrid precoders are designed according to \eqref{eq:Opt_MI_UP}. \figref{fig:Fig2a} shows that the proposed codebook improves the performance compared with the prior work in \cite{ElAyach2014}, even though much smaller numbers of feedback bits are needed, namely $10$ and $7$ bits compared with $18$ bits in the case of beamsteering codebooks in \cite{ElAyach2014,Alkhateeb2014}. In \figref{fig:Fig2b}, the same setup is considered again, but with $N_\mathrm{S}=2$ streams and $N_\mathrm{RF}=3$ RF chains. In this case, the hybrid codebooks are constructed as explained in \sref{subsec:Ns}, with an RF codebook of size $128$ and an equivalent baseband precoders codebook of size $8$. The figure shows that a very  good performance can be also achieved with the designed codebook, despite the relatively small codebook sizes. Further, \figref{fig:Fig2} shows that the proposed limited feedback hybrid precoding codebooks achieve a good slope with the SNR relative to the unconstrained with perfect channel knowledge solution.

\begin{figure}[t]
	\centerline{
		\includegraphics[scale=.6]{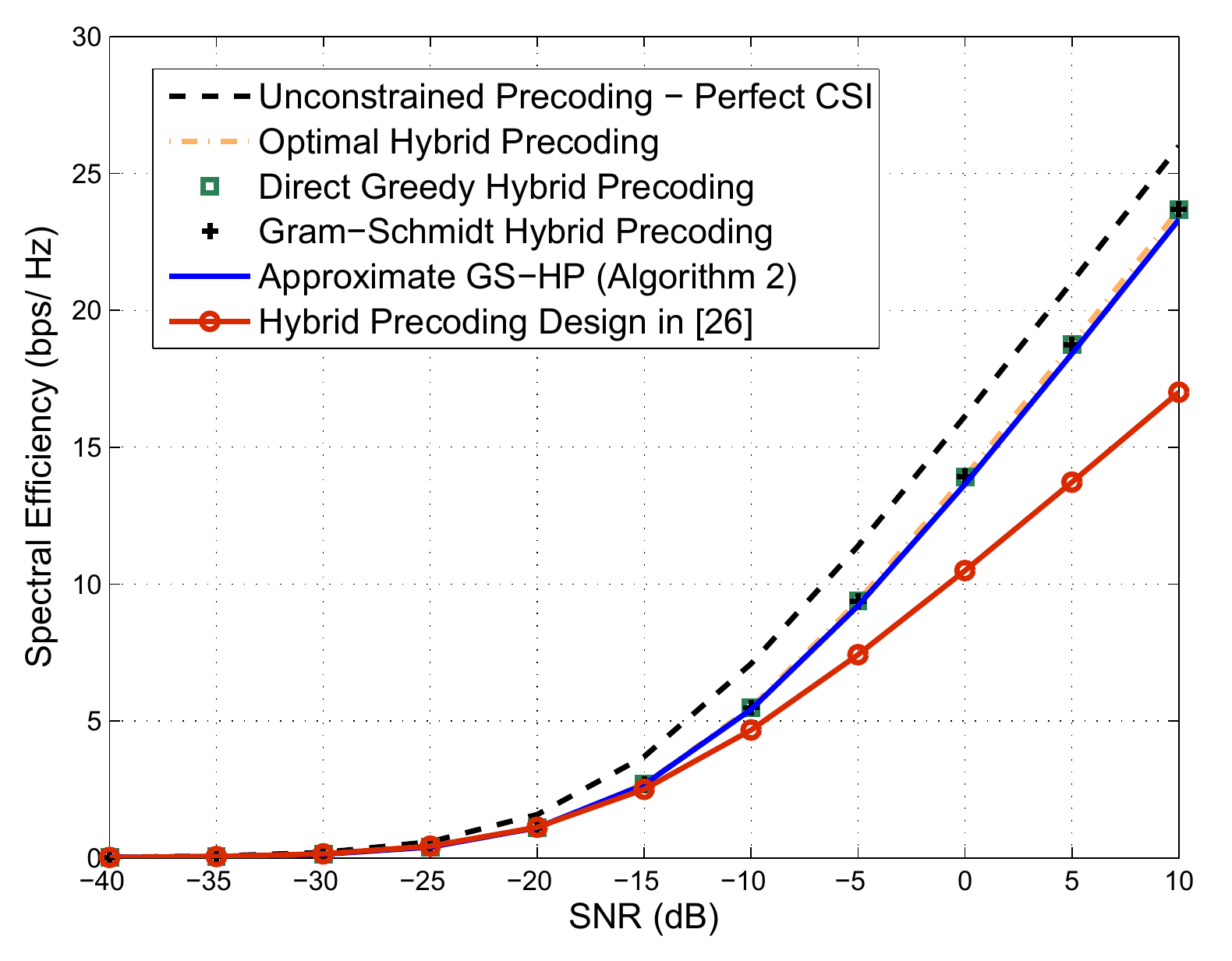}
	}
	\caption{The performance of the approximate Gram-Schmidt hybrid precoding design in Algorithm \ref{alg:GS_HP} compared with the optimal hybrid precoding solution in \eqref{eq:Opt_MI_UP}, the unconstrained SVD solution, and the prior work in \cite{ElAyach2014}. The system has  $N_\mathrm{BS}=32$ antennas, $N_\mathrm{MS}=16$ antennas, and $N_\mathrm{S}=N_\mathrm{RF}=3$.}
	\label{fig:Fig4}
\end{figure}
\subsection{Low-Complexity Gram-Schmidt Based Greedy Hybrid Precoding}
\begin{figure}[t]
	\centerline{
		\includegraphics[scale=.6]{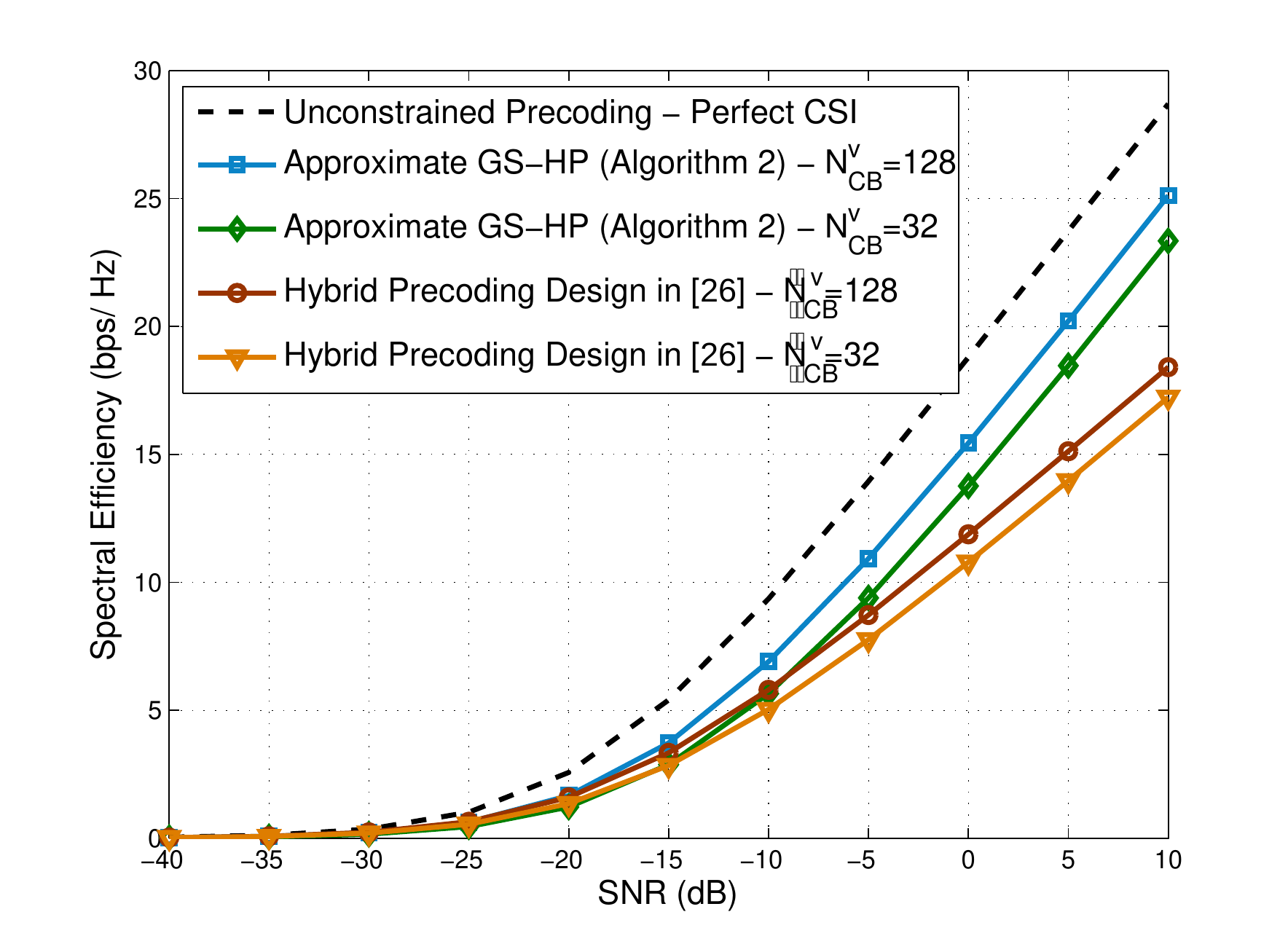}
	}
	\caption{The performance of the approximate Gram-Schmidt hybrid precoding design in Algorithm \ref{alg:GS_HP} for different codebook sizes, compared with the unconstrained SVD solution, and the prior work in \cite{ElAyach2014}. The system has  $N_\mathrm{BS}=64$ antennas, $N_\mathrm{MS}=16$ antennas, and $N_\mathrm{S}=N_\mathrm{RF}=3$.}
	\label{fig:Fig4x}
\end{figure}
In \figref{fig:Fig4} and \figref{fig:Fig4x}, we validate the result in Proposition \ref{prop:GS}, in addition to evaluating the approximate Gram-Schmidt based hybrid precoding algorithm. In \figref{fig:Fig4}, the same setup of \figref{fig:Fig2a} is adopted, and the hybrid precoders are greedily constructed using the direct greedy hybrid precoding algorithm in \eqref{eq:DG_HP}, the Gram-Schmidt hybrid precoding in \eqref{eq:GS_HP}, and the low-complexity approximate Gram-Schmidt hybrid precoding design in Algorithm \ref{alg:GS_HP}. The spectral efficiencies achieved by these greedy algorithms are compared with the optimal hybrid precoding design in \eqref{eq:Opt_MI_UP} where the RF precoders are selected through an exhaustive search over the RF beamforming vectors codebook. The rates are also compared with the prior solution in \cite{ElAyach2014}. For a fair comparison, we assume that each RF beamforming vector is selected from a beamsteering codebook with a size $N_\mathrm{CB}^\mathrm{v}=64$. First, \figref{fig:Fig4} shows that the direct greedy and Gram-Schmidt based hybrid precoding algorithms achieve exactly the same performance which verifies Proposition \ref{prop:GS}. Their performance is also shown to be almost equal to the optimal solution given by \eqref{eq:Opt_MI_UP}. Despite its low-complexity, the developed approximate Gram-Schmidt hybrid precoding design in Algorithm \ref{alg:GS_HP} achieves a very close performance to the optimal solution. We emphasize here that any hybrid precoding design can not perform better that the shown optimal hybrid precoding solution with the considered RF codebook, which confirms the near-optimal result of the proposed algorithm. This is also clear in the considerable gain obtained by the proposed algorithm compared with the prior solution in \cite{ElAyach2014}. Also, it is worth mentioning that the developed hybrid precoding algorithms in this paper can be applied to any large MIMO system (not specifically mmWave systems). The same setup is considered in \figref{fig:Fig4x}, but with $N_\mathrm{BS}=64$ antennas. \figref{fig:Fig4x} illustrates the gain achieved by Algorithm \ref{alg:GS_HP} compared with the designs in \cite{ElAyach2014} for different codebook sizes.

\begin{figure}[t]
	\centerline{
		\includegraphics[scale=.45]{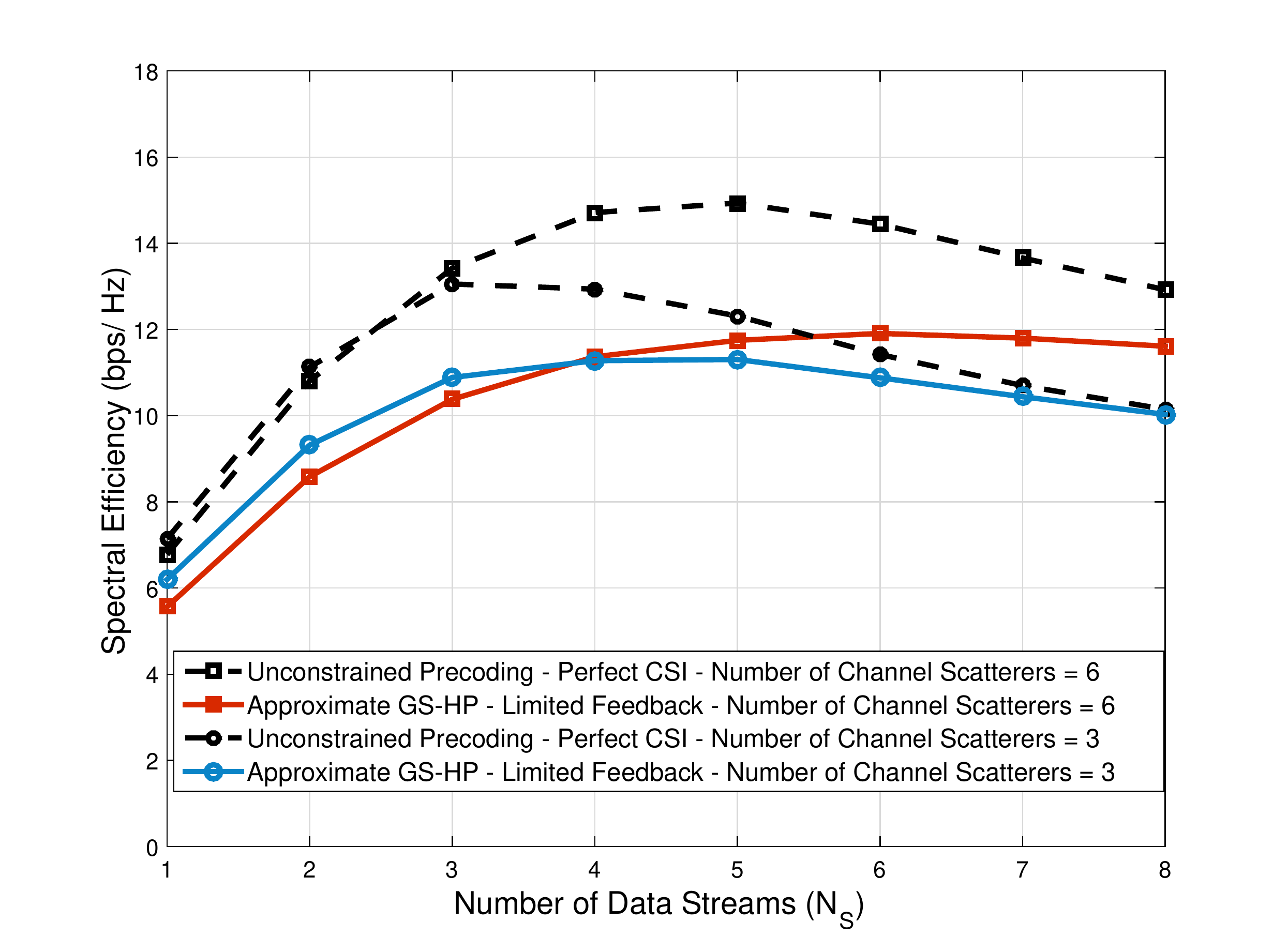}
	}
	\caption{The performance of the approximate Gram-Schmidt hybrid precoding design in Algorithm \ref{alg:GS_HP} compared with the fully-digital SVD solution for different numbers of data streams.  The system has  $N_\mathrm{BS}=32$ antennas, $N_\mathrm{MS}=8$ antennas, and $N_\mathrm{RF}=N_\mathrm{S}$.}
	\label{fig:Fig5}
\end{figure}

In \figref{fig:Fig5}, we evaluate the performance of the proposed approximate Gram-Schmidt hybrid precoding algorithm compared with the digital unconstrained solution for different numbers of transmitted data streams. In this figure, we adopt the same setup of \figref{fig:Fig2a}, but with $N_\mathrm{BS}=32$ antennas and $N_\mathrm{MS}=8$ antennas. Further, each RF beamforming vector is selected from a beamsteering codebook with a size $N_\mathrm{CB}^\mathrm{v}=128$. The number of RF chains are assumed to be equal to the number of data streams. First, \figref{fig:Fig5} shows that the performance of the both the unconstrained precoding and the hybrid precoding increases then decreases again with the number of data streams. This decrease with large numbers of transmitted data streams is a result of the sparse mmWave channels and the equal power allocation among the different streams, which causes some power to be allocated to less important multi-path components. The solution to this problem is what is called multi-mode precoding \cite{Love2005a,Lee2009}. Further, this figure illustrates that the difference between the proposed hybrid precoding algorithm and digital SVD solution is small at both the small number of streams and the large number of streams regimes, which also follows from the sparsity of mmWave channels. 

\subsection{Gain of RF Chains}

\begin{figure}[t]
	\centerline{
		\includegraphics[scale=.58]{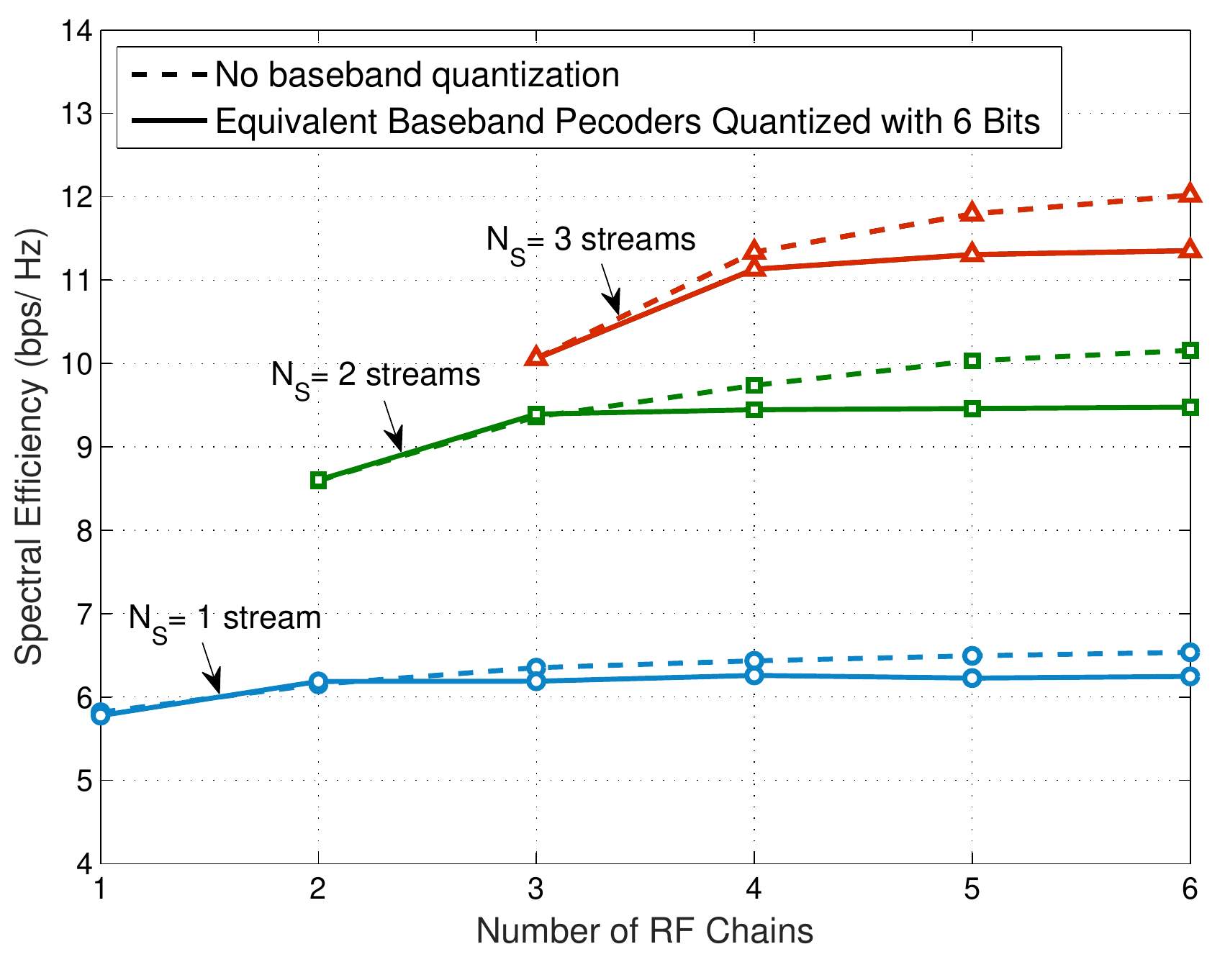}
	}
	\caption{The performance of the approximate Gram-Schmidt hybrid precoding design in Algorithm \ref{alg:GS_HP} compared with the fully-digital SVD solution for different numbers of data streams.  The system has  $N_\mathrm{BS}=32$ antennas, $N_\mathrm{MS}=8$ antennas, and $N_\mathrm{RF}=N_\mathrm{S}$.}
	\label{fig:Fig6}
\end{figure}
 
 In Table \ref{tab:FB}, we summarize the required feedback overhead for the limited feedback operation of  OFDM-based hybrid precoding systems. Table \ref{tab:FB} shows that when the number of transmitted streams equals the number of RF chains, $N_\mathrm{S}=N_\mathrm{RF}$, then only the feedback bits that correspond to the RF precoding codeword need to be fed back to the transmitter. Note that the reason is not that we only need RF beamforming for this case, but because the optimal baseband precoder, as obtained by Proposition \ref{prop:Opt_UP}, can be written as a matrix that depends only on the RF precoder multiplied by a unitary matrix. Table \ref{tab:FB} also shows that the number of feedback bits  scales linearly with the number of subcarriers if $N_\mathrm{S}< N_\mathrm{RF}$. It is, therefore, interesting to evaluate the gain of employing more RF chains than the number of streams, as achieving this gain requires considerable feedback overhead. In \figref{fig:Fig6}, we plot the spectral efficiency achieved with the proposed Gram-Schmidt greedy hybrid precoding versus the number of RF chains for $N_\mathrm{S}=1, 2$, and $3$ streams. The RF beamforming vectors are quantized with $5$ bits and the equivalent baseband precoders are quantized with $6$ bits. This figure shows that the spectral efficiency gain of having more RF chains saturates after a few RF chains. \figref{fig:Fig6} also illustrates that having a number of RF chains $N_\mathrm{RF}= 2 N_\mathrm{S}$ achieves less than $20 \%$ gain, with the cost of much more feedback overhead. The required numbers of feedback bits and the achievable spectral efficiency are also listed in Table \ref{tab:FB_Fig6} for the case $N_\mathrm{S}=2$ streams. Table \ref{tab:FB_Fig6} shows that $3102$ bits are needed to achieve $9.7$ bps/Hz spectral efficiency when $6$ RF chains are employed, while only $10$ bits are enough to obtain $8.6$ bps/Hz when $N_\mathrm{RF}=N_\mathrm{S}=2$. These results indicate that activating a number of RF chains equals to the number of data streams may be a good technique to reduce the feedback overhead and increase the feasibility of limited feedback operation in hybrid precoding based wideband mmWave systems.  

\begin{table}[t!]
 	\caption{Required feedback overhead for the hybrid precoding transmission in \figref{fig:Fig6} with $N_\mathrm{S}=2$ streams}
 \begin{center}
 	\begin{tabular}{ | c | c | c | c | c | c| c | }
 		\hline
 		\text{$\vphantom{\frac{\bX}{\bY}}$Number of RF chains }$N_\mathrm{RF}$ & 2 & 3 & 4 & 5 & 6 \\ \hline
 		\text{$\vphantom{\frac{\bX}{\bY}}$RF feedback bits} $ N_\mathrm{RF} \log_2\left|\cF^\mathrm{v}_\mathrm{RF} \right|$  & 10 \text{bits}& 15 \text{bits}& 20 \text{bits}& 25 \text{bits}& 30 \text{bits} \\ \hline
 		\text{$\vphantom{\frac{\bX}{\bY}}$Baseband feedback bits per subcarrier}  & 0  \text{bits}& 6 \text{bits} & 6 \text{bits}& 6 \text{bits}& 6 \text{bits}\\
 		\hline
 		\text{$\vphantom{\frac{1\bX}{\bY}}$Total feedback bits for $512$ subcarriers}& 10 \text{bits} & 3087 \text{bits} & 3092 \text{bits} & 3097 \text{bits} & 3102 \text{bits} \\ \hline
 		\text{$\vphantom{\frac{\bX}{\bY}}$Spectral efficiency with $N_\mathrm{S}=2$ streams} &8.6 \text{bps/Hz} & 9.4 \text{bps/Hz} & 9.5 \text{bps/Hz} & 9.6 \text{bps/Hz} & 9.7 \text{bps/Hz} \\ \hline
 	\end{tabular}
 \end{center}
 	\label{tab:FB_Fig6}
 \end{table}

\section{Conclusion} \label{sec:Conclusion}
In this paper, we investigated limited feedback hybrid precoding design for wideband mmWave systems. First, we derived the optimal hybrid precoding design that maximizes the achievable mutual information for any given RF codebook, and showed that the optimal baseband structure can be decomposed into an RF precoder dependent matrix and a unitary matrix. This indicated that when the number of data streams equals to the number of RF chains, only the feedback of the RF precoder index is sufficient to achieve the maximum mutual information. Exploiting the structure of the optimal hybrid precoders, we also showed that the codebook of the equivalent baseband precoders should have a unitary structure. These notes led to efficient hybrid analog/digital precoders codebooks for spatial multiplexing in wideband mmWave systems. Further, we developed a novel greedy hybrid precoding algorithm based on Gram-Schmidt orthogonalization. Thanks to this Gram-Schmidt orthogonalization, we showed that only sequential design of the RF and baseband precoders is required to achieve the same performance of more sophisticated algorithms that requires a joint design of the RF and baseband precoders in each step. Simulation results illustrated that the proposed codebook and precoding algorithms improve over prior work and stay within a small gap from the unconstrained perfect channel knowledge solutions. For future work, it would be interesting to exploit the frequency correlation of mmWave channels to reduce the feedback overhead corresponding to the baseband precoders, which can enhance the feasibility of limited feedback in mmWave systems. Also, another important extension would be to design efficient hybrid precoding codebooks for wideband multi-user millimeter wave systems.

\appendices
\section{}\label{app:prop_OPT_TP}

\begin{proof}[Proof of Proposition \ref{prop:Opt_TP}] Consider the inner problem of \eqref{eq:Opt_CSI_IO}, we first state the following lemma regarding this problem which is inspired by the orthogonalization concept of the nearest matrix problem in \cite{Choi2006}.
\begin{lemma}
Making the change of variables $\bF[k]=\left(\bF_\mathrm{RF}^* \bF_\mathrm{RF}\right)^{-\frac{1}{2}} \bG[k]$, where we call $\bG[k]$ the \textit{equivalent} baseband precoder, the inner problem of \eqref{eq:Opt_CSI_IO} is equivalent to the following problem.
\begin{align}
& \underset{ \left\{\bG[k]  \right\}_{ k=1}^K} \max \ \ \frac{1}{K} \sum_{k=1}^{K} \log_2 \left|\bI_{N_\mathrm{S}}+\frac{\rho}{N_\mathrm{S}} \bH[k] \bF_\mathrm{RF} \left(\bF_\mathrm{RF}^* \bF_\mathrm{RF}\right)^{-\frac{1}{2}} \bG[k]  \bG^*[k] \left(\bF_\mathrm{RF}^* \bF_\mathrm{RF}\right)^{-\frac{1}{2}} \bF_\mathrm{RF}^*  \bH^*[k] \right| \nonumber \\
& \hspace{10pt} \text{s.t.} \ \ \ \ \ \sum_{k=1}^K \left\| \bG[k] \right\|_F^2=K N_\mathrm{S}.
\end{align}
\label{lemma1}
\end{lemma}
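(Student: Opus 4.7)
The plan is to introduce the change of variables $\bG[k] := (\bF_\mathrm{RF}^* \bF_\mathrm{RF})^{1/2} \bF[k]$ and show that (i) this is a bijection on the feasible set and (ii) the objective and the power constraint transform into the stated forms. First I would observe that since $\bF_\mathrm{RF}$ has constant-modulus entries and $N_\mathrm{RF} \le N_\mathrm{BS}$, we can (without loss of generality) assume $\bF_\mathrm{RF}$ has full column rank; hence $\bF_\mathrm{RF}^* \bF_\mathrm{RF}\in\mathbb{C}^{N_\mathrm{RF}\times N_\mathrm{RF}}$ is Hermitian positive definite, so its Hermitian square root $(\bF_\mathrm{RF}^*\bF_\mathrm{RF})^{1/2}$ and its inverse $(\bF_\mathrm{RF}^*\bF_\mathrm{RF})^{-1/2}$ are well defined and invertible. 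Consequently, the map $\bF[k] \mapsto (\bF_\mathrm{RF}^*\bF_\mathrm{RF})^{1/2}\bF[k]$ is a linear bijection from $\mathbb{C}^{N_\mathrm{RF}\times N_\mathrm{S}}$ onto itself, so maximizing over $\{\bF[k]\}$ is equivalent to maximizing over $\{\bG[k]\}$.

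Next, plugging $\bF[k] = (\bF_\mathrm{RF}^*\bF_\mathrm{RF})^{-1/2}\bG[k]$ directly into the product $\bF_\mathrm{RF}\bF[k]\bF^*[k]\bF_\mathrm{RF}^*$ appearing inside the mutual-information expression \eqref{eq:MI} reproduces the exact matrix displayed in the new objective; this is a purely mechanical substitution, and no identity needs to be invoked. One may optionally apply Sylvester's determinant identity $|\bI_{N_\mathrm{MS}}+\bA\bB|=|\bI_{N_\mathrm{S}}+\bB\bA|$ to rewrite the log-determinant on the smaller $N_\mathrm{S}\times N_\mathrm{S}$ side (as the lemma's $\bI_{N_\mathrm{S}}$ suggests), but the numerical value is unchanged.

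The only calculation of substance is showing the power constraint transforms correctly. I would write
\begin{align*}
\|\bF_\mathrm{RF}\bF[k]\|_F^2
&= \mathrm{tr}\bigl(\bF^*[k]\,\bF_\mathrm{RF}^*\bF_\mathrm{RF}\,\bF[k]\bigr) \\
&= \mathrm{tr}\bigl(\bG^*[k]\,(\bF_\mathrm{RF}^*\bF_\mathrm{RF})^{-1/2}\,(\bF_\mathrm{RF}^*\bF_\mathrm{RF})\,(\bF_\mathrm{RF}^*\bF_\mathrm{RF})^{-1/2}\,\bG[k]\bigr) \\
&= \mathrm{tr}\bigl(\bG^*[k]\bG[k]\bigr) = \|\bG[k]\|_F^2,
\end{align*}
where the penultimate equality uses $(\bF_\mathrm{RF}^*\bF_\mathrm{RF})^{-1/2}(\bF_\mathrm{RF}^*\bF_\mathrm{RF})(\bF_\mathrm{RF}^*\bF_\mathrm{RF})^{-1/2}=\bI_{N_\mathrm{RF}}$, which is precisely the defining property of the Hermitian square root. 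Summing over $k$ then converts $\sum_{k=1}^K \|\bF_\mathrm{RF}\bF[k]\|_F^2=K N_\mathrm{S}$ into $\sum_{k=1}^K \|\bG[k]\|_F^2=K N_\mathrm{S}$, completing the equivalence.

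No real obstacle arises: the lemma is essentially a whitening change of variables that absorbs $\bF_\mathrm{RF}$ out of the power constraint, and the trace computation above is the whole heart of it. The only point demanding care is the invertibility of $\bF_\mathrm{RF}^*\bF_\mathrm{RF}$, which is why I would state the full-column-rank assumption on $\bF_\mathrm{RF}$ explicitly at the outset (any rank-deficient codeword can be replaced by an equivalent full-rank sub-precoder without increasing the feasible set of hybrid precoders).
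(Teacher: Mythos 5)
Your proposal is correct and follows essentially the same route as the paper: both arguments rest on the fact that $\bF_\mathrm{RF}$ has linearly independent columns, so $\left(\bF_\mathrm{RF}^*\bF_\mathrm{RF}\right)^{\pm\frac{1}{2}}$ is well defined and the map $\bF[k]\mapsto\left(\bF_\mathrm{RF}^*\bF_\mathrm{RF}\right)^{\frac{1}{2}}\bF[k]$ is a linear bijection between the two feasible sets. If anything, your version is slightly more complete, since you write out the trace identity $\left\|\bF_\mathrm{RF}\bF[k]\right\|_F^2=\left\|\bG[k]\right\|_F^2$ that makes the constraint correspondence explicit, whereas the paper leaves that step implicit in its domain-covering argument.
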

\begin{proof}
Consider the mapping $\psi: \mathbb{C}^{N_\mathrm{RF} \times N_\mathrm{S}} \rightarrow \mathbb{C}^{N_\mathrm{RF} \times N_\mathrm{S}}$, $\psi\left(\bG[k]\right)=\left(\bF_\mathrm{RF}^* \bF_\mathrm{RF}\right)^{-\frac{1}{2}} \bG[k]=\bF[k]$, and let $\mathcal{S}_{\bF[k]}$ be the set of matrices $\bF[k]$'s defining the domain of the problem \eqref{eq:Opt_CSI_IO}. To prove the equivalence between the problem in Lemma \ref{lemma1}, and the inner problem of \eqref{eq:Opt_CSI_IO}, it is sufficient to prove that $\psi(.)$ is a one-to-one mapping with $\psi\left(\textbf{dom}  \left(\psi\right)\right)$ covering the domain of the problem \eqref{eq:Opt_CSI_IO}, i.e., $\psi\left(\textbf{dom} \left(\psi\right)\right) \supset \mathcal{S}_{\bF[k]}$ \cite{Boyd2004}.

As the matrices $\bF_\mathrm{RF}$ in $\cF_\mathrm{RF}$ are assumed to have linearly independent columns, then the matrix $\bA=\bF_\mathrm{RF}^* \bF_\mathrm{RF}$ is non-singular, and we have $\psi^{-1}\left(\psi\left( \bG[k] \right)\right)= \bA^{\frac{1}{2}}\bA^{-\frac{1}{2}}\bG[k]=\bG[k]$, which proves that $\psi$ is a linear one-to-one mapping. Finally, as the matrix $\bA$ is non-singular, then for any matrix $\bF[k] \in \mathcal{S}_{\bF[k]}$, we have a corresponding $\bG[k] = \bA^{\frac{1}{2}} \bF[k] \in \textbf{dom} \left(\psi\right)$, with $\bA^{- \frac{1}{2}} \bG[k] \in \psi\left(\textbf{dom} \left(\psi\right) \right)$, which implies that $\psi\left(\textbf{dom}\left(\psi\right) \right)\supset \mathcal{S}_{\bF[k]}$, and this completes the proof.
\end{proof}

Now, considering the equivalent optimization problem in Lemma \ref{lemma1}, this problem is a standard MIMO precoder design problem with a mutual information maximization objective, in which the optimal equivalent baseband precoder $\bG^{\star}[k]$ is given by
\begin{equation}
\bG^\star[k]=\left[\overline{\bV}[k]\right]_{:,1:N_\mathrm{S}} \boldsymbol{\Lambda}[k],\ \ k=1, 2, ..., K,
\end{equation}
with $\left[\overline{\bV}[k]\right]_{:,1:N_\mathrm{S}}$ and  $\boldsymbol{\Lambda}[k]$ as defined in Proposition \ref{prop:Opt_TP}. Finally, the one-to-one mapping $\psi$ results in the optimal baseband precoder $\bF^\star[k]$ as defined in equation \eqref{eq:Opt_BB_TP}. It is worth mentioning here that a similar trick to that in Lemma \ref{lemma1} has been concurrently and independently used in \cite{Sohrabi2015} for the heuristic hybrid precoding algorithm proposed in that work, but with no proof on the problems equivalence.
\end{proof}

\section{}\label{app:GS}
\begin{proof}[Proof of Proposition \ref{prop:GS}]
To prove that $\cI_\mathrm{HP}^\mathrm{GS-HP}=\cI_\mathrm{HP}^\mathrm{DG-HP}$, it is sufficient to prove that $\bF_\mathrm{RF}^{(N_\mathrm{RF})}$ of the GS-HP and DG-HP algorithms are equal. To do that, we will show that both the algorithms choose the same RF beamforming vector in each iteration, i.e., $\bF_\mathrm{RF}^{(i)}$ is equal for $i=1,...,N_\mathrm{RF}$. This can be proved using  mathematical induction as follows. At the first iteration, the two algorithms do the exhaustive search over the same codebook $\cF_\mathrm{RF}$, and consequently select the same beamforming vectors. Now, suppose that the two algorithms reach the same RF precoding matrix $\bF_\mathrm{RF}^{(i-1)}$ at iteration $i-1$, we need to prove that they both select the same RF beamforming vector at iteration $i$, i.e., we need to prove that both \eqref{eq:DG_HP} and \eqref{eq:GS_HP} choose beamforming vectors with the same index. To prove that, it is enough to show that the contributions of the $n$th beamforming vector $\bff_n^\mathrm{RF}$ from $\cF_\mathrm{RF}^{\mathrm{v}}$ in \eqref{eq:DG_HP} and \eqref{eq:GS_HP} are equal. i.e., we need to prove that
\begin{align}
\hspace{-10pt}  \log_2 & \left|\bI_{N_\mathrm{S}}+\frac{\rho}{N_\mathrm{S}} \bH[k] \dot{\bF}^{(i,n)}_\mathrm{RF} \left(\dot{\bF}^{(i,n)^*}_\mathrm{RF} \dot{\bF}^{(i,n)}_\mathrm{RF}\right)^{-\frac{1}{2}} \bG^\star[k]  \left(\bG^{\star}[k]\right)^* \left(\dot{\bF}^{(i,n)^*}_\mathrm{RF} \dot{\bF}^{(i,n)}_\mathrm{RF}\right)^{-\frac{1}{2}} \dot{\bF}_\mathrm{RF}^{(i,n)^*}  \bH^*[k] \right| = \nonumber\\
&\hspace{-20pt} \log_2 \left|\bI_{N_\mathrm{S}}+\frac{\rho}{N_\mathrm{S}} \bH[k] \overline{\bF}^{(i,n)}_\mathrm{RF} \left(\overline{\bF}_\mathrm{RF}^{(i,n)^*} \overline{\bF}^{(i,n)}_\mathrm{RF}\right)^{-\frac{1}{2}} \overline{\bG}^\star[k]  \left(\overline{\bG}^{\star}[k]\right)^* \left(\overline{\bF}_\mathrm{RF}^{(i,n)^*} \overline{\bF}^{(i,n)}_\mathrm{RF}\right)^{-\frac{1}{2}} \overline{\bF}_\mathrm{RF}^{(i,n)^*}  \bH^*[k] \right|
\label{eq:Add_MI}
\end{align}
Given the optimal baseband precoder in \eqref{eq:Opt_BB_UP}, and denoting the SVD of $\dot{\bF}_\mathrm{RF}^{(i,n)}$ as  $\dot{\bF}_\mathrm{RF}^{(i,n)}=\dot{\bU}_\mathrm{RF}^{(i,n)} \dot{\boldsymbol{\Sigma}}_\mathrm{RF}^{(i,n)}\dot{\bV}^{(i,n)^*}_\mathrm{RF}$, the LHS of \eqref{eq:Add_MI} can be written as
\begin{align}
\log_2 & \left|\bI_{N_\mathrm{S}}+\frac{\rho}{N_\mathrm{S}} \bH[k] \dot{\bF}^{(i,n)}_\mathrm{RF} \left(\dot{\bF}^{(i,n)^*}_\mathrm{RF} \dot{\bF}^{(i,n)}_\mathrm{RF}\right)^{-\frac{1}{2}} \bG^\star[k]  \left(\bG^{\star}[k]\right)^* \left(\dot{\bF}^{(i,n)^*}_\mathrm{RF} \dot{\bF}^{(i,n)}_\mathrm{RF}\right)^{-\frac{1}{2}} \dot{\bF}^{(i,n)^*}_\mathrm{RF}  \bH^*[k] \right|  \nonumber\\
& \hspace{60pt} = \sum_{\ell=1}^{i} \log_2  \left( 1+\frac{\rho}{N_\mathrm{S}}  \lambda_\ell\left(\bH[k] \dot{\bF}^{(i,n)}_\mathrm{RF} \left(\dot{\bF}^{(i,n)^*}_\mathrm{RF} \dot{\bF}^{(i,n)}_\mathrm{RF}\right)^{-1}  \dot{\bF}^{(i,n)^*}_\mathrm{RF}  \bH^*[k] \right)\right), \label{eq:GS_equal}\\
& \hspace{60pt} = \sum_{\ell=1}^{i} \log_2  \left( 1+\frac{\rho}{N_\mathrm{S}}  \lambda_\ell\left(\bH[k] \dot{\bU}_\mathrm{RF}^{(i,n)} \dot{\bU}_\mathrm{RF}^{(i,n)^*} \bH^*[k] \right)\right).
\end{align}

The RHS of \eqref{eq:Add_MI} can be similarly written, but with $\dot{\bU}_\mathrm{RF}^{(i,n)}$ replaced by $\overline{\bU}_\mathrm{RF}^{(i,n)}$ where $\overline{\bF}_\mathrm{RF}^{(i,n)}=\overline{\bU}_\mathrm{RF}^{(i,n)} \overline{\boldsymbol{\Sigma}}_\mathrm{RF}^{(i,n)} \overline{\bV}^{(i,n)^*}_\mathrm{RF}$. Hence, we  need to prove that $\dot{\bU}_\mathrm{RF}^{(i,n)} \dot{\bU}_\mathrm{RF}^{(i,n)^*}=\overline{\bU}_\mathrm{RF}^{(i,n)} \overline{\bU}_\mathrm{RF}^{(i,n)^*}$. Let $\overline{\bff}_n=\bP^{(i-1)^\perp} \bff_n$ denote the last column of $\overline{\bF}_\mathrm{RF}^{(i,n)}$. As $\overline{\bff}_n$ is a result of successive Gram-Schmidt operations, we can write $\bff_n=\overline{\bff}_n+\bF_\mathrm{RF}^{(i-1)} \boldsymbol{\alpha}_n$, where $\boldsymbol{\alpha}_n$ is a vector results from the Gram-Schmidt process. Consequently,  $\dot{\bF}_\mathrm{RF}^{(i,n)}$ can be written as $\dot{\bF}_\mathrm{RF}^{(i,n)}=\overline{\bF}_\mathrm{RF}^{(i,n)}\bE_C$, where $\bE_C=\left[\begin{array}{cc} \bI & \boldsymbol{\alpha} \\ \boldsymbol{0}^T & 1\end{array}\right]$ is an elementary column operation matrix. Now, we note that $\dot{\bU}_\mathrm{RF}^{(i,n)} \dot{\bU}_\mathrm{RF}^{(i,n)^*}=\dot{\bF}_\mathrm{RF}^{(i,n)} \dot{\bF}_\mathrm{RF}^{(i,n)^{\dagger}}=\overline{\bF}_\mathrm{RF}^{(i,n)} \bE_C \bE_C^{-1} \overline{\bF}_\mathrm{RF}^{(i,n)^{\dagger}}= \overline{\bU}_\mathrm{RF}^{(i,n)} \overline{\bU}_\mathrm{RF}^{(i,n)^*}$, as $\bE_C$ is an $i \times i$ full-rank matrix.
\end{proof}

\bibliographystyle{IEEEtran}

\begin{thebibliography}{10}
	\providecommand{\url}[1]{#1}
	\csname url@samestyle\endcsname
	\providecommand{\newblock}{\relax}
	\providecommand{\bibinfo}[2]{#2}
	\providecommand{\BIBentrySTDinterwordspacing}{\spaceskip=0pt\relax}
	\providecommand{\BIBentryALTinterwordstretchfactor}{4}
	\providecommand{\BIBentryALTinterwordspacing}{\spaceskip=\fontdimen2\font plus
		\BIBentryALTinterwordstretchfactor\fontdimen3\font minus
		\fontdimen4\font\relax}
	\providecommand{\BIBforeignlanguage}[2]{{%
			\expandafter\ifx\csname l@#1\endcsname\relax
			\typeout{** WARNING: IEEEtran.bst: No hyphenation pattern has been}%
			\typeout{** loaded for the language `#1'. Using the pattern for}%
			\typeout{** the default language instead.}%
			\else
			\language=\csname l@#1\endcsname
			\fi
			#2}}
	\providecommand{\BIBdecl}{\relax}
	\BIBdecl
	
	\bibitem{Rappaport2014}
	T.~S. Rappaport, R.~W. Heath~Jr, R.~C. Daniels, and J.~N. Murdock,
	\emph{Millimeter Wave Wireless Communications}.\hskip 1em plus 0.5em minus
	0.4em\relax Pearson Education, 2014.
	
	\bibitem{Pi2011}
	Z.~Pi and F.~Khan, ``An introduction to millimeter-wave mobile broadband
	systems,'' \emph{IEEE Communications Magazine}, vol.~49, no.~6, pp. 101--107,
	June 2011.
	
	\bibitem{Boccardi2014}
	F.~Boccardi, R.~Heath, A.~Lozano, T.~Marzetta, and P.~Popovski, ``Five
	disruptive technology directions for {5G},'' \emph{IEEE Communications
		Magazine}, vol.~52, no.~2, pp. 74--80, Feb. 2014.
	
	\bibitem{Bai2014}
	T.~Bai, A.~Alkhateeb, and R.~Heath, ``Coverage and capacity of millimeter-wave
	cellular networks,'' \emph{IEEE Communications Magazine}, vol.~52, no.~9, pp.
	70--77, Sept. 2014.
	
	\bibitem{Andrews2014}
	J.~Andrews, S.~Buzzi, W.~Choi, S.~Hanly, A.~Lozano, A.~Soong, and J.~Zhang,
	``What will {5G} be?'' \emph{IEEE Journal on Selected Areas in
		Communications}, vol.~32, no.~6, pp. 1065--1082, June 2014.
	
	\bibitem{Rangan2014}
	S.~Rangan, T.~Rappaport, and E.~Erkip, ``Millimeter-wave cellular wireless
	networks: Potentials and challenges,'' \emph{Proceedings of the IEEE}, vol.
	102, no.~3, pp. 366--385, Mar. 2014.
	
	\bibitem{Wang2015}
	P.~Wang, Y.~Li, L.~Song, and B.~Vucetic, ``Multi-gigabit millimeter wave
	wireless communications for {5G}: From fixed access to cellular networks,''
	\emph{IEEE Communications Magazine}, vol.~53, no.~1, pp. 168--178, Jan. 2015.
	
	\bibitem{MacCartney2014}
	G.~MacCartney and T.~Rappaport, ``73 {GHz} millimeter wave propagation
	measurements for outdoor urban mobile and backhaul communications in {New
		York City},'' in \emph{Proc. of IEEE International Conference on
		Communications (ICC)}, June 2014, pp. 4862--4867.
	
	\bibitem{Hur2013}
	S.~Hur, T.~Kim, D.~Love, J.~Krogmeier, T.~Thomas, and A.~Ghosh, ``Millimeter
	wave beamforming for wireless backhaul and access in small cell networks,''
	\emph{IEEE Transactions on Communications}, vol.~61, no.~10, pp. 4391--4403,
	Oct. 2013.
	
	\bibitem{Rappaport2013a}
	T.~Rappaport, S.~Sun, R.~Mayzus, H.~Zhao, Y.~Azar, K.~Wang, G.~Wong, J.~Schulz,
	M.~Samimi, and F.~Gutierrez, ``Millimeter wave mobile communications for {5G}
	cellular: It will work!'' \emph{IEEE Access}, vol.~1, pp. 335--349, May 2013.
	
	\bibitem{Bai2015}
	T.~Bai and R.~Heath, ``Coverage and rate analysis for millimeter-wave cellular
	networks,'' \emph{IEEE Transactions on Wireless Communications}, vol.~14,
	no.~2, pp. 1100--1114, Feb. 2015.
	
	\bibitem{Akdeniz2014}
	M.~Akdeniz, Y.~Liu, M.~Samimi, S.~Sun, S.~Rangan, T.~Rappaport, and E.~Erkip,
	``Millimeter wave channel modeling and cellular capacity evaluation,''
	\emph{IEEE Journal on Sel. Areas in Communications}, vol.~32, no.~6, pp.
	1164--1179, June 2014.
	
	\bibitem{Singh2015}
	S.~Singh, M.~Kulkarni, A.~Ghosh, and J.~Andrews, ``Tractable model for rate in
	self-backhauled millimeter wave cellular networks,'' \emph{IEEE Journal on
		Selected Areas in Communications}, vol.~33, no.~10, pp. 2196--2211, Oct.
	2015.
	
	\bibitem{Zhu2014}
	Y.~Zhu, Z.~Zhang, Z.~Marzi, C.~Nelson, U.~Madhow, B.~Y. Zhao, and H.~Zheng,
	``Demystifying 60 {GHz} outdoor picocells,'' in \emph{Proc. of the 20th
		annual international conference on Mobile computing and networking}.\hskip
	1em plus 0.5em minus 0.4em\relax ACM, 2014, pp. 5--16.
	
	\bibitem{Walke1985}
	B.~Walke and R.~Briechle, ``A local cellular radio network for digital voice
	and data transmission at {60} {GHz},'' in \emph{Proc. of the International
		Cellular \& Mobile Comm.}, vol.~0.\hskip 1em plus 0.5em minus 0.4em\relax
	London: Online, Nov 1985, pp. 215--225.
	
	\bibitem{11ad}
	\BIBentryALTinterwordspacing
	{IEEE 802.11ad}, ``{IEEE} 802.11ad standard draft {D0.1}.'' [Online].
	Available: \url{www.ieee802.org/11/Reports/tgad update.htm}
	\BIBentrySTDinterwordspacing
	
	\bibitem{WirelessHDSrandard2007}
	{WirelessHD Standard}, ``{WirelessHD} specification version 1.0 overview,''
	\emph{WirelessHD}, Oct. 2007.
	
	\bibitem{Pyattaev2015}
	A.~Pyattaev, K.~Johnsson, S.~Andreev, and Y.~Koucheryavy, ``Communication
	challenges in high-density deployments of wearable wireless devices,''
	\emph{IEEE Wireless Communications}, vol.~22, no.~1, pp. 12--18, Feb. 2015.
	
	\bibitem{Venugopal2015}
	K.~Venugopal, M.~C. Valenti, and R.~W. Heath~Jr, ``Device-to-device millimeter
	wave communications: Interference, coverage, rate, and finite topologies,''
	\emph{submitted to IEEE Transactions on Wireless Communications, arXiv
		preprint arXiv:1506.07158}, June 2015.
	
	\bibitem{Wehling2005}
	J.~Wehling, ``Multifunction millimeter-wave systems for armored vehicle
	application,'' \emph{IEEE Transactions on Microwave Theory and Techniques},
	vol.~53, no.~3, pp. 1021--1025, Mar. 2005.
	
	\bibitem{Heddebaut2010}
	M.~Heddebaut, F.~Elbahhar, C.~Loyez, N.~Obeid, N.~Rolland, A.~Rivenq, and J.-M.
	Rouvaen, ``Millimeter-wave communicating-radars for enhanced
	vehicle-to-vehicle communications,'' \emph{Transportation Research Part C:
		Emerging Technologies}, vol.~18, no.~3, pp. 440--456, 2010.
	
	\bibitem{Kumari2015}
	P.~Kumari, N.~González~Prelcic, and R.~W. Heath~Jr., ``Investigating the {IEEE}
	{802.11ad} standard for millimeter wave automotive radar,'' in \emph{Proc. of
		the IEEE Vehicular Technology Conference (VTC 2015-Fall)}, Sept. 2015.
	
	\bibitem{Khan2015}
	T.~A. Khan, A.~Alkhateeb, and R.~W. Heath~Jr, ``Millimeter wave energy
	harvesting,'' \emph{submitted to IEEE Transactions on Wireless
		Communications, arXiv preprint arXiv:1509.01653}, Sept. 2015.
	
	\bibitem{Krikidis2014}
	I.~Krikidis, S.~Timotheou, S.~Nikolaou, G.~Zheng, D.~Ng, and R.~Schober,
	``Simultaneous wireless information and power transfer in modern
	communication systems,'' \emph{IEEE Communications Magazine}, vol.~52,
	no.~11, pp. 104--110, Nov 2014.
	
	\bibitem{Bi2015}
	S.~Bi, C.~Ho, and R.~Zhang, ``Wireless powered communication: opportunities and
	challenges,'' \emph{IEEE Communications Magazine}, vol.~53, no.~4, pp.
	117--125, Apr. 2015.
	
	\bibitem{ElAyach2014}
	O.~El~Ayach, S.~Rajagopal, S.~Abu-Surra, Z.~Pi, and R.~Heath, ``Spatially
	sparse precoding in millimeter wave {MIMO} systems,'' \emph{IEEE Transactions
		on Wireless Communications}, vol.~13, no.~3, pp. 1499--1513, Mar. 2014.
	
	\bibitem{Han2015}
	S.~Han, C.-L. I, Z.~Xu, and C.~Rowell, ``Large-scale antenna systems with
	hybrid analog and digital beamforming for millimeter wave 5g,'' \emph{IEEE
		Communications Magazine}, vol.~53, no.~1, pp. 186--194, Jan. 2015.
	
	\bibitem{Roh2014}
	W.~Roh, J.-Y. Seol, J.~Park, B.~Lee, J.~Lee, Y.~Kim, J.~Cho, K.~Cheun, and
	F.~Aryanfar, ``Millimeter-wave beamforming as an enabling technology for {5G}
	cellular communications: theoretical feasibility and prototype results,''
	\emph{Communications Magazine, IEEE}, vol.~52, no.~2, pp. 106--113, February
	2014.
	
	\bibitem{Biglarbegian2011}
	B.~Biglarbegian, M.~Fakharzadeh, D.~Busuioc, M.~Nezhad-Ahmadi, and
	S.~Safavi-Naeini, ``Optimized microstrip antenna arrays for emerging
	millimeter-wave wireless applications,'' \emph{IEEE Transactions on Antennas
		and Propagation}, vol.~59, no.~5, pp. 1742--1747, May 2011.
	
	\bibitem{Sun2013}
	H.~Sun, Y.-X. Guo, and Z.~Wang, ``60-{GHz} circularly polarized u-slot patch
	antenna array on {LTCC},'' \emph{IEEE Transactions on Antennas and
		Propagation}, vol.~61, no.~1, pp. 430--435, Jan. 2013.
	
	\bibitem{Alkhateeb2014d}
	A.~Alkhateeb, J.~Mo, N.~Gonzalez-Prelcic, and R.~Heath, ``{MIMO} precoding and
	combining solutions for millimeter-wave systems,'' \emph{IEEE Communications
		Magazine,}, vol.~52, no.~12, pp. 122--131, Dec. 2014.
	
	\bibitem{Teletar1999}
	I.~E. Teletar, ``Capacity of multi-antenna {Gaussian} channels,''
	\emph{European Transactions on Communications}, vol.~10, pp. 585--596, Dec.
	1999.
	
	\bibitem{Yang1994}
	J.~Yang and S.~Roy, ``On joint transmitter and receiver optimization for
	multiple-input-multiple-output ({MIMO}) transmission systems,'' \emph{IEEE
		Transactions on Communications}, vol.~42, no.~12, pp. 3221--3231, Dec. 1994.
	
	\bibitem{Scaglione2002}
	A.~Scaglione, P.~Stoica, S.~Barbarossa, G.~Giannakis, and H.~Sampath, ``Optimal
	designs for space-time linear precoders and decoders,'' \emph{IEEE
		Transactions on Signal Processing}, vol.~50, no.~5, pp. 1051--1064, May 2002.
	
	\bibitem{Palomar2003}
	D.~Palomar, J.~Cioffi, and M.~A. Lagunas, ``Joint {Tx-Rx} beamforming design
	for multicarrier {MIMO} channels: A unified framework for convex
	optimization,'' \emph{IEEE Transactions on Signal Processing}, vol.~51,
	no.~9, pp. 2381--2401, Sept. 2003.
	
	\bibitem{Love2008}
	D.~Love, R.~Heath, V.~Lau, D.~Gesbert, B.~Rao, and M.~Andrews, ``An overview of
	limited feedback in wireless communication systems,'' \emph{IEEE Journal on
		Selected Areas in Commun.}, vol.~26, no.~8, pp. 1341--1365, Oct. 2008.
	
	\bibitem{Love2005}
	D.~Love and R.~Heath, ``Limited feedback unitary precoding for spatial
	multiplexing systems,'' \emph{IEEE Transactions on Information Theory},
	vol.~51, no.~8, pp. 2967--2976, Aug. 2005.
	
	\bibitem{Lee2009}
	J.~Lee, J.-K. Han, and J.~Zhang, ``{MIMO} technologies in {3GPP} {LTE} and
	{LTE}-advanced,'' \emph{EURASIP Journal on Wireless Communications and
		Networking}, vol. 2009, pp. 3:1--3:10, Mar. 2009.
	
	\bibitem{IEEE11n2012}
	{IEEE 802.11n}, ``{IEEE} standard for information technologytelecommunications
	and information exchange between systems local and metropolitan area
	networks specific requirements part 11: Wireless lan medium access control
	{(MAC)} and physical layer {(PHY)} specifications,'' \emph{IEEE Std
		802.11-2012 (Revision of IEEE Std 802.11-2007)}, p. 12793, 2012.
	
	\bibitem{Bejarano2013}
	O.~Bejarano, E.~Knightly, and M.~Park, ``{IEEE} {802.11ac}: from channelization
	to multi-user {MIMO},'' \emph{Communications Magazine, IEEE}, vol.~51,
	no.~10, pp. 84--90, Oct. 2013.
	
	\bibitem{Wang2009}
	J.~Wang, Z.~Lan, C.~Pyo, T.~Baykas, C.~Sum, M.~Rahman, J.~Gao, R.~Funada,
	F.~Kojima, H.~Harada \emph{et~al.}, ``Beam codebook based beamforming
	protocol for multi-{Gbps} millimeter-wave {WPAN} systems,'' \emph{IEEE
		Journal on Selected Areas in Communications}, vol.~27, no.~8, pp. 1390--1399,
	Nov. 2009.
	
	\bibitem{Xia2008b}
	P.~Xia, S.-K. Yong, J.~Oh, and C.~Ngo, ``Multi-stage iterative antenna training
	for millimeter wave communications,'' in \emph{Proc. of the IEEE Global
		Telecommunications Conference (GLOBECOM)}, Nov. 2008, pp. 1--6.
	
	\bibitem{Via2010}
	J.~Via, I.~Santamaria, V.~Elvira, and R.~Eickhoff, ``A general criterion for
	analog {Tx-Rx} beamforming under {OFDM} transmissions,'' \emph{IEEE
		Transactions on Signal Processing}, vol.~58, no.~4, pp. 2155--2167, Apr.
	2010.
	
	\bibitem{Zhang2005a}
	X.~Zhang, A.~Molisch, and S.~Kung, ``Variable-phase-shift-based {RF}-baseband
	codesign for {MIMO} antenna selection,'' \emph{IEEE Transactions on Signal
		Processing}, vol.~53, no.~11, pp. 4091--4103, Nov. 2005.
	
	\bibitem{Venkateswaran2010}
	V.~Venkateswaran and A.~van~der Veen, ``Analog beamforming in {MIMO}
	communications with phase shift networks and online channel estimation,''
	\emph{IEEE Transactions on Signal Processing}, vol.~58, no.~8, pp.
	4131--4143, Aug. 2010.
	
	\bibitem{Alkhateeb2013}
	A.~Alkhateeb, O.~El~Ayach, G.~Leus, and R.~Heath, ``Hybrid precoding for
	millimeter wave cellular systems with partial channel knowledge,'' in
	\emph{Proc. of Information Theory and Applications Workshop (ITA)}, Feb 2013,
	pp. 1--5.
	
	\bibitem{Alkhateeb2014}
	------, ``Channel estimation and hybrid precoding for millimeter wave cellular
	systems,'' \emph{IEEE Journal of Selected Topics in Signal Processing},
	vol.~8, no.~5, pp. 831--846, Oct. 2014.
	
	\bibitem{Sohrabi2015}
	F.~Sohrabi and W.~Yu, ``Hybrid digital and analog beamforming design for
	large-scale {MIMO} systems,'' in \emph{Proc. of the IEEE International Conf.
		on Acoustics, Speech and Signal Processing (ICASSP), Brisbane, Australia},
	Apr. 2015.
	
	\bibitem{Mendez-Rial2015a}
	C.~Mendez-Rial, R.~Rusu, N.~Gonz´alez-Prelcic, and R.~W. Heath,
	``Dictionary-free hybrid precoders and combiners for mmwave {MIMO} systems,''
	in \emph{Proc. of the IEEE International Workshop on Signal Processing
		Advances in Wireless Communications (SPAWC)}, June 2015.
	
	\bibitem{Chen2015}
	C.-E. Chen, ``An iterative hybrid transceiver design algorithm for millimeter
	wave {MIMO} systems,'' \emph{IEEE Wireless Communications Letters}, vol.~4,
	no.~3, pp. 285--288, June 2015.
	
	\bibitem{Kim2013}
	C.~Kim, T.~Kim, and J.-Y. Seol, ``Multi-beam transmission diversity with hybrid
	beamforming for {MIMO}-{OFDM} systems,'' in \emph{Proc. of IEEE Globecom
		Workshops (GC Wkshps)}, Atlanta, GA, Dec. 2013, pp. 61--65.
	
	\bibitem{Liang2007}
	Y.~w.~Liang, R.~Schober, and W.~Gerstacker, ``Transmit beamforming for
	frequency-selective channels with decision-feedback equalization,''
	\emph{IEEE Transactions on Wireless Communications}, vol.~6, no.~12, pp.
	4401--4411, December 2007.
	
	\bibitem{Samimi2014}
	M.~Samimi and T.~Rappaport, ``Ultra-wideband statistical channel model for non
	line of sight millimeter-wave urban channels,'' in \emph{Proc. of the IEEE
		Global Communications Conference (GLOBECOM)}, Dec 2014, pp. 3483--3489.
	
	\bibitem{Rappaport2012}
	T.~Rappaport, Y.~Qiao, J.~Tamir, J.~Murdock, and E.~Ben-Dor, ``Cellular
	broadband millimeter wave propagation and angle of arrival for adaptive beam
	steering systems,'' in \emph{Radio and Wireless Symposium}, Santa Clara, CA,
	Jan. 2012, pp. 151--154.
	
	\bibitem{Forenza2007}
	A.~Forenza, D.~Love, and R.~Heath, ``Simplified spatial correlation models for
	clustered {MIMO} channels with different array configurations,'' \emph{IEEE
		Transactions on Vehicular Technology}, vol.~56, no.~4, pp. 1924--1934, July
	2007.
	
	\bibitem{Schniter2014}
	P.~Schniter and A.~Sayeed, ``Channel estimation and precoder design for
	millimeter wave communications: The sparse way,'' in \emph{the Asilomar
		Conference on Signals, Systems and Computers (ASILOMAR)}, Nov. 2014.
	
	\bibitem{Bajwa2010}
	W.~Bajwa, J.~Haupt, A.~Sayeed, and R.~Nowak, ``Compressed channel sensing: A
	new approach to estimating sparse multipath channels,'' \emph{Proceedings of
		the IEEE}, vol.~98, no.~6, pp. 1058--1076, June 2010.
	
	\bibitem{Ghosh2014}
	A.~Ghosh, T.~Thomas, M.~Cudak, R.~Ratasuk, P.~Moorut, F.~Vook, T.~Rappaport,
	G.~MacCartney, S.~Sun, and S.~Nie, ``Millimeter-wave enhanced local area
	systems: A high-data-rate approach for future wireless networks,'' \emph{IEEE
		Journal on Selected Areas in Communications}, vol.~32, no.~6, pp. 1152--1163,
	June 2014.
	
	\bibitem{Absil2004}
	P.-A. Absil, R.~Mahony, and R.~Sepulchre,
	``\BIBforeignlanguage{English}{Riemannian geometry of grassmann manifolds
		with a view on algorithmic computation},''
	\emph{\BIBforeignlanguage{English}{Acta Applicandae Mathematica}}, vol.~80,
	no.~2, pp. 199--220, 2004.
	
	\bibitem{Lloyd1982}
	S.~Lloyd, ``Least squares quantization in {PCM},'' \emph{IEEE Transactions on
		Information Theory}, vol.~28, no.~2, pp. 129--137, Mar 1982.
	
	\bibitem{Pitaval2014}
	R.-A. Pitaval and O.~Tirkkonen, ``Joint {Grassmann-Stiefel} quantization for
	{MIMO} product codebooks,'' \emph{IEEE Transactions on Wireless
		Communications}, vol.~13, no.~1, pp. 210--222, January 2014.
	
	\bibitem{Mondal2007}
	B.~Mondal, S.~Dutta, and R.~Heath, ``Quantization on the grassmann manifold,''
	\emph{IEEE Transactions on Signal Processing}, vol.~55, no.~8, pp.
	4208--4216, Aug 2007.
	
	\bibitem{Pitaval2011}
	R.-A. Pitaval, O.~Tirkkonen, and S.~Blostein, ``Density and bounds for
	{Grassmannian} codes with chordal distance,'' in \emph{Proc. of the IEEE
		International Symposium on Information Theory Proceedings (ISIT)}, July 2011,
	pp. 2298--2302.
	
	\bibitem{Lee2012}
	J.~Lee, \emph{Introduction to smooth manifolds}.\hskip 1em plus 0.5em minus
	0.4em\relax Springer Science \& Business Media, 2012, vol. 218.
	
	\bibitem{Ye2014}
	K.~Ye and L.-H. Lim, ``Distance between subspaces of different dimensions,''
	\emph{arXiv preprint arXiv:1407.0900}, 2014.
	
	\bibitem{Choi2006}
	J.~Choi, B.~Mondal, and R.~Heath, ``Interpolation based unitary precoding for
	spatial multiplexing {MIMO-OFDM} with limited feedback,'' \emph{IEEE
		Transactions on Signal Processing}, vol.~54, no.~12, pp. 4730--4740, Dec
	2006.
	
	\bibitem{Tropp2005}
	J.~Tropp, I.~Dhillon, R.~Heath, and T.~Strohmer, ``Designing structured tight
	frames via an alternating projection method,'' \emph{IEEE Transactions on
		Information Theory}, vol.~51, no.~1, pp. 188--209, Jan 2005.
	
	\bibitem{Br2006}
	M.~Brand, ``Fast low-rank modifications of the thin singular value
	decomposition,'' \emph{Linear Algebra and its Applications}, vol. 415, no.~1,
	pp. 20 -- 30, 2006.
	
	\bibitem{Love2005a}
	D.~Love and R.~Heath, ``Multimode precoding for {MIMO} wireless systems,''
	\emph{IEEE Transactions on Signal Processing}, vol.~53, no.~10, pp.
	3674--3687, Oct 2005.
	
	\bibitem{Boyd2004}
	S.~Boyd and L.~Vandenberghe, \emph{Convex optimization}.\hskip 1em plus 0.5em
	minus 0.4em\relax Cambridge university press, 2004.
	
\end{thebibliography}

\end{document}